\documentclass[11pt]{article}

\usepackage[letterpaper, margin=1in]{geometry}

\usepackage{setspace}
\usepackage{lmodern}
\usepackage[T1]{fontenc}
\usepackage[utf8]{inputenc}
\usepackage{amsmath, amsfonts, amssymb, amsthm}
\usepackage{algorithm}
\usepackage{algpseudocode}

\usepackage{graphicx}
\usepackage{tikz}
\usepackage{tikz-cd}
\usepackage{multirow}
\usepackage{amscd}
\usepackage{enumerate}
\usepackage{xcolor}
\usepackage{subfig}

\usepackage[colorlinks=true, linkcolor = black, citecolor= black]{hyperref}
\usepackage{pdfcomment}

\theoremstyle{plain}
\newtheorem{theorem}{Theorem}[section]
\newtheorem{nono-theorem}{Theorem}

\newtheorem{lemma}[theorem]{Lemma}
\newtheorem{assertion}{Assertion}

\theoremstyle{definition}
\newtheorem{definition}{Definition}[section]

\newcommand{\X}{\mathbb{X}}
\newcommand{\Y}{\mathbb{Y}}
\newcommand{\R}{\mathbb{R}}

\newcommand{\x}{\mathbf{x}}
\newcommand{\y}{\mathbf{y}}
\newcommand{\Img}{\mathrm{Im}}
\newcommand{\LCA}{\mathrm{LCA}}
\newcommand{\Mf}{\widetilde{M}_f}
\newcommand{\Mg}{\widetilde{M}_g}

\newcommand{\f}{\tilde{f}}
\newcommand{\g}{\tilde{g}}

\title{An Improved FPT Algorithm for Computing the Interleaving Distance between Merge Trees via Path-Preserving Maps}

\author{Althaf P V, Amit Chattopadhyay and  Osamu Saeki}
\date{}

\begin{document}

\maketitle
\tableofcontents
\begin{abstract}
A merge tree is a fundamental topological structure used to capture the sub-level set (and similarly, super-level set) topology in scalar data analysis. The \textbf{interleaving distance}, introduced by \cite{Morozov2013}, is a theoretically sound, stable metric for comparing merge trees.  However, computing this distance exactly is \textbf{NP-hard}. \cite{touli2022fpt} proposed the first \textbf{fixed-parameter tractable (FPT)} algorithm for its exact computation by introducing the concept of an \textbf{$\varepsilon$-good map} between two merge trees, where $\varepsilon$ is a candidate value for the interleaving distance. 
The complexity of their algorithm is $O(2^{2\tau} (2\tau)^{2\tau+2} \cdot  n^2 \log^3 n)$ where \( \tau \) is the degree-bound parameter and $n$ is the total number of nodes in the merge trees.
The complexity is quadratic, with a cubic logarithmic factor of $n$.
Moreover, their algorithm exhibits exponential complexity in  \( \tau \), which increases with the increasing value of $\varepsilon$.  In the current paper, we propose an improved \textbf{FPT algorithm} for computing the $\varepsilon$-good map between two merge trees. Our algorithm introduces two new parameters, $\eta_f$ and $\eta_g$, corresponding to the numbers of leaf nodes in the merge trees $M_f$ and $M_g$, respectively. This parametrization is motivated by the observation that a merge tree can be decomposed into a collection of unique leaf-to-root paths. The proposed algorithm achieves a complexity of
$O\!\left(n^2 \log n + \eta_g^{\eta_f} (\eta_f+\eta_g)\, n \log n \right)$,
which is quadratic with a logarithmic factor in $n$. To obtain this reduced complexity, we assumed $\eta_g^{\eta_f} \leq \eta_f^{\eta_g}$, i.e., the number of possible $\varepsilon$-good maps from $M_f$ to $M_g$ does not exceed that from $M_g$ to $M_f$. Notably, the parameters $\eta_f$ and $\eta_g$ are independent of the choice of $\varepsilon$. Compared to the algorithm proposed in~\cite{touli2022fpt}, our approach substantially reduces the search space for computing an optimal $\varepsilon$-good map and yields a tighter overall complexity bound. We also provide a formal proof of correctness for the proposed algorithm.

\end{abstract}

\section{Introduction}
\label{section:intro}
Merge trees capture the topology of the connected components of the sub-level sets of a scalar field, providing a compact and interpretable topological summary of the scalar function. They are widely used in applications such as symmetry detection~\cite{saikia2014extended,saikia2017fast}, shape retrieval~\cite{sridharamurthy_edit_2023}, feature detection~\cite{saikia_global_2017} and scientific visualization~\cite{yan_structural_2020}. A key challenge in utilizing merge trees for comparative analysis lies in defining and computing meaningful distance measures between them. Several computationally efficient distances have been proposed, including a variant of tree edit distance~\cite{sridharamurthy_edit_2023} and distance based on subtrees of contour trees~\cite{thomas_symmetry_2011}. Merge trees are said to be stable with respect to a distance measure if small perturbations in the scalar function result in small changes in the distance between the Merge trees. While the above measures are attractive for their efficiency, they often lack stability or fail to satisfy the formal properties of a metric.
On the other hand, several theoretically sound stable distance measures have been introduced in the literature, including edit distance~\cite{bauer_edit_2016,bauer_reeb_2021}, interleaving distance~\cite{Morozov2013}, and functional distortion distance~\cite{bauer_measuring_2014}. These metrics are designed to compare both the structural and topological differences between merge trees, and hold considerable promise for applications requiring accurate topological comparison. Even though these metrics are notable for their theoretical soundness, exact computation of these distances is known to be NP-hard~\cite{bollen_reeb_2022}. It has been shown that the interleaving distance is equivalent to the functional distortion distance in the case of merge trees~\cite{bauer_measuring_2014}. The interleaving distance provides a rigorous and stable metric for quantifying structural dissimilarity between merge trees. However, despite its strong theoretical foundation, \cite{agarwal_computing_2018} proved that the interleaving distance is not only hard to compute exactly but also NP-hard to approximate within a factor of 3. In other words, even obtaining a distance estimate that is guaranteed to be at most three times worse than the optimal solution is computationally intractable.  An algorithm is said to be fixed-parameter tractable (FPT) if its runtime becomes efficient when a specific parameter is fixed, even though the problem remains hard in general. In this paper, we address the exact computation of the interleaving distance between two merge trees by fixing appropriate parameters, and we propose a fixed-parameter tractable (FPT) algorithm that improves upon an existing approach.

\textcolor{black}{The definition of the interleaving distance introduced in \cite{Morozov2013} relies on a pair of continuous maps between merge trees, which makes designing an algorithm to compute the interleaving distance between two merge trees challenging.} To address this, \cite{touli2022fpt} proposed a new definition for the interleaving distance using a \emph{single} continuous map, called an \emph{\( \varepsilon \)-good map}, leading to a fixed-parameter tractable (FPT) algorithm for the exact computation of the interleaving distance between two merge trees. The algorithm in \cite{touli2022fpt} chooses the degree-bound (or $\varepsilon$-degree-bound) parameter $\tau$, defined as the maximum total degree of the nodes in any connected portion of the tree lying within a horizontal slab of width $2\varepsilon$, considered over all such slabs from both trees. This parameter $\tau$ is dependent on the candidate value \( \varepsilon \): as \( \varepsilon \) increases, the width of each slab expands, encompassing more of the tree nodes, and consequently increasing the total degree \( \tau \). There exists a set of \( O(n^2) \) number of candidate values for the interleaving distance between two merge trees where $n$ denotes the total number of nodes across both trees~\cite{agarwal_computing_2018}. For any $\varepsilon$ chosen from this set, if there exists an $\varepsilon$-good map, then the interleaving distance is at most $\varepsilon$. While the FPT algorithm in \cite{touli2022fpt} represents a significant theoretical advancement towards computing the exact interleaving distance between two merge trees, its runtime depends exponentially on the degree-bound parameter \( \tau \), which increases with the chosen candidate value \( \varepsilon \). This dependence leads to an unstable complexity bound. In the worst case, when $\varepsilon$ equals the height of the tree, $\tau$ becomes the total degree of all nodes, making the algorithm prohibitively expensive.  
To overcome this limitation, we propose a novel fixed-parameter tractable (FPT) algorithm 
for computing the exact interleaving distance between two merge trees. 
Our approach reparameterizes the problem in terms of the numbers of leaf nodes, 
$\eta_f$ and $\eta_g$, of the merge trees $M_f$ and $M_g$, respectively—parameters that are independent of the choice of $\varepsilon$.
This reparameterization is motivated by a key structural observation: for a continuous and monotone map between two merge trees, the ancestor of a node must be mapped to the ancestor of its image. Since a merge tree admits a unique decomposition into monotone leaf-to-root paths, and the number of such paths equals the number of leaf nodes, this insight naturally suggests a parametrization based on the number of leaf nodes. By enforcing a path-consistency condition, we can then systematically construct an $\varepsilon$-good map between two merge trees.
Reparameterizing the algorithm in terms of   \( \eta_f \text{ and } \eta_g \) which are independent of the choice of $\varepsilon$, yields a stable complexity bound while preserving the correctness and theoretical soundness of the interleaving distance.

 \subsection*{Our Contributions}
We propose a new fixed-parameter tractable (FPT) algorithm for computing the exact interleaving distance between two merge trees. Building on the $\varepsilon$-good map framework introduced in \cite{touli2022fpt}, our algorithm exploits the structural simplicity of merge trees to significantly reduce computational complexity. The core insight driving our improvement is that merge trees admit a unique decomposition into monotone leaf-to-root paths.  This structural property enables us to streamline both the construction of  maps and the verification of whether a given map qualifies as an $\varepsilon$-good map. Our main contributions are summarized as follows.

\begin{itemize}
    \item Theoretically, we establish a monotonicity property of the $\varepsilon$-good map (Lemma~\ref{lemma: monotonicity}), 
    which enables the use of a binary search strategy in our algorithm. 
    We further present a technique for constructing continuous maps between two merge trees based on the gluing lemma~\cite{armstrong1983basic} (Theorem~\ref{theorem:continuity}). 
    Finally, we derive computable criteria for verifying the Ancestor-Shift (Theorem~\ref{thm:ancestor-shift-condition}) and Ancestor-Closeness (Lemma~\ref{lemma:ancestor-closeness}) properties of the constructed maps, 
    which determine whether a given map qualifies as an $\varepsilon$-good map.

    \item  Algorithmically, we propose a novel FPT algorithm for computing the exact interleaving distance between two merge trees by introducing a new parameterization based on the numbers of leaf nodes \( \eta_f \text{ and } \eta_g\) of the merge trees $M_f$ and $M_g$, respectively, which are independent of the choice of \( \varepsilon \), leading to a stable complexity bound. Our algorithm employs a path-based construction: it begins with mappings which are initially defined on the set of leaf nodes of the first merge tree and extends them along the unique leaf-to-root paths to build a map on the entire tree, then it verifies the $\varepsilon$-goodness of the map.


    \item  Our algorithm achieves a total runtime of 
$O(n^2 \log n + \eta_g^{\eta_f} (\eta_f + \eta_g)\, n \log n)$, 
improving upon the previous best bound of 
$O(2^{2\tau} (2\tau)^{2\tau+2} n^2 \log^3 n)$ 
reported in ~\cite{touli2022fpt}. \textcolor{black}{Here, $n$ denotes the total number of nodes across the merge trees $M_f$ and $M_g$, 
which have $\eta_f$ and $\eta_g$ numbers of leaf nodes, respectively, 
and $\tau$ represents the degree-bound parameter used in the previous algorithm. 
The algorithm in ~\cite{touli2022fpt} has quadratic complexity with a cubic logarithmic factor in $n$, 
whereas our proposed algorithm remains quadratic with only a single logarithmic factor. 
Furthermore, the algorithm in~\cite{touli2022fpt} exhibits exponential dependence on the parameter $\tau$, 
which increases with larger values of $\varepsilon$. 
In contrast, the parameters $\eta_f$ and $\eta_g$ used in our algorithm are independent of the choice of $\varepsilon$, 
where $\varepsilon$ denotes a candidate value for the interleaving distance.}


    \item  We provide a formal proof of correctness of the proposed algorithm, ensuring its theoretical soundness. \textcolor{black}{The proof rigorously demonstrates that the algorithm verifies whether two merge trees are $\varepsilon$-interleaved by systematically exploring all possible $\varepsilon$-good maps between them to identify the existence of at least one such mapping. Through a binary search over the finite set of candidate $\varepsilon$ values, the algorithm ultimately terminates with the exact interleaving distance between the merge trees.} 
\end{itemize}

\paragraph*{Overview.}  
Section~\ref{sec:related-work} reviews the relevant literature related to the problem under study. 
Section~\ref{sec:background} presents the foundational concepts necessary for understanding our approach, 
including definitions of merge trees, the interleaving distance, $\varepsilon$-compatible and $\varepsilon$-good maps, 
as well as an outline of the FPT algorithm in \cite{touli2022fpt}. 
Section~\ref{sec:theory} introduces the main theoretical contributions of this work, 
including the monotonicity property of the interleaving distance, the construction of continuous maps between merge trees, 
and computable criteria for the Ancestor-Shift and Ancestor-Closeness properties. 
Section~\ref{Sec:algorithm} details our proposed algorithm for computing the exact interleaving distance between two merge trees. 
Section~\ref{sec:complexity} analyzes the computational complexity of the algorithm by examining each of its components. 
Section~\ref{sec:discussion} compares the performance of our algorithm with existing methods in the literature based on our complexity result. 
Finally, Section~\ref{sec:conclusion} summarizes the main contributions and outlines potential directions for future work.

\section{Related works}
\label{sec:related-work}
Several distance measures have been proposed for comparing merge trees, many of which are computationally efficient and practical for applications \cite{yan_scalar_2021}.  \cite{sridharamurthy_edit_2023} introduced a variant of tree edit distance between merge trees that defines a persistence-aware tree edit framework capturing both structural and scalar field variations.  Although this distance is a valid metric and supports applications such as periodicity detection and symmetry analysis, its stability properties remain underexplored. \cite{bremer_measuring_2014} introduced a branch decomposition distance that identifies the optimal pair of branch decompositions by minimizing the matching cost between them over all possible branch decompositions. \cite{thomas_symmetry_2011} proposed a similarity measure based on topological persistence and branch decomposition overlap that is used to group similar subtrees of a contour tree.  Other approaches include histogram-based method~\cite{saikia2017fast} and  tree alignment method~\cite{lohfink_fuzzy_2020}. While these measures offer computational advantages, they often lack formal guarantees regarding stability or metric validity. Bottleneck distance between the persistence diagrams~\cite{cohen2007stability} of the merge trees offers an efficient and stable alternative, however, the interleaving distance between merge trees offers a more discriminative measure than the bottleneck distance~\cite{bauer_measuring_2014}.

A merge tree with a set of labels on its nodes is termed a labeled merge tree. For labeled merge trees, \cite{gasparovic_intrinsic_2022} proposed an \( O(n^2) \) algorithm to compute the labeled interleaving distance under fixed label constraints, and showed that the (unlabeled) interleaving distance is the infimum over all such labelings. However, enumerating all possible label assignments remains computationally infeasible in general. Reeb graphs are the graph-based topological summaries that capture the evolutions of the level sets of a scalar field, and distances between Reeb graphs can naturally be restricted to merge trees. Edit distance~\cite{di_fabio_edit_2016,bauer_edit_2016,bauer_reeb_2021}, functional distortion distance~\cite{bauer_measuring_2014}, and interleaving distance~\cite{de_silva_categorified_2016} have been proposed between Reeb graphs with strong theoretical foundations.  
However, exact computation of these distances is known to be NP-hard~\cite{bollen_reeb_2022}. Table~\ref{tab:reeb_comparison} summarizes these distance measures and their complexities.

\begin{table}[ht]
\centering
\small 

\begin{tabular}{|l|c|c|c|}
\hline
\textbf{Distance measures} & \textbf{Type} & \textbf{Complexity} \\ \hline

Edit distance~\cite{di_fabio_edit_2016,bauer_reeb_2021} & Metric & NP-hard \\ \hline

Interleaving distance between merge trees\cite{Morozov2013} & Metric & NP-hard \\ \hline
Functional distortion distance~\cite{bauer_measuring_2014} & Metric & NP-hard \\ \hline

 A variant of edit distance between merge trees~\cite{sridharamurthy_edit_2023} & Metric & $O(n^2)$ \\ \hline

Interleaving distance for labeled merge trees~\cite{gasparovic_intrinsic_2022}  & Metric & $O(n^2)$ \\ \hline

Distance based on branch decompositions~\cite{bremer_measuring_2014} & Unknown & $O(n^5 \log(I_\varepsilon))$ \\ \hline
Distance based on histograms for merge trees~\cite{saikia2017fast}  &Metric & $O(n^2B)$ \\ \hline
Distance based on tree alignment~\cite{lohfink_fuzzy_2020}  & Unknown & $O(n^2)$ \\ \hline
Distance based on subtrees of contour trees~\cite{thomas_symmetry_2011}  & Unknown & $O(t^5 \log t)$ \\ \hline

\end{tabular}
\caption{Comparison of distance measures for merge trees and similar topological descriptors. Here, \( n \) denotes the number of nodes in the topological descriptor, \( B \) is the number of bins  in a histogram, 
\( I_\varepsilon \) represents the search range for the optimal pair of branch decompositions, 
and \( t \) is the number of branches in a branch decomposition.
}

\label{tab:reeb_comparison}
\end{table}

\textbf{Approximation Algorithms:} Recently, Pegoraro~\cite{pegoraro_graph-matching_2024} introduced a combinatorial perspective on the interleaving distance by defining a global node matching to approximate the distance between two merge trees. It provides upper and lower bounds on the interleaving distance by leveraging structural alignments. \cite{agarwal_computing_2018} investigated the approximability of both the Gromov--Hausdorff and interleaving distances. The authors provided a polynomial-time algorithm achieving an \(O(\sqrt{n})\)-approximation of the interleaving distance between merge trees. They showed that not only computing the interleaving distance, but also computing an approximation within a factor of 3 is also NP-hard. 

\textbf{FPT Algorithm:} In contrast to these approximation approaches, \cite{touli2022fpt} proposed a fixed-parameter tractable (FPT) algorithm for computing the exact interleaving distance in an attempt to approximate Gromov-Hausdorff distance between metric trees. The authors proposed a new definition of interleaving distance using a single \(\varepsilon\)-good map instead of a pair of maps in the classical definition~\cite{Morozov2013} .
The algorithm presented in their work has a time complexity of $O(2^{2\tau} (2\tau)^{2\tau+2} n^2  \log^3 n)$,
where \(n\) denotes the number of nodes in the merge tree, and \(\tau\) is a degree-bound parameter. However, \(\tau\) increases with the chosen value of \(\varepsilon\), causing the algorithm’s complexity to grow accordingly. In this paper, we improve upon this approach by introducing new parameters, \( \eta_f \text{ and }\eta_g \), which denote the numbers of leaf nodes in the input merge trees $M_f$ and $M_g$ respectively. Notably, \(\eta_f \text{ and }\eta_g \) are independent of \(\varepsilon\), leading to a more efficient and stable complexity bound.

\section{Background}
\label{sec:background}
In this section, we provide the necessary background on merge trees and the computation of the interleaving distance between merge trees, which is essential for understanding our method. 

\subsection{Merge Tree of a Scalar Field}
A \emph{merge tree} $M_f$ associated with a scalar field 
\( f : \X \to \mathbb{R} \) 
on a topological space \(\X\) records how the connected components of the sublevel sets 
\[
  \X_t = f^{-1}((-\infty, t])
\]
appear and merge as the threshold \(t\) increases. 
In the current paper, we assume, the underlying space \(\X\) is \emph{connected and finitely triangulable} (i.e., homeomorphic to a finite simplicial complex) and the scalar field is chosen to be “tame” (for instance, a piecewise-linear function on the $\X$), i.e. the number of distinct critical values of 
$f$ is finite, and new components can only appear or merge at those finitely many values. Between critical values the connectivity remains unchanged, so the evolution of components can be encoded as a finite set of nodes (the critical events) connected by edges (the intervals of persistence). In this setting the merge tree is thus a finite one-dimensional CW complex or, in particular, a finite graph with a tree structure, capturing the hierarchical merging of connected components in a combinatorial form \cite{dey_computational_2022}.

Formally, we define an equivalence relation on $\X$ as  \textcolor{black}{follows}: for $\x,\y \in \X$, we define \( \x \sim \y \) if and only if  \( f(\x) =f(\y) = a \) for some \( a \in \mathbb{R} \), and $\x$ and $\y$ lie in the same connected component of \( \X_a \). This equivalence relation partitions $\X$ into equivalence classes.  The \textit{merge tree} \( M_f \) is then defined as the quotient space \( \mathbb{X}/ \sim \) constructed by collapsing each equivalence class $[\x] = \{\y\in \X : \x \sim \y\}$ into a single point. That is, each point $x$ of $M_f$ corresponds to an equivalence class $[\x]$, and the quotient map $q_f:  \mathbb{X} \rightarrow  M_f $ sends points of each equivalence class $[\x]$ to  $x\in M_f$. 
Thus, the merge tree $M_f$ is a rooted tree in which each node corresponds to a critical point of the function \( f \) where the topology of its sub-level set changes. Since $\X$ is assumed as finitely triangulable, the root of the $M_f$ corresponds to the global maximum of $f$. As the threshold \( t \) increases, a new component appears in $\X_t$ at each local minimum, which corresponds to a degree 1 node in $M_f$ referred as \emph{leaf} node, and two existing components in $\X_t$ merge at a saddle, which corresponds to a node in $M_f$ with degree 3 or more. Thus, a Merge tree provides a compact topological summary that encodes the evolution of connected components of the sub-level sets of a scalar field as the scalar value varies. See an example illustration in the Figure \ref{fig:merge_tree}.

\begin{figure}[htbp]  
    \centering
    \includegraphics[width=0.7\linewidth]{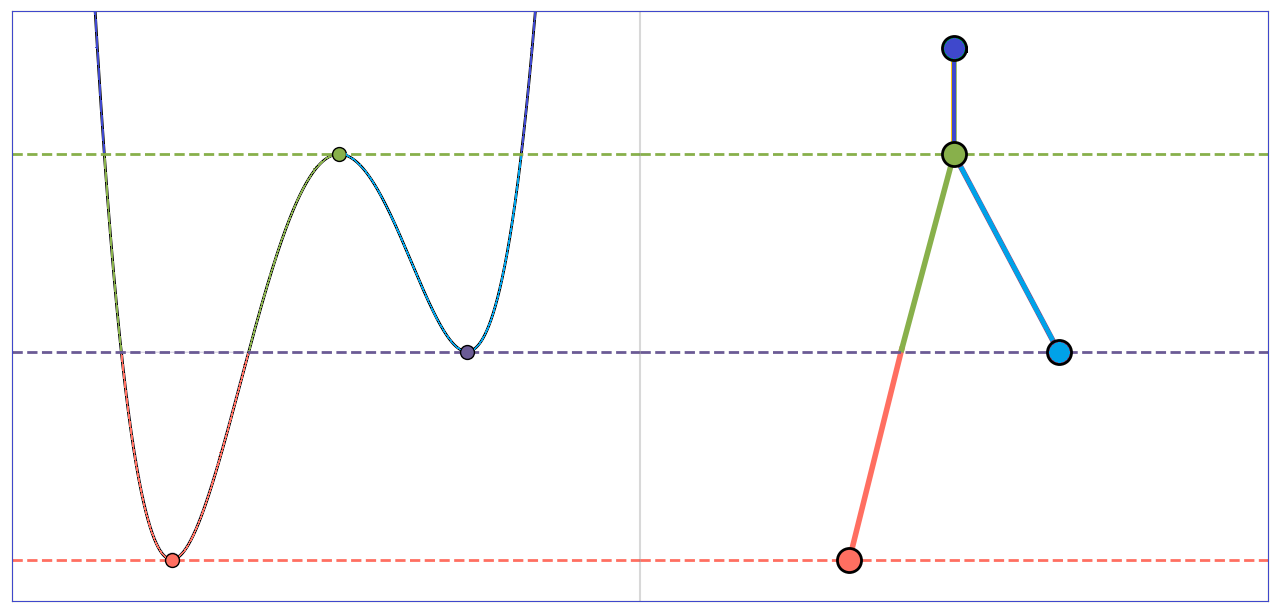}
    \caption{An example of a merge tree of the height field on the graph $(x, f(x))$  of the function $f: \R \rightarrow \R$ defined by $f(x) = x^4 - 4x^2 + x$. The corresponding merge tree is shown on the right, with function values increasing toward \( \infty \) at the root node colored in blue.  The coral (light red) and sky-blue colored nodes in the merge tree correspond to the local minima in the domain, where new connected components are born. The green node in the merge tree corresponds to the saddle where two components merge. The coral edge in the merge tree corresponds to the coral region in the domain, the sky-blue edge to the sky-blue region, the green edge to the green region and the blue edge to the blue region.}
    \label{fig:merge_tree}
\end{figure}

 The function $f:\X \rightarrow \R$  naturally induces a function $\tilde{f}: M_f \rightarrow \mathbb{R}$ defined as follows: for any $x\in M_f$, let $\tilde{f}(x)= f(\y)$ for any point  $\y$ in the equivalence class $[\x]=q_f^{-1}(x)$. 
 This is well-defined because all points in the same equivalence class share the same $f$-value. As a result, the following diagram commutes, that is $f = \tilde{f} \circ q_f$.

\[
\begin{tikzcd}
\mathbb{X} \arrow[r,"f"] \arrow[dr,"q_f"'] & \mathbb{R} \\
& M_f \arrow[u,"\tilde{f}"']
\end{tikzcd}
\]

The function $\tilde{f}$ defined on the merge tree \( M_f \), is used to formulate the interleaving distance between merge trees.

\subsection{Interleaving Distance in \cite{Morozov2013}} 
\label{subsec: Background - Interleaving}
Intuitively, if two merge trees are nearly the same, then corresponding to every point in one tree, it is possible to find a point in the other tree roughly having almost the same function value, such that the overall topological structure of the trees is approximately preserved.
 The idea of \(\varepsilon\)-compatible maps formalizes this: it asks whether we can shift each point in one tree by at most \( \varepsilon \) in function value to find a continuous correspondence in the other tree, and vice versa where $\varepsilon > 0$.  For the notion of $\varepsilon$-compatibility to be well-defined, each merge tree is extended by attaching an infinite ray at the root so that the function value goes to $+\infty$ which is known as the extended merge tree \cite{touli2022fpt}. From now on, by a merge tree we mean an extended merge tree.
The following definition formalizes the \(\varepsilon\)-compatible map.

\begin{definition}
     Let $f : \X \rightarrow \R$ and $g : \Y \rightarrow \R$ be two scalar fields, and let \( M_f \) and \( M_g \) denote the corresponding merge trees. Two continuous maps $\alpha:M_f \rightarrow M_g$ and $\beta:M_g \rightarrow M_f$ are said to be \textbf{$\varepsilon$-compatible} if for all points $x \in M_f$ and $y\in M_g$,
     
\begin{enumerate}[(i)]
    \item $\tilde{g}(\alpha(x)) = \tilde{f}(x)+\varepsilon,$  
    \item $\tilde{f}(\beta(y)) = \tilde{g}(y)+\varepsilon,$
    \item $\beta \circ \alpha = i^{2\varepsilon},$
    \item $\alpha \circ \beta = j^{2\varepsilon},$ 
\end{enumerate}
   where $i^{2\varepsilon}: M_f\rightarrow M_f$ is the $2\varepsilon$-shift map which maps a point $x\in M_f$ to its ancestor $x'\in M_f$, lying on the unique path from $x$ to the root of $M_f$, such that 
   if $\tilde{f}(x)=a$ then $\tilde{f}(x')=a+2\varepsilon$. The map $j^{2\varepsilon}$ is defined analogously on $M_g$.  
   If $\varepsilon$-compatible maps exist between two merge trees, we say those merge trees are \(\varepsilon\)-interleaved. The smaller the required \( \varepsilon \), the more similar the trees are. 

   \label{def:epsilon-comaptible}
\end{definition}

\begin{figure}[h]
    \centering
    \includegraphics[width=0.5\linewidth]{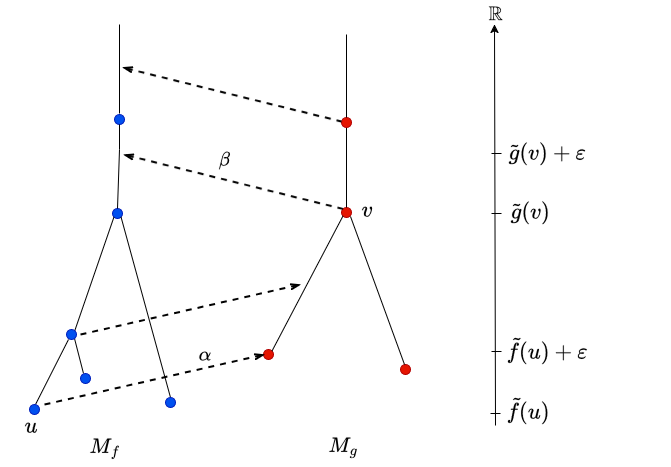}
 \caption{
     Illustration of construction of $\varepsilon$-compatible maps between the (extended) merge trees \( M_f \) and \( M_g \). A blue node $u$ in \( M_f \) is mapped to  a red node $\alpha(u)$ in \( M_g \), satisfying \( \tilde{g}(\alpha(u)) = \tilde{f}(u) + \varepsilon \). Conversely, a red node $v$ in \( M_g \) is mapped to a point $\beta(v)$ in \( M_f \), satisfying \( \tilde{f}(\beta(v)) = \tilde{g}(v) + \varepsilon \). 
}
    \label{fig:interleaving}
\end{figure}

For \( \varepsilon = 0 \), the maps \( \alpha \) and \( \beta \) become mutual inverses, implying that \( M_f \) and \( M_g \) are function-preserving isomorphic. That is, there exists a bijection between the merge trees that preserves both the adjacency structure and the function values.
 If $M_f$ and $M_g$ are \(\varepsilon\)-interleaved, that is there exist $\varepsilon$-compatible maps $\alpha:M_f \rightarrow M_g$ and $\beta : M_g \rightarrow M_f$, then we say $M_f$ and $M_g$ are at most $\varepsilon$ apart from being function-preserving isomorphic. 
 For a clearer understanding of Definition~\ref{def:epsilon-comaptible}, see the illustration in Figure~\ref{fig:interleaving}. 
 The interleaving distance is defined formally as follows.
\begin{definition}
    Interleaving distance between two merge trees $M_f$ and $M_g$, denoted by $d_I(M_f,M_g)$, is defined as
    $$d_I(M_f,M_g)= \inf \left\{\varepsilon \mid \text{there exist $\varepsilon$-compatible maps } \alpha:M_f \rightarrow M_g, ~\beta:M_g \rightarrow M_f\right\}.$$
\end{definition}
We note, $\varepsilon$-compatible maps $\alpha$ and $\beta$ between two merge trees may not always exist for an arbitrary $\varepsilon > 0$. 
However, such maps  always exist for sufficiently large $\varepsilon$, and the infimum of the set of possible $\varepsilon$ values for which they exist is precisely the interleaving distance between the two merge trees.
   The use of a pair of continuous maps in this definition makes it challenging to design an algorithm to compute the interleaving distance between two merge trees. To address this, \cite{touli2022fpt} introduced an equivalent definition based on a single continuous map, referred to as \emph{\( \varepsilon \)-good map}. Next, we present the notion of $\varepsilon$-good map and the fixed-parameter tractable (FPT) algorithm developed in \cite{touli2022fpt} for the exact computation of the interleaving distance between \textcolor{black}{two} merge trees.

\subsection{FPT-Algorithm in \cite{touli2022fpt}}
\label{subsec: FPT algorithm}
\cite{touli2022fpt} proposed a fixed-parameter tractable (FPT) algorithm to compute the exact interleaving distance based on \emph{$\varepsilon$-good map} between the merge trees. Throughout this paper, let $f : \X \rightarrow \R$ and $g : \Y \rightarrow \R$ be two scalar fields defined on finitely triangulable topological spaces $\X$ and $\Y$, respectively. Let \( M_f \) and \( M_g \) denote the corresponding merge trees. For two points $u, v$ in a merge tree, we adopt the notation \( v \succeq u \) to indicate that \( v \) is an ancestor of \( u \) in the merge tree. The formal definition of $\varepsilon$-good map is as follows:

\begin{definition}{}
    For an $\varepsilon>0$, a continuous map $\alpha^{\varepsilon}: M_f \rightarrow M_g$ is called an \textbf{$\varepsilon$-good map} if the following properties hold.
    \begin{enumerate}[(a)]
        \item  \textbf{Range-Shift:} $\tilde{g}(\alpha^{\varepsilon}(v))= \tilde{f}(v)+\varepsilon$, $ \forall v\in M_f$.
        
        \item \textbf{Ancestor-Shift:} If $\alpha^{\varepsilon}(v_1)\succeq \alpha^{\varepsilon}(v_2)$, then $i^{2\varepsilon}(v_1)\succeq i^{2\varepsilon}(v_2)$, $ \forall v_1,v_2\in M_f$.
        
        \item  \textbf{Ancestor-Closeness:} For any $w\in M_g\setminus \Img(\alpha^{\varepsilon})$ and its nearest ancestor $w^a \in \Img(\alpha^{\varepsilon})$, we have $|\tilde{g}(w^a)-\tilde{g}(w)| \leq 2\varepsilon$.
    \end{enumerate}
    \label{def:epsilon-goodmap}
\end{definition}
 Note that the nearest ancestor $w^a$ of a point $w$, required in the \textbf{Ancestor-Closeness} property, always exists because the root of $M_g$ lies in $\mathrm{Im}(\alpha^{\varepsilon})$, and the root necessarily belongs to the ancestor chain of $w$.
   Here, \( i^{2\varepsilon}: M_f \rightarrow M_f \) is the \( 2\varepsilon \)-shift map as defined in Definition~\ref{def:epsilon-comaptible}. 
   Figures~\ref{fig:ancestor-shift} and \ref{fig:ancestor-closeness}  illustrate respectively the Ancestor-Shift and Ancestor-Closeness properties of an $\varepsilon$-good map between two merge trees.
 \cite{touli2022fpt} showed that the existence of a pair of \( \varepsilon \)-compatible maps \( \alpha: M_f \rightarrow M_g \) and \( \beta: M_g \rightarrow M_f \) is equivalent to the existence of a single \( \varepsilon \)-good map \( \alpha^{\varepsilon}: M_f \rightarrow M_g \). From the map $\alpha^\varepsilon$, a corresponding map \( \beta^{\varepsilon}: M_g \rightarrow M_f \) can be constructed such that $\alpha^\varepsilon$ and $\beta^\varepsilon$ are $\varepsilon$-compatible. A map \( \phi : M_f \rightarrow M_g \) is said to be \emph{monotone} if \( \tilde{g}(\phi(x)) \geq \tilde{f}(x) \) for every point \( x \in M_f \), that is, \( \phi \) maps a point in \( M_f \) to a point in \( M_g \) with greater or equal function value. Under this definition, both \( \varepsilon \)-good and \( \varepsilon \)-compatible maps are monotone.

 \begin{figure}[htbp]
    \centering
    \begin{minipage}{0.42\textwidth}
        \centering
        \includegraphics[width=\textwidth]{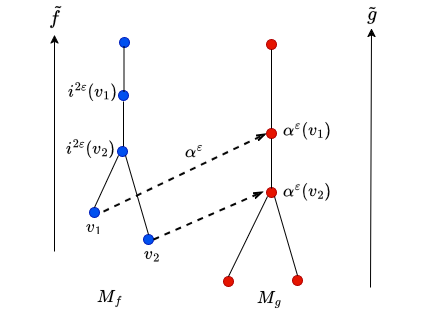}
\caption{
    Illustration of the  Ancestor-Shift property in Definition~\ref{def:epsilon-goodmap} for an $\varepsilon$-good map \( \alpha^\varepsilon: M_f \rightarrow M_g \). Here \( \alpha^\varepsilon(v_1) \succeq \alpha^\varepsilon(v_2) \), then it must hold that \( i^{2\varepsilon}(v_1)\succeq i^{2\varepsilon}(v_2) \) for  \( v_1, v_2 \in M_f \).
}
        \label{fig:ancestor-shift}
    \end{minipage}
    \hfill
    \begin{minipage}{0.42\textwidth}
        \centering
        \includegraphics[width=\textwidth]{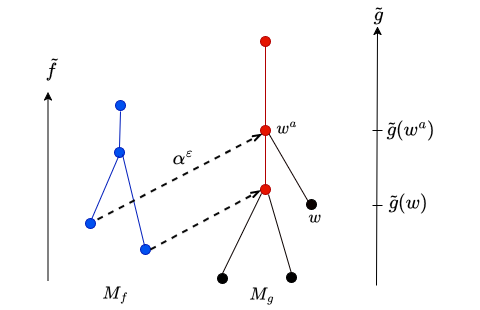}
        \caption{
           Illustration of the Ancestor-Closeness property in Definition~\ref{def:epsilon-goodmap} for an $\varepsilon$-good map \( \alpha^\varepsilon: M_f \rightarrow M_g \). The black region in \( M_g \) represents the complement of \( \mathrm{Im}(\alpha^{\varepsilon}) \) in $M_g$. For a point \( w \notin \mathrm{Im}(\alpha^{\varepsilon}) \), let \( w^a \) be its nearest ancestor such that \( w^a \in \mathrm{Im}(\alpha^{\varepsilon}) \). Then  we must have \( |\tilde{g}(w^a) - \tilde{g}(w)| \leq 2\varepsilon \).
        }
        \label{fig:ancestor-closeness}
    \end{minipage}
\end{figure}

 This reformulation enables \cite{touli2022fpt}  a dynamic programming approach by discretizing function values and constructing level structures that preserve hierarchical relations across both trees. 
 For a given candidate $\varepsilon > 0$, the  first step of the algorithm in \cite{touli2022fpt} involves augmenting the merge trees $M_f$ and $M_g$ by inserting additional degree-two nodes corresponding to appropriate levels of \( M_f \) and \( M_g \).
The collection of levels \( \mathcal{L}^{(\varepsilon)}_1 \) for \( M_f \) and \( \mathcal{L}^{(\varepsilon)}_2 \) for \( M_g \) are defined as follows:

\begin{align*}
    \mathcal{L}^{(\varepsilon)}_1 &= \{\tilde{f}^{-1}(c) \mid c \text{ is the } \tilde{f}\text{-image of a node of } M_f\} \cup \{\tilde{f}^{-1}(\hat{c} - \varepsilon) \mid \hat{c} \text{ is the } \tilde{g}\text{-image of a node of } M_g\}, \\
\mathcal{L}^{(\varepsilon)}_2 &= \{\tilde{g}^{-1}(c + \varepsilon) \mid c \text{ is the } \tilde{f}\text{-image of a node of } M_f\} \cup \{\tilde{g}^{-1}(\hat{c}) \mid \hat{c} \text{ is the } \tilde{g}\text{-image of a node of } M_g\}.   
\end{align*}
\begin{figure}[h]
    \centering
    \includegraphics[width=0.7\linewidth]{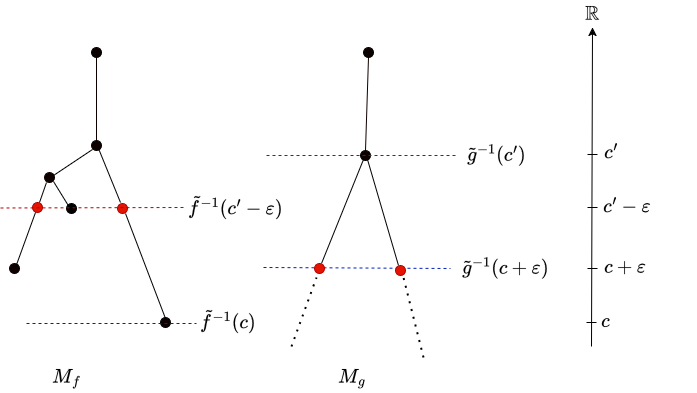}
     \caption{
        Illustration of the augmentation process. The red nodes denote newly inserted nodes, while the black nodes represent the already existing nodes of the merge trees $M_f$ and $M_g$.  Consider a critical level \( \tilde{f}^{-1}(c) \) in \( M_f \), then there is a corresponding level \( \tilde{g}^{-1}(c + \varepsilon) \) added in \( M_g \), and two additional red nodes of degree 2 are introduced at this level. Similarly, for a critical level \( \tilde{g}^{-1}(c') \) in \( M_g \), a corresponding level \( \tilde{f}^{-1}(c' - \varepsilon) \) is added in \( M_f \) and two additional degree two nodes are introduced. 
    }
    \label{fig:Augmentation}
\end{figure}
Note that there is a one-to-one correspondence between the levels in $\mathcal{L}^{(\varepsilon)}_1$ and the levels in $\mathcal{L}^{(\varepsilon)}_2$. 
These levels are sorted in ascending order by their function values. Let the $\tilde{f}$-values of the levels in \( \mathcal{L}^{(\varepsilon)}_1 \), associated with \( M_f \), be \( h_1, h_2, \dots, h_m \). Then the  $\tilde{g}$-values  of the corresponding levels in \( \mathcal{L}^{(\varepsilon)}_2 \), associated with \( M_g \), are \( \hat{h}_1, \hat{h}_2, \dots, \hat{h}_m \) where \( \hat{h}_i = h_i + \varepsilon \) for each $ i \in \{1, 2, \ldots, m \}$. That is, for every level \( \ell_i \in \mathcal{L}^{(\varepsilon)}_1 \) associated with \( M_f \) with  $\tilde{f}$-value \( h_i \), there is a corresponding level $\hat{\ell}_i\in \mathcal{L}^{(\varepsilon)}_2$ associated with \( M_g \) with $\tilde{g}$-value \( h_i + \varepsilon \), as illustrated in Figure~\ref{fig:Augmentation}. Finally, the merge trees $M_f$ and $M_g$ are augmented by introducing additional degree-two nodes at all the points of each level of  $\mathcal{L}^{(\varepsilon)}_1$ and $\mathcal{L}^{(\varepsilon)}_2$ where no node already exists. The augmented merge trees of $M_f$ and $M_g$ are denoted by $\widetilde{M}_f$ and $\widetilde{M}_g$, respectively.

In the second step, the algorithm by \cite{touli2022fpt} checks whether there exists an $\varepsilon$-good map between the augmented merge trees $\widetilde{M}_f$ and $\widetilde{M}_g$, as defined in Definition~\ref{def:epsilon-goodmap}. To do so, the algorithm first verifies the existence of at least one \emph{feasible valid pair} $(S, w)$ for every level $\ell_i \in \mathcal{L}^{(\varepsilon)}_1$, where $S$ is a subset of nodes in $\ell_i \in \mathcal{L}^{(\varepsilon)}_1$ with cardinality at most $\tau$ and $w$ is a node in the corresponding level $\hat{\ell}_i \in \mathcal{L}^{(\varepsilon)}_2$. Here, \( \tau \) is the degree-bound parameter as  defined in \cite{touli2022fpt}.
The pair $(S,w)$ is \textit{valid} if all nodes in $S$ share a common ancestor in $\widetilde{M}_f$ at $\tilde{f}$-value $h_i + 2\varepsilon$  and $\g(w) = \f(x)+\varepsilon$ for any $x \in S$. This construction guarantees that the \textbf{Range-Shift} and \textbf{Ancestor-Shift} properties of Definition~\ref{def:epsilon-goodmap} are satisfied.  
Next, the algorithm determines whether each such valid pair is \emph{feasible} using the following recursive procedure:  

\begin{enumerate}[Step~1:]
    \item For the first level $\ell_1 \in \mathcal{L}^{(\varepsilon)}_1$, a valid pair $(S, w)$ is feasible if the subtree of $\widetilde{M}_g$ rooted at $w$ has height at most $2\varepsilon$. Here, the height of a tree is defined as the maximum difference in function values between any two nodes in the tree.   
    
    \item For a level $\ell_i \in \mathcal{L}^{(\varepsilon)}_1$ with $i > 1$, the valid pair $(S, w)$ is feasible if there exists a partition $\{S_1, S_2, \ldots, S_k\}$ of the children of $S$, corresponding to the children $\{w_1, w_2, \ldots, w_k\}$ of $w$, such that each pair $(S_i, w_i)$ is feasible. Here the children of \( S \) are the nodes in \( \ell_{i-1} \) that are adjacent to the nodes in \( S \); similarly, the children of \( w \) are the nodes in \( \hat{\ell}_{i-1} \) that are adjacent to \( w \). This feasibility check is applied recursively.  
\end{enumerate}  

The feasibility of a valid pair $(S, w)$ ensures that the \textbf{Ancestor-Closeness} property of Definition~\ref{def:epsilon-goodmap} is satisfied. In particular, the feasibility of the pair $(\mathrm{root}(\widetilde{M}_f), \mathrm{root}(\widetilde{M}_g))$ guarantees the existence of an $\varepsilon$-good map between $M_f$ and $M_g$. If such a map exists, we say that $M_f$ and $M_g$ are $\varepsilon$-interleaved.  

Each $\varepsilon$ is chosen from a finite set of candidate values, as described in \cite{agarwal_computing_2018}. The smallest $\varepsilon$ for which $M_f$ and $M_g$ are $\varepsilon$-interleaved is the interleaving distance $\varepsilon^* = d_I(M_f, M_g)$. This candidate set is given in the following lemma:  

\begin{lemma}[\textbf{Candidate Set Generation}~\cite{agarwal_computing_2018}]  
    The interleaving distance $\varepsilon^* = d_I(M_f,M_g)$ belongs to the set $\Pi = \{|\tilde{f}(u) - \tilde{g}(v)| : u\in V(M_f), v\in V(M_g)\} \cup \{|\tilde{f}(u) - \tilde{f}(u')|/2 : u,u'\in V(M_f)\} \cup \{|\tilde{g}(v) - \tilde{g}(v')|/2 : v,v'\in V(M_g)\}$, where $V(M_f)$ and $V(M_g)$ denote the node sets of $M_f$ and $M_g$ respectively.
    \label{lemma:candidate values}
\end{lemma}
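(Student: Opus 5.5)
The plan is to argue by "tightness": at $\varepsilon^*=d_I(M_f,M_g)$ some constraint in the definition of an $\varepsilon$-good map (Definition~\ref{def:epsilon-goodmap}) must hold with equality. Every such equality forces a value in $\Pi$.

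\emph{Step 1: the infimum is attained.} I first show that the set of $\varepsilon$ admitting an $\varepsilon$-good map is closed from above. Take $\varepsilon_k\downarrow\varepsilon^*$ with $\varepsilon_k$-good maps $\alpha_k$. All trees involved are finite, and for each $\varepsilon$ the map is determined combinatorially by where the nodes of the augmented tree $\Mf$ go, on edges, relative to the levels $\mathcal{L}^{(\varepsilon)}_2$. Passing to a subsequence, we may therefore assume that the combinatorial type of $\alpha_k$ is constant: the same images of nodes on the same edges of $M_g$, and the same ancestor relations among images. The maps are continuous in $\varepsilon$ because each node image moves linearly along a fixed edge with slope $1$. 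Taking the limit then gives a continuous map $\alpha^{\varepsilon^*}$ satisfying Range-Shift. The Ancestor-Shift and Ancestor-Closeness inequalities are closed conditions, namely non-strict inequalities in function values, so they survive in the limit.

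\emph{Step 2: perturbation downward.} Assume $\varepsilon^*\notin\Pi$. Then there is a small $\delta>0$ such that on $(\varepsilon^*-\delta,\varepsilon^*]$ no value $\tilde f(u)+\varepsilon$ coincides with a node value $\tilde g(v)$. The same holds for $\tilde f(u)+2\varepsilon$ against $\tilde f(u')$, and for the quantity $2\varepsilon$ against $|\tilde g(v)-\tilde g(v')|$. Consequently the combinatorial structure of the augmented trees $\Mf,\Mg$ (which levels lie on which edges) is the same for every $\varepsilon$ in this interval. I will define $\alpha^{\varepsilon}$ for $\varepsilon$ slightly smaller than $\varepsilon^*$ by sliding every image point down its edge of $M_g$ by $\varepsilon^*-\varepsilon$. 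This is legitimate because no node of $M_g$ is crossed when $\tilde g(\alpha(\text{node}))$ is not a node value. Range-Shift holds by construction, and continuity is preserved.

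\emph{Step 3: checking the remaining properties; this is the main obstacle.} Ancestor-Shift compares $i^{2\varepsilon}(v_1)$ and $i^{2\varepsilon}(v_2)$. The relation $i^{2\varepsilon}(v_1)\succeq i^{2\varepsilon}(v_2)$ can only change as $\varepsilon$ varies when $\tilde f(v)+2\varepsilon$ passes a node value of $M_f$. For the nodes that matter, this is a crossing of the form $\varepsilon=|\tilde f(u)-\tilde f(u')|/2$ or, after combining with Range-Shift, of the form $|\tilde f(u)-\tilde g(v)|$. Both are excluded near $\varepsilon^*$. Ancestor-Closeness requires $\tilde g(w^a)-\tilde g(w)\le 2\varepsilon$, and the worst case is attained with $w$ a leaf of $M_g$. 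In that case $w^a$ is either a node of $M_g$, giving the value $|\tilde g(v)-\tilde g(v')|/2$, or the image of a node of $M_f$ at height $\tilde f(u)+\varepsilon$. In the latter case the constraint reads $\tilde f(u)+\varepsilon-\tilde g(w)\le 2\varepsilon$, i.e.\ $\varepsilon\ge \tilde f(u)-\tilde g(w)$. Hence equality would force $\varepsilon^*\in\Pi$, so all these inequalities are strict at $\varepsilon^*$ and remain valid slightly below it. The delicate part is making sure the reduction to node-determined extremes is complete: the image set changes as we slide, and one must verify that no new nearest-ancestor configuration appears.

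\emph{Step 4: conclusion.} Steps 2 and 3 produce an $\varepsilon$-good map for some $\varepsilon<\varepsilon^*$. By the equivalence of \cite{touli2022fpt}, this gives $\varepsilon$-compatible maps, which contradicts the definition of $\varepsilon^*$ as the interleaving distance. Therefore $\varepsilon^*\in\Pi$.
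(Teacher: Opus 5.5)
The paper does not prove this lemma. It is imported from \cite{agarwal_computing_2018}, so your argument has to stand on its own. Your strategy is the right one: on a $\Pi$-free interval around $\varepsilon^*$, push an $\varepsilon$-good map to a value below $\varepsilon^*$. However, Step~3, where all the content lies, is asserted rather than proved.

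For Ancestor-Shift, the quantifier runs over all points $v_1,v_2\in M_f$, not just nodes. For a non-node $v_1$, the relation $i^{2\varepsilon}(v_1)\succeq i^{2\varepsilon}(v_2)$ switches at $\varepsilon=(\tilde f(\mathrm{LCA}(v_1,v_2))-\tilde f(v_1))/2$, which is an arbitrary number. The hypothesis $\alpha^{\varepsilon}(v_1)\succeq\alpha^{\varepsilon}(v_2)$ also moves with $\varepsilon$. Restricting to ``the nodes that matter'' is precisely the reduction that is missing.

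For Ancestor-Closeness, you say yourself that you have not ruled out new nearest-ancestor configurations. Also, ``slide every image point down its edge'' is not well defined. Only images of nodes of $M_f$ are guaranteed to avoid node heights of $M_g$. Interior points of edges of $M_f$ can be mapped onto merge nodes of $M_g$, where there is no unique edge to slide along.

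The missing idea is to parametrize everything by the images of the leaves, as in the paper's path-based construction. Write $R(\cdot)$ for root paths.
\begin{itemize}
\item A continuous Range-Shift map is determined by the points $\alpha(u)$, $u\in L_f$: $\alpha(x)$ is the ancestor of $\alpha(u)$ at height $\tilde f(x)+\varepsilon$ for any leaf $u\preceq x$.
\item On a $\Pi$-free interval $I\ni\varepsilon^*$, the height $\tilde f(u)+\varepsilon$ is never a node value of $M_g$. So each $\alpha(u)$ stays in the interior of a fixed edge $e_u$. Define $\beta(u)$ as the point of $e_u$ at height $\tilde f(u)+\varepsilon$ and extend upward.
\item Let $a=\mathrm{LCA}(u_1,u_2)$ and let $h_\varepsilon$ be the height of $\mathrm{LCA}(\beta(u_1),\beta(u_2))$. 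This is a fixed node value $\tilde g(w)$ if $e_{u_1},e_{u_2}$ are incomparable, and $\max_i\tilde f(u_i)+\varepsilon$ otherwise.
\item Well-definedness reads $h_\varepsilon\le\tilde f(a)+\varepsilon$.
\item Taking $x\in R(u_1)$, $y\in R(u_2)$ extremal, Ancestor-Shift is equivalent to $\tilde f(a)\le\max\{\tilde f(u_1)+2\varepsilon,\tilde f(u_2)+2\varepsilon,h_\varepsilon+\varepsilon\}$ for all leaf pairs.
\item Since $\mathrm{Im}(\beta)=\bigcup_u R(\beta(u))$, the nearest image ancestor of a leaf $w$ of $M_g$ is the lowest of the points $\mathrm{LCA}(\beta(u),w)$. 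Each of these is either a fixed node $w'$ or $\beta(u)$ itself. No leaf of $M_g$ ever lies in the image.
\item So Ancestor-Closeness says: for every leaf $w$ some $u$ gives $\tilde g(w')-\tilde g(w)\le 2\varepsilon$, or respectively $\tilde f(u)-\tilde g(w)\le\varepsilon$.
\end{itemize}
Each of these finitely many inequalities can change truth value only at a point of $\Pi$. Hence a fixed assignment $u\mapsto e_u$ yields an $\varepsilon$-good map either for all $\varepsilon\in I$ or for none. This closes Step~3.

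It also makes Step~1 unnecessary. Start instead from an $\varepsilon_1$-good map with $\varepsilon^*<\varepsilon_1\in I$; it exists by the definition of the infimum and the equivalence of \cite{touli2022fpt}. As written, Step~1 is also imprecise. Ancestor-Shift is an implication between closed conditions, not a closed condition. It survives the limit only because, when approaching from above with fixed combinatorial type, the hypothesis for the limit map implies the hypothesis for each $\alpha_k$.
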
  
The interleaving distance $\varepsilon^*=d_I(M_f,M_g)$ between two merge trees $M_f$ and $M_g$ is necessarily one of the values in the set $\Pi$ of size $O(n^2)$ where \( n \) is the total number of nodes in both \( M_f \) and \( M_g \). The algorithm in \cite{touli2022fpt} performs a binary search on the set of candidate values in $\Pi$ and tests whether an $\varepsilon$-good map exists between the merge trees $M_f$ and $M_g$.  

In contrast, our approach tests the existence of an $\varepsilon$-good map between $M_f$ and $M_g$ for each candidate $\varepsilon \in \Pi$, but replaces the subset-based matching strategy with a more efficient path-based approach. Rather than exhaustively enumerating all valid pairs $(S,w)$ at each level, we focus on the leaf-to-root paths in the merge tree and employ a path-based matching scheme that substantially reduces the search space.  
The next section formalizes our theoretical results, which form the foundation for the design and correctness of our algorithm presented subsequently.

\section{Theoretical Contributions}
\label{sec:theory}

First, we compute the sorted set $\Pi$ of candidate values and perform a binary search over it. At each step, we select an $\varepsilon \in \Pi$ and verify whether there exists an $\varepsilon$-good map between the given merge trees $M_f$ and $M_g$. If such a map exists, we continue our search to the left half of  $\Pi$ with values less than $\varepsilon$ to locate the smallest such $\varepsilon$; otherwise, we proceed with the right half. To ensure the correctness of the binary search algorithm, we establish a property known as the monotonicity of the interleaving distance, as follows.

\subsection{Monotonicity Property of the Interleaving Distance} 
\label{subsec: Theory- Monotonicity Property}
If \( M_f \) and \( M_g \) are \( \varepsilon \)-interleaved, then they are also \( \varepsilon' \)-interleaved for every \( \varepsilon' > \varepsilon \). This monotonicity property, formalized in Lemma~\ref{lemma: monotonicity}, is crucial for establishing the correctness of our binary search procedure. Before proving it, we first show that any continuous map \( \alpha^\varepsilon \) satisfying the Range-Shift property is ancestor-preserving. 
\begin{lemma}
\label{Lemma: ancestor-preservation}
   Let \( \alpha^\varepsilon : M_f \rightarrow M_g \) be a continuous map satisfying the Range-Shift property. If \( x' \succeq x \) for \( x, x' \in M_f \), then \( \alpha^\varepsilon(x') \succeq \alpha^\varepsilon(x) \). Equivalently,  
\[
\alpha^\varepsilon \circ i^\delta \;=\; j^\delta \circ \alpha^\varepsilon \quad \text{for each } \delta > 0.
\]
\end{lemma}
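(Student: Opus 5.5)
The plan is to use the path from $x$ to $x'$ in $M_f$. Since $x' \succeq x$, there is a unique monotone path $\gamma:[0,1]\to M_f$ with $\gamma(0)=x$ and $\gamma(1)=x'$. Along it $\f(\gamma(t))$ is nondecreasing, and we may parametrize $\gamma$ so that it is strictly increasing (if $x=x'$ there is nothing to prove). Compose with $\alpha^\varepsilon$ to get a continuous path $\sigma=\alpha^\varepsilon\circ\gamma$ in $M_g$ from $\alpha^\varepsilon(x)$ to $\alpha^\varepsilon(x')$. By the Range-Shift property, $\g(\sigma(t))=\f(\gamma(t))+\varepsilon$, so $\g\circ\sigma$ is strictly increasing in $t$.

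The key step is to show that a continuous path in a merge tree along which $\g$ is strictly increasing ends at an ancestor of its starting point. I would argue by contradiction using the tree structure. Let $y=\alpha^\varepsilon(x)$ and let $P_y$ be the ray from $y$ to the root (the ancestors of $y$). This set is closed, and $\g$ restricted to $P_y$ is a homeomorphism onto $[\g(y),\infty)$.

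Consider $T=\sup\{t : \sigma([0,t])\subseteq P_y\}$. By closedness, $\sigma(T)\in P_y$. Suppose $T<1$. Points near $\sigma(T)$ not in $P_y$ lie in a small neighborhood, which is a star of finitely many edge germs; one germ is the upward edge of $P_y$, and the others go downward, toward descendants. Any germ leaving $P_y$ from $\sigma(T)$ therefore has $\g$-values strictly less than $\g(\sigma(T))$. Hence $\sigma(t)$ for $t$ slightly larger than $T$, if it lies off $P_y$, has $\g(\sigma(t))<\g(\sigma(T))$. This contradicts strict monotonicity. So $\sigma$ stays in $P_y$, and $\alpha^\varepsilon(x')\succeq\alpha^\varepsilon(x)$.

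The main obstacle is making this local argument rigorous. Since $M_f$ and $M_g$ are finite graphs with $\f,\g$ strictly monotone on edges, each point has a neighborhood that is a finite star in which exactly one branch goes up. Continuity of $\sigma$ then forces $\sigma(t)$ to lie in a single branch for $t\in(T,T+\delta)$, because each branch minus the center is connected and open in the punctured star. The non-root branches are exactly the ones where $\g$ decreases.

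For the equivalent formulation, take any $x$ and $\delta>0$. The point $i^\delta(x)$ is an ancestor of $x$, so by the above $\alpha^\varepsilon(i^\delta(x))\succeq\alpha^\varepsilon(x)$. By Range-Shift its $\g$-value is $\f(x)+\delta+\varepsilon=\g(\alpha^\varepsilon(x))+\delta$. The unique ancestor of $\alpha^\varepsilon(x)$ at that height is $j^\delta(\alpha^\varepsilon(x))$, since $\g$ is injective on the ancestor ray. Hence $\alpha^\varepsilon\circ i^\delta=j^\delta\circ\alpha^\varepsilon$.

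Conversely, if this identity holds and $x'\succeq x$, then $x'=i^{\delta}(x)$ with $\delta=\f(x')-\f(x)$. This gives $\alpha^\varepsilon(x')=j^\delta(\alpha^\varepsilon(x))\succeq\alpha^\varepsilon(x)$; the case $\delta=0$ is trivial.
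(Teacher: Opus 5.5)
Your proof is correct and follows the same route as the paper. You push the monotone path from $x$ to $x'$ forward under $\alpha^\varepsilon$, use Range-Shift to see that the image climbs, and then use uniqueness of the ancestor at a given height to get $\alpha^\varepsilon\circ i^\delta = j^\delta\circ\alpha^\varepsilon$; you are also more careful than the paper, which only compares the endpoint values $\tilde{g}(\alpha^\varepsilon(x'))\ge\tilde{g}(\alpha^\varepsilon(x))$ (not enough on its own, since two incomparable points can satisfy this), whereas applying Range-Shift along the whole path together with your sup/local-star argument actually justifies the ancestor conclusion, and you additionally prove the converse direction of the stated equivalence.
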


\begin{proof}

Let \( x' \succeq x \) for \( x, x' \in M_f \). Then there exists a path \( P \) in \( M_f \) connecting \( x \) and \( x' \) such that \( \tilde{f}(x') \geq \tilde{f}(x) \). Since \( \alpha^\varepsilon \) is continuous, the image \( \alpha^\varepsilon(P) \) is a path in \( M_g \) connecting \( \alpha^\varepsilon(x) \) and \( \alpha^\varepsilon(x') \). Furthermore, as \( \alpha^\varepsilon \) satisfies the Range-Shift property and \( \tilde{f}(x') \geq \tilde{f}(x) \), it follows that \( \tilde{g}(\alpha^\varepsilon(x')) \geq \tilde{g}(\alpha^\varepsilon(x)) \). Therefore, we have \( \alpha^\varepsilon(x') \succeq \alpha^\varepsilon(x) \).
    
    Again let, \(  i^\delta(x) = x' \). Then $\f(x') = \f(x) + \delta$. Applying $\alpha^\varepsilon$, we obtain $\alpha^\varepsilon(i^\delta(x)) = \alpha^\varepsilon(x') $. By the Range-Shift property, we have
    \begin{align*}
        \g(\alpha^\varepsilon(x')) &= \f(x') + \varepsilon = \f(x) + \delta + \varepsilon,\\
        \g(j^\delta(\alpha^\varepsilon(x))) &= \g(\alpha^\varepsilon(x)) +\delta = \f(x)+ \varepsilon +\delta.
    \end{align*}

    Both \( \alpha^\varepsilon(x') \) and $j^\delta(\alpha^\varepsilon(x)) $ are the ancestor of $\alpha^\varepsilon(x)$ at $\g$-value $\f(x) + \varepsilon +\delta$. As this ancestor is unique, we have $j^\delta(\alpha^\varepsilon(x)) = \alpha^\varepsilon(x')=\alpha^\varepsilon(i^\delta(x))$. Therefore we have $\alpha^\varepsilon \circ i^\delta = j^\delta \circ \alpha^\varepsilon$.
\end{proof}
We now establish the monotonicity property in the following lemma.
\begin{lemma}
    Given two merge trees \( M_f \) and \( M_g \), if there exists an \( \varepsilon \)-good map \( \alpha^{\varepsilon} : M_f \rightarrow M_g \), then for any \( \varepsilon' > \varepsilon \), there also exists an \( \varepsilon' \)-good map \( \alpha^{\varepsilon'} : M_f \rightarrow M_g \).
    \label{lemma: monotonicity}
\end{lemma}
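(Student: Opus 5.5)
The natural candidate is to post-compose the given map with a shift on $M_g$. With $\delta = \varepsilon' - \varepsilon > 0$, we set
\[
\alpha^{\varepsilon'} := j^{\delta} \circ \alpha^{\varepsilon} \;=\; \alpha^{\varepsilon}\circ i^{\delta},
\]
where the second equality is Lemma~\ref{Lemma: ancestor-preservation}. We then verify the three properties of Definition~\ref{def:epsilon-goodmap} in turn.

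\textbf{Continuity and Range-Shift.} The shift $j^\delta$ is continuous on the extended merge tree: it moves each point up its root path by $\delta$, and its restriction to each edge is continuous. So $\alpha^{\varepsilon'}$ is continuous. Range-Shift is immediate, since $\g(j^\delta(\alpha^\varepsilon(v))) = \f(v)+\varepsilon+\delta = \f(v)+\varepsilon'$.

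\textbf{Ancestor-Shift.} Suppose $\alpha^{\varepsilon'}(v_1)\succeq\alpha^{\varepsilon'}(v_2)$, that is, $\alpha^\varepsilon(i^\delta v_1)\succeq \alpha^\varepsilon(i^\delta v_2)$. Applying the Ancestor-Shift property of $\alpha^\varepsilon$ to the points $i^\delta v_1$ and $i^\delta v_2$ gives $i^{2\varepsilon}(i^\delta v_1)\succeq i^{2\varepsilon}(i^\delta v_2)$. Applying the ancestor-preserving shift $i^{\delta}$ once more gives $i^{2\varepsilon+2\delta}(v_1)\succeq i^{2\varepsilon+2\delta}(v_2)$, which is $i^{2\varepsilon'}(v_1)\succeq i^{2\varepsilon'}(v_2)$. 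This uses $i^a\circ i^b=i^{a+b}$, and that if $x\succeq y$ then $i^c x\succeq i^c y$; both hold because ancestors at a fixed level are unique.

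\textbf{Ancestor-Closeness.} I expect this to be the main step. The image $\Img(\alpha^{\varepsilon'}) = j^\delta(\Img(\alpha^\varepsilon))$ is smaller than $\Img(\alpha^\varepsilon)$, and every point of it has height at least $\delta$ above some point of $\Img(\alpha^\varepsilon)$.

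Take $w\notin\Img(\alpha^{\varepsilon'})$, and let $w^a$ be its nearest ancestor in $\Img(\alpha^{\varepsilon'})$. We must show $\g(w^a)-\g(w)\le 2\varepsilon+2\delta$. Let $u$ be the nearest ancestor of $w$ (possibly $w$ itself) lying in $\Img(\alpha^\varepsilon)$. Since $\varepsilon$-goodness gives Ancestor-Closeness for $\alpha^\varepsilon$, we have $\g(u)\le\g(w)+2\varepsilon$. Write $u=\alpha^\varepsilon(x)$. Then $j^\delta(u)=\alpha^{\varepsilon'}(x)$ lies in $\Img(\alpha^{\varepsilon'})$ and is an ancestor of $w$ at height $\g(u)+\delta$. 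Hence the nearest such ancestor satisfies $\g(w^a)\le \g(u)+\delta\le \g(w)+2\varepsilon+\delta\le\g(w)+2\varepsilon'$.

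The only care needed is that the nearest-ancestor notion is taken along the root path of $w$, which is totally ordered, so "nearest" means lowest. With that, the argument closes with a margin of $\delta$ to spare.
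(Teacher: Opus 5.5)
Your proposal is correct and follows essentially the same route as the paper: the same map $j^{\delta}\circ\alpha^{\varepsilon}$, the same Ancestor-Shift argument through Lemma~\ref{Lemma: ancestor-preservation}, and the same Ancestor-Closeness bound $2\varepsilon+\delta\le 2\varepsilon'$. The only cosmetic difference is that the paper treats $w\notin\Img(\alpha^{\varepsilon})$ and $w\in\Img(\alpha^{\varepsilon})\setminus\Img(\alpha^{\varepsilon'})$ as two separate cases, while you handle both at once by allowing the nearest ancestor $u\in\Img(\alpha^{\varepsilon})$ to be $w$ itself.
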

\begin{proof}
   Let \( \alpha^{\varepsilon} : M_f \to M_g \) be an \( \varepsilon \)-good map. Set \( \gamma = \varepsilon' - \varepsilon\). Define 
   \[
\alpha^{\varepsilon'}(x) = j^\gamma \big( \alpha^\varepsilon(x) \big), \quad \text{ for all } x \in M_f.
\]
 Note that  the map $\alpha^{\varepsilon'}$ is continuous.
 
\textbf{Proof of Range-Shift property.} 

Since $\alpha^\varepsilon$ satisfies Range-Shift property, we have
    \[
    \tilde{g}(\alpha^{\varepsilon'}(x)) =\tilde{g}(j^\gamma(\alpha^{\varepsilon}(x))) =  \tilde{g}(\alpha^{\varepsilon}(x)) + \gamma = \tilde{f}(x) + \varepsilon + \gamma = \tilde{f}(x) + \varepsilon'
    \]
   which shows that Range-Shift property holds for \( \alpha^{\varepsilon'}\). 
   
   \textbf{Proof of Ancestor-Shift property.} 
   
   Suppose \( \alpha^{\varepsilon'}(x_1) \succeq \alpha^{\varepsilon'}(x_2) \) for any \( x_1, x_2 \in M_f \), that is \( j^\gamma(\alpha^\varepsilon(x_1)) \succeq j^\gamma(\alpha^\varepsilon(x_2)) \). By Lemma~\ref{Lemma: ancestor-preservation}, this implies $ \alpha^{\varepsilon}(i^\gamma (x_1)) \succeq  \alpha^{\varepsilon}(i^\gamma (x_2))$. Since $\alpha^\varepsilon$ satisfies Ancestor-Shift property, we have
\begin{align*}
    i^{2\varepsilon}(i^\gamma(x_1)) &\succeq i^{2\varepsilon}(i^\gamma(x_2)), \text{ and hence}\\
    i^{2\varepsilon+\gamma}(x_1) &\succeq  i^{2\varepsilon+\gamma}(x_2).
\end{align*}
 Applying \( i^\gamma \), which is ancestor-preserving, yields 
 \[
i^{2\varepsilon'}(x_1) \succeq i^{2\varepsilon'}(x_2).
\]
This proves the Ancestor-Shift property for $\alpha^{\varepsilon'}$.

 \textbf{Proof of Ancestor-Closeness property.} 
 
 Let \( w \in M_g \setminus \Img(\alpha^{\varepsilon'}) \). We consider the following two cases.

Case 1: $w \in M_g \setminus \Img(\alpha^{\varepsilon})$. 

By the Ancestor-Closeness property of $\alpha^\varepsilon$, the nearest ancestor $w^a \in \Img(\alpha^{\varepsilon})$ of $w$ satisfies $|\tilde{g}(w^{a}) - \tilde{g}(w)| \leq 2\varepsilon$. Take an $x \in M_f$ such that $\alpha^\varepsilon(x) = w^a$. Then 
\[
\alpha^{\varepsilon'}(x) = j^\gamma ( \alpha^\varepsilon(x)) =  j^\gamma (w^a)
\]
is an ancestor of $w$ that belongs to $\Img(\alpha^{\varepsilon'})$. Furthermore, we have 
\begin{align*}
    |\g(\alpha^{\varepsilon'}(x) ) - \g(w)| &\leq |\g(\alpha^{\varepsilon'}(x) ) - \g(w^a)| + |\tilde{g}(w^{a}) - \tilde{g}(w)|\\
    &\leq |\g(j^\gamma(w^a)) - \g(w^a)| + 2\varepsilon\\
    &= \gamma+ 2\varepsilon\\
    &\leq 2\varepsilon',
\end{align*}

which shows that, for the nearest ancestor $w^{a'}$ of $w$ that belongs to $\Img(\alpha^{\varepsilon'})$, we have $|\tilde{g}(w^{a'}) - \tilde{g}(w)| \leq |\tilde{g}(\alpha^{\varepsilon'}(x)) - \tilde{g}(w)| \leq 2 \varepsilon'$. This shows that $\alpha^{\varepsilon'}$ satisfies the Ancestor-Closeness property in Case 1.

Case 2: $w \in \Img(\alpha^{\varepsilon})\setminus \Img(\alpha^{\varepsilon'})$. 

Take an $x\in M_f$ such that $\alpha^\varepsilon(x) = w$. By definition 
\[
\alpha^{\varepsilon'}(x) = j^\gamma (\alpha^\varepsilon(x)) = j^\gamma(w)
\]
 is an ancestor of $w$ that belongs to $\Img(\alpha^{\varepsilon'})$.  
 Furthermore, the ancestor \( j^\gamma(w) \in \Img(\alpha^{\varepsilon'}) \) of \( w \) satisfies, 
 \[
 |\g(j^\gamma(w)) - \g(w)| = \gamma \leq 2\varepsilon'.
 \]
  Note that the nearest ancestor $w^a \in \Img(\alpha^{\varepsilon'})$ of $w$ lies between $w$ and $j^\gamma(w)$. Therefore,
   \[
|\g(w^a)- \g(w)|\leq  |\g(j^\gamma(w)) - \g(w)| \leq 2\varepsilon'.
 \]
   
   Hence, by Cases 1 and 2, Ancestor-Closeness property is satisfied for \( \alpha^{\varepsilon'}\). 
   Therefore, \( \alpha^{\varepsilon'} : M_f \rightarrow M_g \) is an \( \varepsilon' \)-good map.
\end{proof}

 We note, since the exact interleaving distance \( \varepsilon^*\) belongs to $\Pi$ (Lemma~\ref{lemma:candidate values}), it follows  from Lemma~\ref{lemma: monotonicity} that for every $\varepsilon' \in \Pi$ with $\varepsilon' > \varepsilon^*$, there exists an $\varepsilon'$-good  map between $M_f$ and $M_g$. Next section discusses how we construct a continuous map between the merge trees which is required to verify the existence of an $\varepsilon$-good map between them.
 

 \subsection{Constructing a Continuous Map $\varphi$ between Merge Trees}
 \label{subsec: Theory- Constructing a Continuous Map}
 Let \( L_f \) and \( L_g \) denote the sets of leaf nodes of the merge trees \( M_f \) and \( M_g \), respectively. We select an $\varepsilon \in \Pi$ and augment the merge trees $M_f$ and $M_g$ to obtain $\Mf$ and $\Mg$, respectively, as discussed in Section~\ref{subsec: FPT algorithm}. Note that this augmentation does not alter the topology of the merge trees. Consider a function $\phi: L_f \rightarrow \Mg$ such that each leaf node  $u \in L_f$ is mapped to a node at the level $\g^{-1}(\f(u)+\varepsilon)$ of $\Mg$, that is $\phi$ satisfies Range-Shift property. For each leaf node $u_i \in L_f$, we extend the mapping $u_i \mapsto \phi(u_i)$ along the leaf-to-root path $P_i$ starting from $u_i \in \Mf$. This yields an extension $\phi_i : P_i \rightarrow P_i'$, where $P_i'$ is the path from $\phi(u_i)$ to the root of $\Mg$. The extension $\phi_i$ is defined so that it satisfies the Range-Shift property, namely $\g(\phi_i(x)) = \f(x) + \varepsilon$ for any point $x$ in $P_i$. By construction, each map $\phi_i$ is continuous. Now, consider two distinct leaf nodes $u_i$ and $u_j$  and their corresponding maps $\phi_i: P_i \rightarrow P_i'$ and $\phi_j: P_j \rightarrow P_j'$. We define the  map \( \phi_i \cup \phi_j: P_i\cup P_j \rightarrow P_i' \cup P_j' \) as
\begin{equation}
\label{eqn:gluing-2fns}
(\phi_i \cup \phi_j)(x)=\displaystyle\left\{\begin{array}{ll}
                  \phi_i(x), & \text{for } x\in P_i, \\
                  \phi_j(x), & \text{for } x\in P_j
                \end{array}
              \right. 
\end{equation}
provided $\phi_i$ and $\phi_j$ agree on the intersection  $P_i\cap P_j$ which ensures the well-definedness of the map $\phi_i \cup \phi_j$. Since $\phi_i$ and $\phi_j$ are individually continuous  by construction, the following lemma shows that $\phi_i \cup \phi_j$ is continuous whenever $P_i$ and $P_j$ are closed subsets of $P_i \cup P_j$.

\begin{lemma}[Gluing Lemma \cite{armstrong1983basic}]
\label{lem:gluing-lem}
If $P_i$ and $P_j$ are closed in $P_i \cup P_j$ and if $\phi_i$ and $\phi_j$ are continuous, then the map $\phi_i \cup \phi_j$ is continuous.
\end{lemma}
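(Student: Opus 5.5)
The standard route is through closed sets: a map into a topological space is continuous exactly when the preimage of every closed set is closed. Write $X = P_i \cup P_j$ and $\psi = \phi_i \cup \phi_j$. The hypothesis that $\phi_i$ and $\phi_j$ agree on $P_i \cap P_j$ makes $\psi$ well defined. It also gives $\psi|_{P_i} = \phi_i$ and $\psi|_{P_j} = \phi_j$. The plan is to take an arbitrary closed set $C$ in the target $P_i' \cup P_j'$ and show that $\psi^{-1}(C)$ is closed in $X$.

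First, since $\psi$ restricts to $\phi_i$ and $\phi_j$ on the two pieces and $X = P_i \cup P_j$, we have
\[
\psi^{-1}(C) \;=\; \phi_i^{-1}(C \cap P_i') \,\cup\, \phi_j^{-1}(C \cap P_j').
\]
Here $C \cap P_i'$ is closed in $P_i'$ with the subspace topology, and likewise $C \cap P_j'$ is closed in $P_j'$.

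Second, by continuity of $\phi_i \colon P_i \to P_i'$, the set $\phi_i^{-1}(C \cap P_i')$ is closed in $P_i$. By hypothesis $P_i$ is closed in $X$, and a set closed in a closed subspace is closed in the whole space, so $\phi_i^{-1}(C \cap P_i')$ is closed in $X$. The same argument shows $\phi_j^{-1}(C \cap P_j')$ is closed in $X$. A finite union of closed sets is closed, so $\psi^{-1}(C)$ is closed in $X$, and therefore $\psi$ is continuous.

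The only point needing care is the target topology. One must confirm that $\phi_i$, viewed as a map into $P_i'$, is continuous for the subspace topology that $P_i'$ inherits from $P_i' \cup P_j' \subseteq \Mg$, so that intersecting $C$ with $P_i'$ is legitimate. This holds because $P_i'$ is given the subspace topology throughout. In the application, the closedness hypothesis holds automatically: leaf-to-root paths (with the extended ray) are closed subsets of the finite graph $\Mf$. This lets the lemma be applied inductively to glue all the maps $\phi_i$ into a continuous map on $\Mf$.
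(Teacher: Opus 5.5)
Your argument is correct and is exactly the standard closed-set proof of the gluing lemma: $\psi^{-1}(C)=\phi_i^{-1}(C\cap P_i')\cup\phi_j^{-1}(C\cap P_j')$, and each piece is closed in a closed subspace and hence closed in $P_i\cup P_j$. The paper does not prove this lemma itself; it cites Armstrong, whose proof is this same argument.
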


Thus, to guarantee the continuity of $\phi_i \cup \phi_j$, it remains to show that $P_i$ and $P_j$ are closed subsets of $P_i \cup P_j$.

\begin{lemma}
\label{lem:path-closed}
$P_i$ and $P_j$ are closed in $P_i\cup P_j$ with topology induced from the quotient space topology on $\Mf$.
\end{lemma}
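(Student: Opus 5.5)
The plan is to show that each leaf-to-root path $P_i$ is closed in the whole space $\Mf$. Closedness in the subspace $P_i\cup P_j$ then follows at once: if $P_i$ is closed in $\Mf$, then $P_i = P_i\cap (P_i\cup P_j)$ is closed in $P_i\cup P_j$ by the definition of the subspace topology. The same argument applies to $P_j$.

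\textbf{Step 1: topology of $\Mf$.} Since $\X$ is finitely triangulable and $f$ is tame, $\Mf$ is a finite one-dimensional CW complex, that is, a finite graph. The only non-compact piece is the extended ray at the root, which is homeomorphic to $[0,\infty)$. The topology induced by the quotient map $q_f$ agrees with this graph topology, and this is the standard fact we take from the setup of Section~\ref{sec:background}. It follows that $\Mf$ is Hausdorff. It also follows that a subset of $\Mf$ is closed if and only if its intersection with each closed cell (node or closed edge, including the ray) is closed in that cell.

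\textbf{Step 2: $P_i$ is a union of closed cells.} The path $P_i$ starts at the leaf $u_i$ and consists of the finitely many closed edges $e_1,\dots,e_k$ of the unique monotone path from $u_i$ up to the root, together with the infinite root ray $R$. Each closed edge is the image of a compact interval $[0,1]$ under its characteristic map. It is therefore compact, and hence closed in the Hausdorff space $\Mf$.

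The ray $R$ is closed as well. It is the closed cell attached at the root, so its intersection with any other cell is either empty or the root point. Alternatively, $R$ is the set of points of $\Mf$ that are ancestors of the root, which can be written as $\f^{-1}([\f(\mathrm{root}),\infty))$ intersected with the closed edge chain. Since $\f$ is continuous on $\Mf$, this description also gives closedness.

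A finite union of closed sets is closed, so $P_i$ is closed in $\Mf$. An equivalent way to say this: $P_i$ meets each closed cell either in the whole cell, in a single endpoint, or in nothing, and each of these is closed in the cell.

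\textbf{Main obstacle.} The delicate point is justifying that the quotient topology on $\Mf$ coincides with the finite-graph (CW) topology. In particular, we need it to be Hausdorff, and we need the non-compact root ray to be a closed cell. I would rely on the tameness and finite-triangulability assumptions stated in the background, under which this identification is standard (cf.\ \cite{dey_computational_2022}). If that is not available, a fallback is to describe $P_i$ directly via continuous functions. Take the closed set $\{x : \f(x)\ge \f(u_i)\}$ and intersect it with the set of points comparable to $u_i$ under $\succeq$. Continuity of $\f$ and of the shift maps $i^{\delta}$ would then give closedness.
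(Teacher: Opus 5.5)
Your proof is correct. It rests on the same basic fact as the paper's, namely that compact subsets of a Hausdorff space are closed, but it is organized differently.

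The paper realizes $P_i$ as the image of a single path $\gamma:[0,1]\to P_i\cup P_j$ from $u_i$ to the root and concludes that $P_i$ is compact. It then uses that the subspace $P_i\cup P_j$ is Hausdorff to get closedness directly in $P_i\cup P_j$.

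You instead prove the stronger statement that $P_i$ is closed in all of $\Mf$, and then restrict to the subspace. You do this by writing $P_i$ as a finite union of closed edges, each compact and hence closed, together with the root ray.

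The paper's route is shorter, but it needs $P_i$ to be compact. That holds when the root is an actual node at finite height, as in the finite extension produced by \textsc{ExtendMergeTree}. If $P_i$ is taken to include the infinite ray of the extended merge tree (the convention adopted in the background section), then $P_i$ is not compact and the paper's argument does not apply as written. Your edgewise decomposition covers that case too.

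One small correction. The ray is not a closed cell of a CW structure, since closed cells are compact, so your first justification that $R$ is closed does not work. Your second justification is the right one, and it simplifies: $R=\f^{-1}([\f(\mathrm{root}),\infty))$ exactly, with no need to intersect with the edge chain. Closedness then follows from the continuity of $\f$, which holds because $f=\f\circ q_f$ with $q_f$ a quotient map. The fallback you sketch at the end is not needed.
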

\begin{proof}
A path between any two points \( u \) and \( u' \) in \( P_i \cup P_j \) is a continuous function \( \gamma : [0,1] \rightarrow P_i \cup P_j \) such that \( \gamma(0) = u \) and \( \gamma(1) = u' \). Since \( [0,1] \) is compact and the continuous image of a compact set is compact~\cite{armstrong1983basic}, the image \( \gamma([0,1]) \) is compact in \( P_i \cup P_j \). Therefore, each of $P_i$ and $P_j$ is compact in $P_i\cup P_j$. Moreover, since every CW complex is Hausdorff \cite{Book-Hatcher}, the merge tree $\Mf$ is Hausdorff and consequently the subspace \(P_i \cup P_j \) is also Hausdorff. Finally, in a Hausdorff space compact subsets are closed, so both \(P_i\) and \(P_j\) are closed in \(P_i \cup P_j\).
\end{proof}

The following lemma establishes that, in order to prove that the map 
\(\phi_i \cup \phi_j\) is well-defined and continuous, 
it is sufficient to verify that 
 $\phi_i(v) = \phi_j(v)$, where \(v\) denotes the least common ancestor of the leaf nodes \(u_i\) and \(u_j\).

\begin{lemma}
  Given a map $\phi:L_f \rightarrow \widetilde{M}_g$ and two distinct leaf nodes $u_i,u_j \in L_f$, let $\phi_i :P_i \rightarrow P_i'$ and $\phi_j :P_j \rightarrow P_j'$ denote the extensions of the maps $u_i\mapsto \phi(u_i)$ and $u_j\mapsto \phi(u_j)$ along their respective leaf-to-root paths. Let \( v  \) be the least common ancestor of \( u_i \) and \( u_j \) in \( \widetilde{M}_f \), denoted by $\mathrm{LCA}(u_i, u_j)$. Then, the map \( \phi_i \cup \phi_j \), as defined in (\ref{eqn:gluing-2fns}), is well-defined on \( P_i \cup P_j \) if and only if
$\phi_i(v) = \phi_j(v)$.
   Furthermore, \( \phi_i \cup \phi_j \) is continuous under the same condition.
\label{lemma:continuity}
\end{lemma}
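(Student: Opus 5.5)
The plan is to first identify the intersection $P_i \cap P_j$ explicitly, and then use uniqueness of ancestors together with the Range-Shift property. In the rooted tree $\Mf$, the leaf-to-root paths $P_i$ and $P_j$ share exactly the path from $v = \LCA(u_i,u_j)$ to the root. Equivalently, $P_i \cap P_j = \{x \in \Mf : x \succeq v\}$.

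To justify this, note that any common point $x$ of the two paths is an ancestor of both $u_i$ and $u_j$, hence $x \succeq v$ by the definition of the least common ancestor. Conversely, every ancestor of $v$ lies on both leaf-to-root paths, since ancestors of a point form the unique monotone path to the root. In particular $v \in P_i \cap P_j$, so well-definedness of $\phi_i\cup\phi_j$, which means agreement on $P_i\cap P_j$, immediately forces $\phi_i(v)=\phi_j(v)$. This gives the ``only if'' direction.

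For the ``if'' direction, assume $\phi_i(v)=\phi_j(v)=:w$ and take any $x \in P_i\cap P_j$, so $x \succeq v$. I would argue that both $\phi_i(x)$ and $\phi_j(x)$ are ancestors of $w$ at the same $\g$-value.
\begin{itemize}
\item Each extension $\phi_k$ maps the path $P_k$ monotonically onto the path $P_k'$ from $\phi(u_k)$ to the root, and preserves $\succeq$: this is the argument of Lemma~\ref{Lemma: ancestor-preservation} applied to $\phi_k$ on $P_k$.
\item Hence $\phi_i(x)\succeq\phi_i(v)=w$ and $\phi_j(x)\succeq\phi_j(v)=w$.
\item By Range-Shift, $\g(\phi_i(x))=\f(x)+\varepsilon=\g(\phi_j(x))$.
\end{itemize}
Since the ancestor of $w$ at a given $\g$-value is unique (it is $j^{\delta}(w)$ with $\delta=\f(x)-\f(v)$), we get $\phi_i(x)=\phi_j(x)$. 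Thus $\phi_i\cup\phi_j$ is well defined.

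Continuity then follows by combining Lemma~\ref{lem:path-closed}, which says $P_i$ and $P_j$ are closed in $P_i\cup P_j$, with the continuity of $\phi_i$ and $\phi_j$ and the Gluing Lemma~\ref{lem:gluing-lem}.

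The step needing the most care is making rigorous that $\phi_k(x)$ is an ancestor of $\phi_k(v)$ whenever $x\succeq v$ on $P_k$. Lemma~\ref{Lemma: ancestor-preservation} is stated for maps defined on all of $M_f$, but its proof only uses continuity along a monotone path and Range-Shift, so it applies verbatim to $\phi_k$ restricted to the subpath of $P_k$ from $v$ to $x$. Alternatively, this is immediate from the construction, since $\phi_k$ lands in $P_k'$, a monotone path on which $\g$ is strictly increasing.
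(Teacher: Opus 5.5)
Your proposal is correct and follows essentially the same argument as the paper. Necessity comes from $v\in P_i\cap P_j$; sufficiency comes from the uniqueness of the point at $\g$-value $\f(x)+\varepsilon$ on the root path starting at $\phi_i(v)=\phi_j(v)$; and continuity comes from Lemma~\ref{lem:path-closed} plus the Gluing Lemma. Your added justifications, that $P_i\cap P_j$ is exactly the set of ancestors of $v$ and that each $\phi_k$ maps ancestors of $v$ to ancestors of $\phi_k(v)$, are steps the paper uses only tacitly, so your version is a slightly more careful rendering of the same proof.
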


\begin{proof}
Let us show \( \phi_i \cup \phi_j \) is well-defined if and only if
$\phi_i(v) = \phi_j(v)$. 

The necessary part follows trivially:
 since if \( \phi_i(v) \neq \phi_j(v) \), then \( \phi_i \cup \phi_j \) is not well-defined at \( v \in P_i\cap P_j \). So \( \phi_i(v) = \phi_j(v) \) for $v=\mathrm{LCA}(u_i,u_j)$.

To prove the sufficient part: Suppose \( \phi_i(v) = \phi_j(v) \) for $v=\LCA(u_i, u_j)$, and let \( v \) lie at level \( \ell_k \in \mathcal{L}^{(\varepsilon)}_1 \)  in the augmented merge tree \( \widetilde{M}_f \), with function value \textcolor{black}{\( \tilde{f}(v) = h_k \)} (as defined in Section~\ref{subsec: FPT algorithm}). Note that since the paths $P_i$ and $P_j$ in  $\Mf$ are unique, \( P = P_i \cap P_j \) is a unique path from \( v \) to the root of \( \widetilde{M}_f \). Correspondingly, the path \( P' \) from \( \phi_i(v) = \phi_j(v) \) to the root of \( \widetilde{M}_g \) is also unique. 
For any ancestor \( v' \) of \( v \) along the path \( P \), there exists exactly one point on \( P' \) with \( \g \)-value \( \f(v') + \varepsilon \) to which both \( \phi_i \) and \( \phi_j \) can map $v'$ while satisfying the Range-Shift property. Hence, \( \phi_i(v') = \phi_j(v') \) for every ancestor $v'$ of $v$ on $P$. Thus  \( \phi_i \cup \phi_j \) is well-defined.
Furthermore, using Lemma~\ref{lem:path-closed} and \textbf{Gluing Lemma}~\ref{lem:gluing-lem}, the map \( \phi_i \cup \phi_j \) is continuous on \( P_i \cup P_j \). 
\end{proof}

Note that Lemma~\ref{lemma:continuity} can be generalized for a finite family of maps $\phi_i:P_i\rightarrow P_i'$ for $i=1, 2, \ldots, \eta_f$. First we define $\varphi = \phi_1 \cup \phi_2 \cup \cdots \cup \phi_{\eta_f}: \Mf \rightarrow \Mg$ as 
$$(\phi_1 \cup \phi_2 \cup \cdots \cup \phi_{\eta_f})(x)=\displaystyle\left\{\begin{array}{ll}
                  \phi_1(x), & \text{for } x\in P_1, \\
                  \phi_2(x), & \text{for } x\in P_2,\\  \hspace{8pt} \vdots & \hspace{15pt} \vdots\\
                 \phi_{\eta_f}(x), & \text{for } x\in P_{\eta_f}
                \end{array}
              \right.$$
\noindent
provided every two distinct functions $\phi_i$ and $\phi_j$ agree on the intersection $P_i\cap P_j$ ($i, j=1, 2, \ldots, \eta_f$).
The theorem is formalized as follows.


\begin{theorem}
Let \( u_1, \dots, u_{\eta_f} \) be the leaf nodes of the augmented merge tree \( \Mf \). Then the  map  \( \varphi= \phi_1 \cup \phi_2 \cup \cdots \cup \phi_{\eta_f} :\Mf\rightarrow\Mg\) 
 is well-defined if and only if for every pair $(i,j)$ where $i \ne j$ and \( i, j \in \{1, \dots, \eta_f\} \), the maps \( \phi_i \) and \( \phi_j \) agree at the least common ancestor of \( u_i \) and \( u_j \); that is $\phi_i(v) = \phi_j(v)  \text{ for } v = \mathrm{LCA}(u_i, u_j)$. Moreover, $\varphi$ is continuous.
\label{theorem:continuity}
\end{theorem}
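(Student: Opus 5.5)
The plan is to reduce the statement to the two-path case, Lemma~\ref{lemma:continuity}, and then pass from pairwise gluing to a global map by a finite closed-cover version of the Gluing Lemma~\ref{lem:gluing-lem}.

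\textbf{Covering.} First I will check that the paths cover the tree, i.e.\ $\Mf = P_1 \cup \cdots \cup P_{\eta_f}$, so that $\varphi$ is defined everywhere. Take any point $x\in\Mf$, including points on edges and on the infinite root ray. Descend from $x$: at each node choose some child, which is possible because the tree is finite, and continue until a leaf $u_i$ is reached. Then $x$ is an ancestor of $u_i$, and the leaf-to-root path $P_i$ is unique, so $x\in P_i$.

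\textbf{Well-definedness.} $\varphi$ is well-defined exactly when $\phi_i$ and $\phi_j$ agree on $P_i\cap P_j$ for every pair $i\neq j$. By Lemma~\ref{lemma:continuity}, for a fixed pair this agreement is equivalent to $\phi_i(v)=\phi_j(v)$ with $v=\LCA(u_i,u_j)$.
\begin{itemize}
\item \emph{Necessity.} If $\phi_i(v)\neq\phi_j(v)$, then $\varphi(v)$ receives two values, so $\varphi$ is not well-defined.
\item \emph{Sufficiency.} Take any $x$ lying in several paths $P_{i_1},\dots,P_{i_r}$. For each pair of indices, $x\in P_{i_a}\cap P_{i_b}$, and Lemma~\ref{lemma:continuity} gives $\phi_{i_a}(x)=\phi_{i_b}(x)$. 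All candidate values therefore coincide.
\end{itemize}
The only point to note is that a multi-way overlap reduces to pairwise ones: agreement is checked one pair at a time, so no joint condition on three or more paths is needed.

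\textbf{Continuity.} Each $\phi_i$ is continuous by construction. The main step is to show that each $P_i$ is closed in $\Mf$, which is not literally what Lemma~\ref{lem:path-closed} states. I would redo its argument, taking care because $P_i$ contains the unbounded root ray and so is not compact.
\begin{itemize}
\item The cleanest route is to write $P_i$ as a finite union of closed cells of the CW structure: finitely many closed edges together with the closed ray. Each closed cell is closed in a CW complex, so $P_i$ is closed.
\item An alternative is to intersect $P_i$ with the closed sublevel sets $\f^{-1}((-\infty,c])$. Each such intersection is compact, hence closed because $\Mf$ is Hausdorff. Since the cover by these sublevel pieces is locally finite, closedness follows.
\end{itemize}

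\textbf{Finite gluing.} With this in hand, I prove the gluing lemma for a finite closed cover. Let $C\subseteq\Mg$ be closed. Then
\[
\varphi^{-1}(C)=\bigcup_{i=1}^{\eta_f}\phi_i^{-1}(C).
\]
Each $\phi_i^{-1}(C)$ is closed in $P_i$, hence closed in $\Mf$ because $P_i$ is closed. A finite union of closed sets is closed, so $\varphi^{-1}(C)$ is closed and $\varphi$ is continuous.

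Alternatively, one can induct on the number of paths, applying Lemma~\ref{lemma:continuity} and Lemma~\ref{lem:gluing-lem} to $(P_1\cup\cdots\cup P_k)$ and $P_{k+1}$. Closedness of the union of the first $k$ paths follows the same way.
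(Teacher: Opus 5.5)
Your proof is correct. It rests on the same two ingredients as the paper's: Lemma~\ref{lemma:continuity}, which turns agreement at $\LCA(u_i,u_j)$ into agreement on all of $P_i\cap P_j$, and the Gluing Lemma applied to the closed pieces $P_i$. The execution differs in three ways.

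\begin{enumerate}
\item The paper inducts on the number of paths. It glues $\phi_1\cup\phi_2$, then argues by cases on where $P_3$ branches off $P_1\cup P_2$ to show that $(P_1\cup P_2)\cap P_3$ lies in $P_1$ or $P_2$. It then applies the two-set Gluing Lemma and says the process repeats inductively. You instead note that a point lying in several paths lies in each pairwise intersection. This makes the branching case analysis unnecessary, since no joint condition on three or more paths is needed.
\item You glue the whole finite closed cover at once via $\varphi^{-1}(C)=\bigcup_i\phi_i^{-1}(C)$, rather than iterating the two-set lemma.
\item You check explicitly that the $P_i$ cover $\Mf$, which the paper uses silently.
\end{enumerate}

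You also avoid relying on compactness of the paths. The paper obtains closedness from Lemma~\ref{lem:path-closed} (compact in a Hausdorff space), and that fails if $P_i$ is taken to include the infinite ray of the extended tree. In the setting the algorithm actually uses, \textsc{ExtendMergeTree} makes the root a finite node, so each $P_i$ is a finite union of closed edges and the paper's argument is fine there. Your version is shorter and works in either reading; the paper's inductive form mirrors how the algorithm assembles $\varphi$ path by path.

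Two small fixes to your closedness alternatives:
\begin{itemize}
\item The closed ray is not itself a closed cell, because it is not compact. Either subdivide the ray so that $P_i$ becomes a closed subcomplex, or note directly that the ray (together with the top node $r$) is $\f^{-1}([\f(r),\infty))$, which is closed.
\item The nested family $\f^{-1}((-\infty,c])$ is not a locally finite cover, since every point lies in infinitely many members. Use the open sets $U_c=\f^{-1}((-\infty,c))$ instead. Each $P_i\cap U_c$ is relatively closed in $U_c$, because $P_i\cap\f^{-1}((-\infty,c])$ is compact and hence closed. Closedness is local with respect to an open cover, so $P_i$ is closed.
\end{itemize}

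Finally, in your covering step, finiteness of the tree is what guarantees the downward descent terminates at a leaf; choosing a child at a non-leaf node is always possible regardless.
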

\begin{proof} To prove the necessary part, suppose $\phi_i(v) \ne \phi_j(v)$  for $v = \LCA(u_i,u_j)$ of some pair of leaf nodes $u_i$ and $u_j$. Then $\varphi$ is not well-defined at $v$. Hence  $\phi_i(v) = \phi_j(v)$ for all pairs of leaf nodes $u_i$ and $u_j$, where $v=\LCA(u_i,u_j)$.

For the converse, assume that for every pair $(i,j)$ where $i \ne j$ and \( i, j \in \{1, \dots, \eta_f\} \), we have \( \phi_i(v) = \phi_j(v) \), where \( v = \mathrm{LCA}(u_i, u_j) \). First, consider \( \phi_1 \cup \phi_2 \), which is well-defined on $P_1 \cap P_2$ by Lemma~\ref{lemma:continuity}. We extend this to the map \( \varphi \) by induction. Now consider  \( (\phi_1 \cup \phi_2) \cup \phi_3 \). To check whether $\varphi$ is well-defined, we need to verify that
\[
\phi_1 \cup \phi_2 = \phi_3 \quad \text{on} \quad (P_1 \cup P_2) \cap P_3.
\]

Let $P$ denote $(P_1 \cup P_2) \cap P_3$. The branching of \( P_3 \) can occur at a node that lies either on \( P_1 \) or on \( P_2 \). If the branching begins at a node on \( P_2 \), as illustrated in Figure~\ref{fig:InP2}, then the path \( P \) is entirely contained in \( P_2 \), and thus $P = P_2 \cap P_3$. In this case, \( \phi_1 \cup \phi_2 = \phi_2 \) on \( P \), and since \( \phi_2 = \phi_3 \) on \( P_2 \cap P_3 \), it follows that \( \phi_1 \cup \phi_2 = \phi_3 \) on \( P \). A symmetric argument applies if the branching occurs along \( P_1 \), in which case \( P \subseteq P_1 \) and the same conclusion holds. If the branching begins at a node in $P_1 \cap P_2$ as shown in Figure~\ref{fig:InP1AndP2}, then since $\phi_1 = \phi_2$ on $P_1 \cap P_2$, both earlier cases apply simultaneously. Therefore, \( \phi_1 \cup \phi_2 \cup \phi_3 \) is well-defined on \( P_1 \cup P_2 \cup P_3 \).
 Moreover, since the finite union of closed sets is closed, both \( P_1 \cup P_2 \) and \( P_3 \) are closed in \( P_1 \cup P_2 \cup P_3 \), and since \( \phi_1 \cup \phi_2 \) and \( \phi_3 \) are continuous, it follows from the Gluing Lemma~\cite{armstrong1983basic} that \( \phi_1 \cup \phi_2 \cup \phi_3 \) is continuous on \( P_1 \cup P_2 \cup P_3 \). By repeating this process inductively for all \( i \in \{1, 2, \dots, \eta_f\} \), we obtain the map $\varphi = \phi_1 \cup \phi_2 \cup \cdots \cup \phi_{\eta_f}$
that is well-defined, since we see \( \phi_1 \cup \phi_2 \cup \cdots \cup \phi_{\eta_f - 1} = \phi_{\eta_f} \) on  \( (P_1 \cup P_2 \cup \cdots \cup P_{\eta_f-1}) \cap P_{\eta_f} \)  using the pairwise agreement at the least common ancestors. Moreover, the map \( \varphi \) is continuous, since both \( \phi_1 \cup \phi_2 \cup \cdots \cup \phi_{\eta_f - 1} \) and \( \phi_{\eta_f}\) are continuous, and \( P_1 \cup P_2 \cup \cdots \cup P_{\eta_f - 1} \) and \( P_{\eta_f} \) are closed in \( P_1 \cup P_2 \cup \cdots  \cup P_{\eta_f} \), as finite unions of compact sets remain compact. Thus, by the Gluing Lemma~\cite{armstrong1983basic}, the union \( \varphi = \phi_1 \cup \phi_2 \cup \cdots \cup \phi_{\eta_f} \) is continuous on \( \widetilde{M}_f \).
\end{proof}

 \begin{figure}[htbp]
    \centering
    \begin{minipage}{7cm}
        \centering
        \includegraphics[width=4cm]{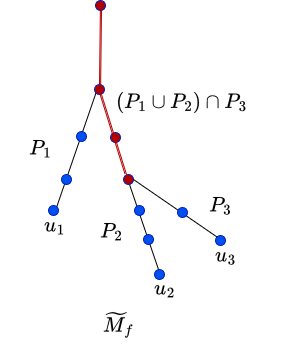}
\caption{Given an augmented merge tree $\widetilde{M}_f$, let $u_1,u_2$ and $u_3$ be the leaf nodes and $P_1,P_2$ and $P_3$ denote the corresponding leaf-to-root paths. $(P_1 \cup P_2) \cap P_3$ is coloured in red and it is contained in $P_2$.}

        \label{fig:InP2}
    \end{minipage}
    \hspace{2cm}
    \begin{minipage}{7cm}
      
        \centering
        \includegraphics[width=4cm]{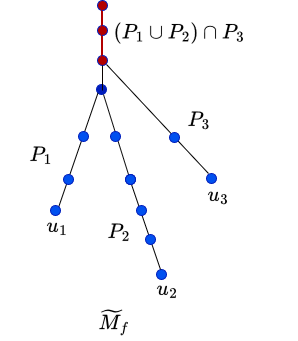}
        \caption{Given an augmented merge tree $\widetilde{M}_f$, let $u_1,u_2$ and $u_3$ be the leaf nodes and $P_1,P_2$ and $P_3$ denote the corresponding leaf-to-root paths. $(P_1 \cup P_2) \cap P_3$ is coloured in red and it is contained in $P_1$ and $P_2$.}
       \label{fig:InP1AndP2}
    \end{minipage}
  
\end{figure}

This result establishes the continuity of the map \( \varphi \)  on \( \widetilde{M}_f \) and the Range-Shift property holds by its construction.  It remains to check under what conditions $\varphi$ satisfies the Ancestor-Shift and Ancestor-Closeness properties of  Definition~\ref{def:epsilon-goodmap}.

\subsection{Ancestor-Shift Property of $\varphi$}
\label{subsec:thm-ancestor-shift}
In this section, we present a criterion for verifying that the constructed continuous map 
$\varphi : \widetilde{M}_f \to \widetilde{M}_g$ satisfies the Ancestor-Shift property. 
We begin by stating a necessary and sufficient condition for $\varphi$ to satisfy the Ancestor-Shift property, given in the following lemma.

\begin{lemma}
\label{lem:ancestor-shift-condition}
Let $\varphi : \widetilde{M}_f \to \widetilde{M}_g$ be the continuous map constructed in Theorem~\ref{theorem:continuity}. Then $\varphi$ satisfies the Ancestor-Shift property if and only if  
\begin{equation}
\label{eqn:ancestor-shift-condition}
\text{for any two points } x, y \in \widetilde{M}_f \text{ with } \f(x)=\f(y) \text{ and } \varphi(x) = \varphi(y) 
, \text{ we have}\;\;
i^{2\varepsilon}(x) = i^{2\varepsilon}(y).
\end{equation}
\end{lemma}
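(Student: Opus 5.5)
The plan is to prove the two directions separately. Necessity is immediate; sufficiency reduces the general Ancestor-Shift hypothesis to the equal-level situation, using the ancestor-preservation identity of Lemma~\ref{Lemma: ancestor-preservation}. Throughout, $\varphi$ is continuous and satisfies Range-Shift by Theorem~\ref{theorem:continuity}, so Lemma~\ref{Lemma: ancestor-preservation} applies and gives $\varphi\circ i^\delta = j^\delta\circ\varphi$ for every $\delta>0$.

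\emph{Necessity.} Assume $\varphi$ satisfies Ancestor-Shift, and let $x,y$ satisfy $\f(x)=\f(y)$ and $\varphi(x)=\varphi(y)$. Then $\varphi(x)\succeq\varphi(y)$ and $\varphi(y)\succeq\varphi(x)$ both hold, since $\succeq$ is reflexive. Ancestor-Shift gives $i^{2\varepsilon}(x)\succeq i^{2\varepsilon}(y)$ and $i^{2\varepsilon}(y)\succeq i^{2\varepsilon}(x)$. Antisymmetry of the ancestor relation in a tree then yields $i^{2\varepsilon}(x)=i^{2\varepsilon}(y)$. Alternatively, both points lie at the common level $\f(x)+2\varepsilon$, and two comparable points at the same level must coincide.

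\emph{Sufficiency.} Assume condition~(\ref{eqn:ancestor-shift-condition}), and let $v_1,v_2\in\Mf$ with $\varphi(v_1)\succeq\varphi(v_2)$.
\begin{enumerate}[(1)]
\item By Range-Shift, $\f(v_1)=\g(\varphi(v_1))-\varepsilon\ge\f(v_2)$. Set $\delta=\f(v_1)-\f(v_2)\ge 0$ and $v_2'=i^\delta(v_2)$, with the convention $i^0=\mathrm{id}$, so that $\f(v_2')=\f(v_1)$.
\item By Lemma~\ref{Lemma: ancestor-preservation}, $\varphi(v_2')=j^\delta(\varphi(v_2))$. This is the unique ancestor of $\varphi(v_2)$ at level $\f(v_1)+\varepsilon$. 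Since $\varphi(v_1)$ is also an ancestor of $\varphi(v_2)$ at that same level, $\varphi(v_1)=\varphi(v_2')$.
\item Applying condition~(\ref{eqn:ancestor-shift-condition}) to $v_1$ and $v_2'$ gives $i^{2\varepsilon}(v_1)=i^{2\varepsilon}(v_2')=i^{2\varepsilon+\delta}(v_2)$.
\item The point $i^{2\varepsilon+\delta}(v_2)$ is an ancestor of $i^{2\varepsilon}(v_2)$, because the composition $i^\delta\circ i^{2\varepsilon}$ moves up the leaf-to-root path. Hence $i^{2\varepsilon}(v_1)\succeq i^{2\varepsilon}(v_2)$, which is the Ancestor-Shift property.
\end{enumerate}

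The main obstacle, though a mild one, is step (2): one must use that ancestors at a given level are unique, since this is what collapses the inequality $\varphi(v_1)\succeq\varphi(v_2)$ into the equality needed to invoke the hypothesis. Everything else is bookkeeping with the shift maps, namely $i^a\circ i^b=i^{a+b}$ and the fact that each $i^a$ lands on an ancestor.
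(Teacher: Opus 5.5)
Your proof is correct and follows the paper's argument. Necessity uses reflexivity and antisymmetry of $\succeq$. Sufficiency lifts $v_2$ to its ancestor $v_2'$ at level $\f(v_1)$, identifies $\varphi(v_1)=\varphi(v_2')$ by uniqueness of ancestors at a given level, applies the hypothesis, and concludes with $i^{2\varepsilon}(v_2')\succeq i^{2\varepsilon}(v_2)$. The only cosmetic difference is that you justify $\varphi(v_2')=j^{\delta}(\varphi(v_2))$ by citing Lemma~\ref{Lemma: ancestor-preservation}, whereas the paper argues directly that $\varphi$ maps the monotone path from $v_2$ to the root onto the unique monotone path from $\varphi(v_2)$ to the root.
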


\begin{proof}
Suppose that $\varphi$ satisfies the Ancestor-Shift property. For two points 
$x, y \in \widetilde{M}_f$ with $\f(x) = \f(y)$ and $\varphi(x) = \varphi(y)$,
we have both $\varphi(x) \succeq \varphi(y)$ and $\varphi(y) \succeq \varphi(x)$. 
By the Ancestor-Shift property this implies 
$i^{2\varepsilon}(x) \succeq i^{2\varepsilon}(y)$ and 
$i^{2\varepsilon}(y) \succeq i^{2\varepsilon}(x)$, hence 
$i^{2\varepsilon}(x) = i^{2\varepsilon}(y)$. 
Therefore, condition~\eqref{eqn:ancestor-shift-condition} is necessary.

Conversely, suppose condition~\eqref{eqn:ancestor-shift-condition} holds. 
If $\varphi(x') \succeq \varphi(y')$ for some $x', y' \in \widetilde{M}_f$, 
then there exists an ancestor $y''$ of $y'$ with $\f(y'') = \f(x')$ such that 
$\varphi(x') = \varphi(y'')$. This is because $\varphi$ maps the monotone path from $y'$ to the root of $\Mf$ to that from $\varphi(y')$ to the root of $\Mg$, and the monotone path from $\varphi(y')$ to the root of $\Mg$ is unique. By condition~\eqref{eqn:ancestor-shift-condition}, 
we have $i^{2\varepsilon}(x') = i^{2\varepsilon}(y'')$. 
Since $i^{2\varepsilon}(y'') \succeq i^{2\varepsilon}(y')$, 
it follows that $i^{2\varepsilon}(x') \succeq i^{2\varepsilon}(y')$. 
Hence, $\varphi$ satisfies the Ancestor-Shift property.
\end{proof}

\textcolor{black}{The condition for the Ancestor-Shift property given in Lemma~\ref{lem:ancestor-shift-condition} 
can be further simplified by considering the leaf nodes of $\widetilde{M}_f$, 
which yields a more easily verifiable criterion for the constructed map 
$\varphi : \widetilde{M}_f \to \widetilde{M}_g$, as stated in the following theorem.}

\begin{theorem}
\label{thm:ancestor-shift-condition}
 Let $\varphi : \widetilde{M}_f \to \widetilde{M}_g$ be the continuous map constructed in 
Theorem~\ref{theorem:continuity}.  
For any pair of distinct leaf nodes $(u_i, u_j)$ in $\widetilde{M}_f$, let 
$v = \mathrm{LCA}(u_i, u_j)$ denote their least common ancestor, and let 
$P_i$ and $P_j$ be the leaf-to-root paths from $u_i$ and $u_j$, respectively.  
Traversing downward from $v$ along $P_i$ and $P_j$, let $u_r \in P_i$ and 
$u_s \in P_j$ be the first pair of nodes such that:  
\begin{enumerate}[(i)]
    \item $\tilde{f}(u_r) = \tilde{f}(u_s)$, and
    \item $\tilde{f}(v) - \tilde{f}(u_r) > 2\varepsilon$.
\end{enumerate}
If exists, such \textcolor{black}{a pair} of nodes $(u_r, u_s)$ will be referred to as $2\varepsilon$-pair corresponding to the pair of leaf nodes $(u_i, u_j)$. Then the map $\varphi$ satisfies the Ancestor-Shift property of 
Definition~\ref{def:epsilon-goodmap} if and only if
\[
   \varphi(u_r) \neq \varphi(u_s)
\]
for every pair of distinct leaf nodes $(u_i, u_j)$ in $\widetilde{M}_f$. Here, if for a given leaf pair $(u_i, u_j)$ either $u_r \in P_i$ or 
$u_s \in P_j$ (or both) does not exist, then we assume the condition 
$\varphi(u_r) \neq \varphi(u_s)$ is satisfied \textcolor{black}{formally}.

   \label{theorem:Condition (b)}
\end{theorem}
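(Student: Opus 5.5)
The plan is to reduce everything to the pointwise criterion of Lemma~\ref{lem:ancestor-shift-condition} and then show that a violation at an arbitrary pair of points forces a violation at a $2\varepsilon$-pair. We start with an elementary reformulation. For $x,y\in\Mf$ with $\f(x)=\f(y)$, we have $i^{2\varepsilon}(x)=i^{2\varepsilon}(y)$ if and only if $\f(\LCA(x,y))-\f(x)\le 2\varepsilon$. The reason is that the ancestors of $x$ and $y$ coincide exactly from level $\f(\LCA(x,y))$ upward. So, by Lemma~\ref{lem:ancestor-shift-condition}, $\varphi$ fails Ancestor-Shift if and only if there exist $x\neq y$ with $\f(x)=\f(y)$, $\varphi(x)=\varphi(y)$ and $\f(\LCA(x,y))-\f(x)>2\varepsilon$.

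\emph{Necessity.} Suppose some leaf pair $(u_i,u_j)$ has a $2\varepsilon$-pair $(u_r,u_s)$ with $\varphi(u_r)=\varphi(u_s)$. Since $u_r\in P_i$ and $u_s\in P_j$ lie strictly below $v=\LCA(u_i,u_j)$ on different branches, we get $\LCA(u_r,u_s)=v$. Then $\f(v)-\f(u_r)>2\varepsilon$ gives $i^{2\varepsilon}(u_r)\neq i^{2\varepsilon}(u_s)$, which violates \eqref{eqn:ancestor-shift-condition}. Hence Ancestor-Shift forces $\varphi(u_r)\neq\varphi(u_s)$.

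\emph{Sufficiency.} We argue by contrapositive. Take a violating pair $x,y$ as above and put $v=\LCA(x,y)$. Choose leaves $u_i$ below $x$ and $u_j$ below $y$, so that $x\in P_i$, $y\in P_j$ and $\LCA(u_i,u_j)=v$. The pair $x,y$ witnesses that nodes strictly more than $2\varepsilon$ below $v$ exist on both paths. Since levels are shared across the augmented tree, every node level on $P_i$ below $v$ is also a node level on $P_j$, so the $2\varepsilon$-pair $(u_r,u_s)$ exists. It sits at the highest level $h_k\in\mathcal{L}^{(\varepsilon)}_1$ with $h_k<\f(v)-2\varepsilon$, and it satisfies $\f(x)\le h_k$ or $x$ lies on the open edge just above $u_r$.

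\begin{enumerate}[(1)]
\item If $\f(x)\le h_k$, then $u_r$ and $u_s$ are the ancestors of $x$ and $y$ at level $h_k$. Lemma~\ref{Lemma: ancestor-preservation} gives $\varphi(u_r)=j^{\delta}(\varphi(x))=j^{\delta}(\varphi(y))=\varphi(u_s)$ with $\delta=h_k-\f(x)$.
\item Otherwise $h_k<\f(x)<h_{k+1}$. Here we look downward instead. By Range-Shift, $\varphi(u_r)$ and $\varphi(u_s)$ are both descendants of $\varphi(x)=\varphi(y)$ at $\g$-level $h_k+\varepsilon$, and $\varphi(x)$ lies strictly between the consecutive levels $\hat h_k$ and $\hat h_{k+1}$. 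Every node of $\Mg$ lies on some level of $\mathcal{L}^{(\varepsilon)}_2$, so $\varphi(x)$ is an interior point of a single edge whose lower endpoint is at level $\hat h_k$. This makes the descendant of $\varphi(x)$ at that level unique, and hence $\varphi(u_r)=\varphi(u_s)$.
\end{enumerate}

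In both cases the $2\varepsilon$-pair of $(u_i,u_j)$ violates the stated condition, which completes the equivalence.

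The main obstacle is case (2): a violating pair may sit at a non-node level, and ancestor preservation only propagates equality upward, not downward. That case is handled by the uniqueness-of-descendant argument, which relies on the level correspondence of the augmentation $\hat h_k=h_k+\varepsilon$. A smaller point to check carefully is that $\LCA(u_r,u_s)=v$, so that condition (ii) of the $2\varepsilon$-pair is exactly the statement that the two $2\varepsilon$-shifts differ.
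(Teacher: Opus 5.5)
Your proof is correct and follows essentially the same route as the paper. The necessity direction is identical, and your contrapositive case split for sufficiency mirrors the paper's Assertions~2 and~1: identifications at or below the $2\varepsilon$-pair level propagate upward (you use Lemma~\ref{Lemma: ancestor-preservation} where the paper uses its loop argument), and identifications strictly between $\ell_k$ and $\ell_{k+1}$ are impossible because $\Mg$ has no branching strictly between $\hat\ell_k$ and $\hat\ell_{k+1}$. One small wording fix: node levels of $P_i$ below $v$ are shared with $P_j$ only down to $\max(\f(u_i),\f(u_j))$, but this value is at most $\f(x)<\f(v)-2\varepsilon$, so the $2\varepsilon$-pair still exists at the highest level $h_k<\f(v)-2\varepsilon$.
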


\begin{proof}
First, assume that the continuous map 
$\varphi: \widetilde{M}_f \rightarrow \widetilde{M}_g$ 
satisfies the Ancestor-Shift property of Definition~\ref{def:epsilon-goodmap}. 
Suppose, for contradiction, that $\varphi(u_r) = \varphi(u_s)$ for the \textcolor{black}{$2\varepsilon$-pair} $(u_r, u_s)$ 
of nodes corresponding to the leaf nodes $(u_i, u_j)$ with $v = \LCA(u_i, u_j)$ in $\widetilde{M}_f$. 
Since $\tilde{f}(v) - \tilde{f}(u_r) > 2\varepsilon$ (by condition (ii)), 
the nodes $u_r$ and $u_s$ cannot share a common ancestor at level 
$\tilde{f}(u_r) + 2\varepsilon$; that is, $i^{2\varepsilon}(u_r) \neq i^{2\varepsilon}(u_s)$. 
This contradicts the Ancestor-Shift property by Lemma~\ref{lem:ancestor-shift-condition}. 
Therefore, $\varphi(u_r) \neq \varphi(u_s)$, and the condition is necessary.

For the reverse direction, suppose that for every pair of distinct leaf nodes 
$(u_i, u_j)$ with $v = \mathrm{LCA}(u_i, u_j)$, the corresponding $2\varepsilon$-pair 
$(u_r, u_s)$ at level $\ell_k \in \mathcal{L}^{(\varepsilon)}_1$ satisfies 
$\varphi(u_r) \neq \varphi(u_s)$. 
We claim that $\varphi(x) = \varphi(y)$ cannot occur for any pair of ancestors 
$x, y$ of $u_r, u_s$, respectively, lying strictly between the levels $\ell_k$ and $\ell_{k+1}$ 
with equal $\tilde{f}$-values, nor for any pair of descendants $x, y$ of $u_r, u_s$, respectively, 
with equal $\tilde{f}$-values. 
This will be shown through the following two assertions.

\begin{assertion}
\label{assertion-1}
\( \varphi(x) \neq \varphi(y) \) for any ancestors  $x$  of  $u_r$ and $y$  of  $u_s$ 
 with  $\f(x) = \f(y)$, lying strictly between the levels \( \ell_k \) and \( \ell_{k+1} \).
\end{assertion}
    \begin{proof}
    Note that \textcolor{black}{$\tilde{f}(v) - h_{k+1} \leq 2\varepsilon$}, where 
\textcolor{black}{$\tilde{f}^{-1}(h_{k+1}) = \ell_{k+1}$}. 
Since there is no level in $\mathcal{L}^{(\varepsilon)}_1$ between $\ell_k$ and $\ell_{k+1}$, 
it follows that $\varphi(x)$ and $\varphi(y)$ must lie between the corresponding levels 
$\hat{\ell}_k$ and $\hat{\ell}_{k+1}$ in $\mathcal{L}^{(\varepsilon)}_2$, 
with no level of $\mathcal{L}^{(\varepsilon)}_2$ in between. 
Suppose, for contradiction, that $\varphi(x) = \varphi(y)$. 
Then this common point must be an ancestor of both $\varphi(u_r)$ and $\varphi(u_s)$, 
and hence the least common ancestor $v'$ of $\varphi(u_r)$ and $\varphi(u_s)$ has $\varphi(x) = \varphi(y)$ as an ancestor and $v'$ must be a node which has degree at least three in $\widetilde{M}_g$. 
Therefore $v'$ corresponds to a critical level, 
contradicting the assumption that there is no level of $\mathcal{L}^{(\varepsilon)}_2$ 
between $\hat{\ell}_k$ and $\hat{\ell}_{k+1}$. 
Therefore, $\varphi(x) \neq \varphi(y)$ for all ancestors $x$ and $y$ of $u_r$ and $u_s$, 
respectively, that share the same $\tilde{f}$-value and lie strictly between the levels 
$\ell_k$ and $\ell_{k+1}$.
\end{proof}

\begin{assertion}
\label{assertion-2}
    $\varphi(x) \neq \varphi(y)$ for any descendants $x$ of $u_r$ and $y$ of $u_s$ with $\tilde{f}(x) = \tilde{f}(y)$.
\end{assertion}
\begin{proof}
     Suppose, for contradiction, that $\varphi(x) = \varphi(y)$ for some descendants 
    $x$ of $u_r$ and $y$ of $u_s$ with $\tilde{f}(x) = \tilde{f}(y)$. 
    Since $\varphi$ is continuous and $\varphi(u_r) \neq \varphi(u_s)$, 
    the points $\varphi(x) = \varphi(y)$, $\varphi(u_r)$, $\varphi(u_s)$, and $\varphi(v)$ 
    together imply the existence of a loop in the merge tree $\widetilde{M}_g$, 
    which is impossible. 
    This contradiction, illustrated in Figure~\ref{fig:PairsExceeding2Epsilon}, 
    shows that $\varphi(x) \neq \varphi(y)$ for any such pair of descendants 
    with equal $\tilde{f}$-value.
\end{proof}

\textcolor{black}{In order to show that $\varphi$ satisfies the Ancestor-Shift property by using Lemma~\ref{lem:ancestor-shift-condition}, suppose that for $x, y \in \Mf$ with $\f(x) = \f(y)$, we have $\varphi(x) = \varphi(y)$.  If $x = y$, then we have $i^{2 \varepsilon}(x) = i^{2 \varepsilon}(y)$ trivially. Suppose $x \neq y$. Let $u_i$ and $u_j$ be the leaf nodes of $\Mf$ that are descendants of $x$ and $y$, respectively. If $u_i = u_j$, then, by the uniqueness of the leaf-to-root path from $u_i = u_j$, we must have $x = y$. which contradicts our assumption. Hence, $u_i \neq u_j$.} If for the pair $(u_i, u_j)$, a corresponding $2\varepsilon$-pair $(u_r, u_s)$ exists, 
then by Assertions~\ref{assertion-1} and~\ref{assertion-2}, the map $\varphi$ does not identify any ancestors of $u_r$ and $u_s$ 
lying between the levels $\ell_k$ and $\ell_{k+1}$ that share the same $\tilde{f}$-value, 
nor any of their descendants with the same $\tilde{f}$-value. 
Consequently, 
\textcolor{black}{$x$ and $y$ }must lie between the levels $\ell_{k+1}$ and $v$, implying $\tilde{f}(v) - \tilde{f}(x) \leq 2\varepsilon$. 
Since the path from $v$ to the root of $\widetilde{M}_f$ is unique, there exists an ancestor $v' \succeq v$ such that $\tilde{f}(v') - \tilde{f}(x) = 2\varepsilon$. 
As $v'$ is the unique common ancestor of both $x$ and $y$ satisfying 
$\tilde{f}(v') - \tilde{f}(x) = \tilde{f}(v') - \tilde{f}(y) = 2\varepsilon$, it follows that $i^{2\varepsilon}(x) = i^{2\varepsilon}(y)$. 
If either $u_r$ or $u_s$ (say, $u_r$) does not exist, then $\tilde{f}(v) - \tilde{f}(u_i) \leq 2\varepsilon$. 
Consequently, $\tilde{f}(v) - \tilde{f}(x) = \tilde{f}(v) - \tilde{f}(y) \leq 2\varepsilon$, 
and thus $i^{2\varepsilon}(x) = i^{2\varepsilon}(y)$ holds similarly as before. \textcolor{black}{Therefore, by Lemma~\ref{lem:ancestor-shift-condition}, the map $\varphi$ satisfies the Ancestor-Shift property.}
\end{proof}

\begin{figure}[h]
    \centering
    \includegraphics[scale=0.40]{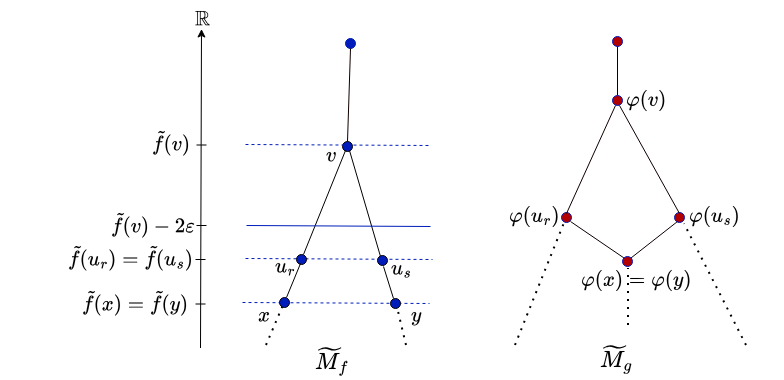}
    \caption{ Given two merge trees $\widetilde{M}_f$ and $\widetilde{M}_g$, let $v$ denote the least common ancestor of two leaf nodes in $\widetilde{M}_f$. Suppose $(u_r,u_s)$ is the pair corresponding to $v$ and let $x$ be a descendant of $u_r$ and $y$ be a descendant of $u_s$. If the map $\varphi : \widetilde{M}_f \rightarrow \widetilde{M}_g$ is continuous and satisfy $\varphi(x) = \varphi(y)$ and $\varphi(u_r) \ne \varphi(u_s)$, then $\varphi(v)$, $\varphi(u_r) $, $\varphi(u_s)$ and $\varphi(x) = \varphi(y)$ create a loop in $\widetilde{M}_g$.  }
    \label{fig:PairsExceeding2Epsilon}
\end{figure}

The continuous map $\varphi:\widetilde{M}_f \rightarrow \widetilde{M}_g$ satisfies the Range-Shift property and Ancestor-Shift property, and it remains to check the Ancestor-Closeness property for the map $\varphi$ to be an $\varepsilon$-good map.

\subsection{Ancestor-Closeness Property of $\varphi$} 
\label{subsec: Theory- Ancestor-Closeness}
In this section, to verify the Ancestor-Closeness property, we show that it suffices to consider 
the set $L_g$ of all leaf nodes of $\widetilde{M}_g$. 
For any leaf node $w \in L_g \setminus \mathrm{Im}(\varphi)$, 
let $w^a$ denote the nearest ancestor of $w$ that lies in $\mathrm{Im}(\varphi)$. 
We then need to check that $|\tilde{g}(w^a) - \tilde{g}(w)| \leq 2\varepsilon$. 
Since leaf nodes are extremal in the tree structure and $\varphi$ is monotone by construction, 
verifying the Ancestor-Closeness property only for the leaf nodes 
$w \in L_g \setminus \mathrm{Im}(\varphi)$ is sufficient to guarantee that 
the property holds for all other points in $\widetilde{M}_g \setminus \mathrm{Im}(\varphi)$. 
We formalize this observation in the following lemma.

\begin{lemma}
 Let $\varphi:\widetilde{M}_f \rightarrow \widetilde{M}_g$ be the continuous map in Theorem~\ref{theorem:continuity}. If $\varphi$ satisfies Ancestor-Closeness property for all leaf nodes $w \in L_g\setminus \operatorname{Im}(\varphi)$, then $\varphi$ satisfies the Ancestor-Closeness property for all point $x \in \Mg \setminus \operatorname{Im}(\varphi)$. 
\label{lemma:ancestor-closeness}
\end{lemma}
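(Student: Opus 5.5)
Take an arbitrary point $x \in \Mg \setminus \Img(\varphi)$ and reduce it to a leaf of $\Mg$ lying below it. Let $x^a$ be the nearest ancestor of $x$ in $\Img(\varphi)$; it exists because the root ray of $\Mg$ lies in the image, since $\varphi$ maps the root ray of $\Mf$ onto it by Range-Shift. We must show $\g(x^a) - \g(x) \le 2\varepsilon$.

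\textbf{Step 1 (the image is closed upward).} If $y \in \Img(\varphi)$, say $y = \varphi(z)$, then every ancestor of $y$ is also in the image. Indeed, by Lemma~\ref{Lemma: ancestor-preservation}, $\varphi$ carries the leaf-to-root path through $z$ onto the monotone path from $y$ to the root. Since $\g$ is surjective onto $[\g(y),\infty)$ along this path, this path is the unique ancestor chain of $y$. It follows that $\Img(\varphi)$ is an up-closed subset of $\Mg$.

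\textbf{Step 2 (descend to a leaf).} Choose a leaf $w \in L_g$ with $x \succeq w$; one exists by following any downward monotone path from $x$. By Step 1, no point of the segment from $w$ to $x$ lies in $\Img(\varphi)$, since otherwise $x$ itself would be in the image. In particular $w \notin \Img(\varphi)$. Moreover, the nearest ancestor $w^a$ of $w$ in the image coincides with $x^a$. The ancestor chain of $w$ passes through $x$, no point strictly between $w$ and $x$ (nor $x$ itself) is in the image, and above $x$ the first image point is $x^a$.

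\textbf{Step 3 (conclude).} By hypothesis, $\g(w^a) - \g(w) \le 2\varepsilon$. Since $\g(x) \ge \g(w)$ and $w^a = x^a$, we get
\[
0 \le \g(x^a) - \g(x) \le \g(w^a) - \g(w) \le 2\varepsilon,
\]
which is the Ancestor-Closeness property at $x$.

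The main care point is Step 1, namely that the image is up-closed, because Step 2 relies on it to conclude that the leaf $w$ is not in the image and that its nearest image ancestor is the same as that of $x$. I would justify it directly from the construction in Theorem~\ref{theorem:continuity}: $\varphi$ restricted to each $P_i$ is $\phi_i$, which maps $P_i$ onto the full path $P_i'$ from $\phi(u_i)$ to the root. So for $y = \varphi(z)$ with $z \in P_i$, every ancestor of $y$ lies on $P_i'$ and is hit by $\phi_i$.
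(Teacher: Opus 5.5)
Your proof is correct and is essentially the paper's argument. The paper also relies on $\Img(\varphi)$ being closed upward (any $x \succeq w^a$ would lie in the image) to place $x$ between a leaf $w$ below it and $w^a$, so that $x^a = w^a$ and $\g(x^a)-\g(x) \le \g(w^a)-\g(w) \le 2\varepsilon$. Your version is somewhat more careful than the paper's: you take an arbitrary leaf below $x$, whereas the paper calls it ``the unique'' leaf, which need not be unique. You also justify up-closedness explicitly via Lemma~\ref{Lemma: ancestor-preservation} and the surjectivity of each $\phi_i$ onto $P_i'$, where the paper only appeals to continuity and monotonicity.
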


\begin{proof}
Suppose Ancestor-Closeness property of Definition~\ref{def:epsilon-goodmap} holds for all leaf nodes \( w \in L_g \setminus \operatorname{Im}(\varphi) \). We aim to show that for any point \( x \in \widetilde{M}_g \setminus \operatorname{Im}(\varphi) \), the inequality  
\[
|\tilde{g}(x) - \tilde{g}(x^a)| \leq 2\varepsilon
\]  
holds, where \( x^a \) denotes the nearest ancestor of \( x \) that lies in \( \operatorname{Im}(\varphi) \).

We claim that any such point \( x \in \widetilde{M}_g \setminus \operatorname{Im}(\varphi) \) must lie on the path  connecting the unique leaf node \( w \in L_g \setminus \operatorname{Im}(\varphi) \), which is a descendant of $x$, to its nearest ancestor \( w^a \in \operatorname{Im}(\varphi) \). Otherwise, \( x \succeq w^a\).
Since \( \varphi \) is continuous and monotone, it follows that 
\( x \in \operatorname{Im}(\varphi) \). 
This contradicts the assumption that 
\( x \notin \operatorname{Im}(\varphi) \).

Therefore, \( x \) must lie between some \( w \in L_g \setminus \operatorname{Im}(\varphi) \) and its nearest ancestor \( w^a \in \operatorname{Im}(\varphi) \). In this case, the nearest ancestor of \( x \) in \( \operatorname{Im}(\varphi) \) is precisely \( x^a = w^a \). Since all leaf nodes \( w \in L_g \setminus \operatorname{Im}(\varphi) \) satisfies $|\tilde{g}(w) - \tilde{g}(w^a)| \leq 2\varepsilon$, we have:
\[
|\tilde{g}(x) - \tilde{g}(x^a)| \leq |\tilde{g}(w) - \tilde{g}(w^a)| \leq 2\varepsilon.
\]
Hence, Ancestor-Closeness property is satisfied for every point \( x \in \widetilde{M}_g \setminus \operatorname{Im}(\varphi) \).
\end{proof}









 Lemma~\ref{lemma:ancestor-closeness} establishes that it is sufficient to verify the Ancestor-Closeness property of Definition~\ref{def:epsilon-goodmap} only for the leaf nodes \( w \in L_g \setminus \operatorname{Im}(\varphi) \). Once all three properties are verified, we determine that \( \varphi \) qualifies to be an \( \varepsilon \)-good map. Our goal is to compute the smallest such value of \( \varepsilon \), denoted \( \varepsilon^* \), for which an \( \varepsilon \)-good map exists. We now proceed to describe the algorithmic framework used to compute \( \varepsilon^* \).

\section{Algorithmic Contributions}
\label{Sec:algorithm}
A key structural property of merge trees is that each tree admits a unique decomposition into monotone paths from its leaves to the root, and the union of all these paths covers  the tree. Leveraging this property, we avoid the need to examine all nodes and their subsets when determining the existence of an \( \varepsilon \)-good map between two merge trees. Instead, we focus solely on the leaf nodes of the first merge tree and construct maps along each corresponding leaf-to-root path. Let \( M_f \) and \( M_g \) be the input merge trees. Let \( \widetilde{M}_f \) and \( \widetilde{M}_g \) denote their corresponding augmented versions, as defined in Section~3.3. Let \( L_f \) and \( L_g \) denote the sets of leaf nodes in \( \widetilde{M}_f \) and \( \widetilde{M}_g \), respectively. We denote by $\eta_f$ and $\eta_g$ the numbers of leaf nodes of $\Mf$ and $\Mg$ respectively. Here, we assume that the number of possible $\varepsilon$-good maps from $M_f$ to $M_g$ does not exceed that from $M_g$ to $M_f$, i.e., $\eta_g^{\eta_f} \leq \eta_f^{\eta_g}$, in order to reduce the complexity of our algorithm. If $\eta_g^{\eta_f} > \eta_f^{\eta_g}$, we construct the $\varepsilon$-good maps 
from $M_g$ to $M_f$ in an analogous manner, interchanging the roles of 
$M_f$ and $M_g$.  Note that the number of leaf-to-root paths in each merge tree equals the number of leaves. Our algorithm computes the exact interleaving distance \( \varepsilon^* = d_I(M_f, M_g) \) by performing the following steps:

\begin{enumerate}[Step~1:]
    \item \textbf{Candidate Value Generation.}
We begin by constructing a sorted list \( \Pi \) of candidate values \( \varepsilon \), obtained by considering all pairwise differences in function values between nodes across the merge trees \( M_f \) and \( M_g \), as well as within each individual tree, as in Lemma~3.1. The exact interleaving distance \( \varepsilon^* \) is identified as the smallest value in \( \Pi \) for which the merge trees \( M_f \) and \( M_g \) are \( \varepsilon \)-interleaved. To determine \( \varepsilon^* \), we perform a binary search over the sorted list \( \Pi \). For the correctness of binary search algorithm, we have proved that if $M_f$ and $M_g$ are \( \varepsilon \)-interleaved,  then for every \( \varepsilon' > \varepsilon \), $M_f$ and $M_g$ are \( \varepsilon' \)-interleaved (as in Lemma \ref{lemma: monotonicity}). At each iteration of the binary search, a candidate value \( \varepsilon \in \Pi \) is selected, and we check whether \( M_f \) and \( M_g \) are \( \varepsilon \)-interleaved. This verification proceeds through the following steps. Section~\ref{Subsec:Candidate set generation} describes the procedure for generating the set of candidate values.

\item \textbf{Extending the Merge Trees.}
\textcolor{black}{As discussed in Section~\ref{subsec: Background - Interleaving}, 
for a given candidate value $\varepsilon$, we first extend the merge trees $M_f$ and $M_g$ 
by modifying the root of $M_f$ or $M_g$, if necessary, 
to facilitate the definition of an $\varepsilon$-good map between them. 
Details of the extension procedure are provided in Section~\ref{Subsubsec:Extending-Merge-Trees}.
}

    \item \textbf{Augmenting the Merge Trees.}
Next, we augment the extended merge trees \( M_f \) and \( M_g \) by inserting degree-two nodes at each levels of \( \mathcal{L}^{(\varepsilon)}_1 \) in \( M_f \) and \( \mathcal{L}^{(\varepsilon)}_2 \) in \( M_g \) where no node already exists. Then we obtain the augmented merge trees \( \widetilde{M}_f \) and \( \widetilde{M}_g \), as described in Section~\ref{subsec: FPT algorithm}. For details of the augmentation procedure, refer to Section~\ref{Subsubsec:Augmenting-Merge-Trees}.

    \item \textbf{Constructing Maps between Augmented Merge Trees.}
For a given candidate value \( \varepsilon \), we begin by constructing a map \( \phi: L_f \rightarrow \widetilde{M}_g \), where each leaf node in $L_f$ of \( \widetilde{M}_f \) is mapped to a node at the corresponding level in \( \widetilde{M}_g \). This initial assignment is then extended along each leaf-to-root path in \( \widetilde{M}_f \) to obtain a map \( \varphi: \widetilde{M}_f \rightarrow \widetilde{M}_g \). Specifically, for each \( u_i \in L_f \), we extend the mapping \( u_i \mapsto \phi(u_i) \) to a map \( \phi_i: P_i \rightarrow P_i' \), where \( P_i \) denotes the path from \( u_i \) to the root of \( \widetilde{M}_f \), and \( P_i' \) is the corresponding path from \( \phi(u_i) \) to the root of \( \widetilde{M}_g \). Each node on \( P_i \) is mapped to the unique node on \( P_i' \) that lies at the corresponding level. The complete map \( \varphi \) is considered as the union \( \varphi = \bigcup_{i=1}^{\eta_f} \phi_i \) provided it is well-defined, i.e. each node in $\widetilde{M}_f$ has a unique image in $\widetilde{M}_g$ (as described in Section~\ref{sec:theory}).
Furthermore, we have shown that $\varphi$ is continuous in
Theorem~\ref{theorem:continuity}.
In the construction of $\varphi$, a node \( u \) on a path \( P \subseteq \widetilde{M}_f \) at level \( \ell_k \in \mathcal{L}^{(\varepsilon)}_1 \), with function value \textcolor{black}{\( \tilde{f}(u) = h_k \)}, is mapped to the unique node on the corresponding path \( P' \subseteq \widetilde{M}_g \) that lies at level \( \hat{\ell}_k \in \mathcal{L}^{(\varepsilon)}_2 \), with function value \textcolor{black}{\( \tilde{g}(\varphi(u)) = \hat{h}_k = h_k + \varepsilon \)}. This construction guarantees that \textbf{Range-Shift} property of Definition~\ref{def:epsilon-goodmap} is satisfied. We note that since every node at a level lies on some leaf-to-root path, the number of nodes at any level in the augmented tree \( \widetilde{M}_g \) is bounded above by \( \eta_g \), the number of leaf nodes of $\Mg$. As a result, each leaf node in \( \widetilde{M}_f \) has at most \( \eta_g \) target nodes in \( \widetilde{M}_g \), leading to  at most \( \eta_g^{\eta_f} \) such possible mappings $\varphi$. The details of our map construction algorithm is described in Section~\ref{subsec:constructing-maps}.

   \item \textbf{Verifying $\varepsilon$-Goodness of Map.} Finally, we verify if a map \( \varphi \) between the augmented merge trees \( \widetilde{M}_f \) and \( \widetilde{M}_g \) qualifies to be an $\varepsilon$-good map. It remains to verify whether \( \varphi \) also satisfies \textbf{Ancestor-Shift} and \textbf{Ancestor-Closeness} properties of Definition~\ref{def:epsilon-goodmap}.  To verify \textbf{Ancestor-Shift} property, we have proved  Theorem~\ref{theorem:Condition (b)}, which ensures that if \( \varphi(x) = \varphi(y) \) for some distinct nodes \( x, y \in \widetilde{M}_f \), then \( x \) and \( y \) must share a common ancestor at level \( \tilde{f}(x) + 2\varepsilon \). For \textbf{Ancestor-Closeness} property, Lemma~\ref{lemma:ancestor-closeness} shows that it suffices to verify this property only for the leaf nodes in the set \( L_g \setminus \mathrm{Im}(\varphi) \). This  verification ensures that \textbf{Ancestor-Closeness} property is satisfied for the entire map \( \varphi: \widetilde{M}_f \rightarrow \widetilde{M}_g \). See Section~\ref{subsubsec: Checking epsilon-goodness} for details of the algorithm.
\end{enumerate}

Following these steps, we now describe our main algorithm \textsc{ComputeInterleavingDistance} in Algorithm \ref{alg:main} which computes the interleaving distance between two merge trees $M_f$ and $M_g$. Beginning with Step 1, the procedure \textsc{GenerateCandidateValues} generates the sorted list  \( \Pi \) of candidate values (line 3, Algorithm~\ref{alg:main}). A binary search is performed to find the exact interleaving distance $\varepsilon^*$ between the merge trees (lines $6$-$15$, Algorithm~\ref{alg:main}).  Two indices, $l$ and $r$, are initialized 
to the ends of this list, and $\varepsilon^*$ is set to infinity representing the current best interleaving value found so far (lines $4$-$5$, Algorithm~\ref{alg:main}). At each iteration, a candidate value \( \varepsilon \in \Pi \) is selected, and the procedure \textsc{Is-$\varepsilon$-Interleaved} is invoked to check whether the merge trees \( M_f \) and \( M_g \) are \( \varepsilon \)-interleaved (line 9, Algorithm~\ref{alg:main}). The procedure \textsc{Is-$\varepsilon$-Interleaved} consists of the Steps:  2. Extending the merge trees 3. Augmenting the merge trees, 4. Constructing maps between augmented merge trees and 5. Verifying the $\varepsilon$-goodness of map. See Section~\ref{subsec: Is-epsilon-interleaved} for details of the \textsc{Is-$\varepsilon$-Interleaved} procedure.  If \( M_f \) and \( M_g \) are \( \varepsilon \)-interleaved, then the search continues in the left half of \( \Pi \) considering values less than \( \varepsilon \) (lines $9$-$11$, Algorithm~\ref{alg:main}).
 Otherwise, the search proceeds in the right half of \( \Pi \), with values greater than \( \varepsilon \) (lines $12$-$13$, Algorithm~\ref{alg:main}).
The process 
repeats until the range is exhausted, at which point the smallest valid 
$\varepsilon$ found is returned as the exact interleaving distance  \( \varepsilon^*\) (line 16, Algorithm~\ref{alg:main}).
\begin{algorithm}[H]
\caption{\textsc{ComputeInterleavingDistance}($M_f, M_g$)}
\label{alg:main}
\begin{algorithmic}[1]
\State \textbf{Input:} Merge trees $M_f$ and $M_g$ 
\State \textbf{Output:} $\varepsilon^* = d_I(M_f, M_g)$
\State $\Pi \gets \textsc{GenerateCandidateValues}(M_f, M_g)$ 
\State $l \gets 0$, $r \gets \Pi.\mathrm{length}() - 1$
\State $\varepsilon^* \gets \infty$
\While{$l \leq r$}
    \State $mid \gets \lfloor (l + r)/2 \rfloor$
    \State $\varepsilon \gets \Pi.\mathrm{get}(mid)$
    \If{\textsc{Is-$\varepsilon$-Interleaved}$(M_f, M_g, \varepsilon )$}
        \State $\varepsilon^* \gets \varepsilon$
        \State $r \gets mid - 1$
    \Else
        \State $l \gets mid + 1$
    \EndIf
\EndWhile
\State \Return $\varepsilon^*$
\end{algorithmic}
\end{algorithm}
Next, we describe all the \textcolor{black}{five} steps of the algorithm in details. 

\subsection{Generation of Candidate Values }
\label{Subsec:Candidate set generation}
The procedure \textsc{GenerateCandidateValues}  enumerates all possible function value differences both between nodes within the same tree and across the trees \( M_f \) and \( M_g \), and stores them in the set \( \Pi \) \textcolor{black}{(lines 5--21, procedure \textsc{GenerateCandidateValues})}. \textcolor{black}{Inserting into a set data-structure ensures that no duplicates are inserted \textcolor{black}{(lines 8, procedure \textsc{GenerateCandidateValues})}}. Then procedure \textsc{Sort}  converts the set \( \Pi \)  into a  sorted list of elements in increasing order to enable efficient binary search in Algorithm~\ref{alg:main} \textcolor{black}{(lines 22--23, procedure \textsc{GenerateCandidateValues})}.
The  procedure \textsc{GenerateCandidateValues} is detailed below.

\begin{algorithmic}[1]
\Procedure{\textsc{GenerateCandidateValues}}{$M_f, M_g$}
\label{proc:GenerateCandidateValues}
    \State Initialize set: $\Pi \gets \emptyset$;
    \State \textcolor{black}{$V_f \gets M_f.\textsc{GetNodes}()$};  \State $V_g\gets M_g.\textsc{GetNodes}()$;
    \State \% \textit{Function value differences for all node pairs across the merge trees.}
    \ForAll{$u \in V_f$}
        \ForAll{$v \in V_g$}
            \State $\Pi .\mathrm{insert}\left( |\tilde{f}(u) - \tilde{g}(v)| \right)$
        \EndFor
    \EndFor
 \State \% \textit{Function value differences for all node pairs within the merge trees.}
    \For{$u \in V_f$}
    \For{$u' \in V_f$ $\And$ $u \ne u'$}
        \State $\Pi.\mathrm{insert}\left( \frac{|\tilde{f}(u) - \tilde{f}(u')|}{2} \right)$
    \EndFor    
    \EndFor

    \For{$v \in V_g$}
    \For{$v' \in V_g$ $\And$ $v \ne v'$}
        \State $\Pi .\mathrm{insert}\left( \frac{|\tilde{g}(v) - \tilde{g}(v')|}{2} \right)$
    \EndFor    
    \EndFor
    \State \% \textit{Return list of sorted values without any duplicate entries}
    \State \Return{$\textsc{Sort}(\Pi)$}
\EndProcedure
\end{algorithmic}

Once the sorted list $\Pi$ of candidate values has been generated, we pick a value $\varepsilon \in \Pi$ and test whether $M_f$ and $M_g$ are $\varepsilon$-interleaved. The process starts by augmenting each merge tree as detailed in the next section.

\subsection{Extending the Merge Trees.}
\label{Subsubsec:Extending-Merge-Trees}
This section describes the  procedure for extending the  merge trees. In Section~\ref{subsec: Background - Interleaving}, for the well-definedness of
the $\varepsilon$-compatible maps between two merge trees, we extended each of the merge trees with a ray upward from the root, allowing the
function value to tend to $+\infty$.  However, for implementation purposes, to compute $\varepsilon$-good map, instead of extending a ray, we modify the merge trees \( M_f \) and \( M_g \) as follows. \color{black} If the $\g$-value at the root of \( M_g \) satisfies \( \g(\mathrm{root}(M_g)) < \f(\mathrm{root}(M_f)) + \varepsilon \), a new root node is created in \( M_g \) with $\g$-value \( \f(\mathrm{root}(M_f)) + \varepsilon \), and the original root of \( M_g \) is assigned as its child. Conversely, if \( \g(\mathrm{root}(M_g)) > \f(\mathrm{root}(M_f)) + \varepsilon \), the procedure extends \( M_f \) in a similar manner by creating a new root node with  
$\f$-value $\g(\mathrm{root}(M_g)) - \varepsilon$, and the original root of $M_f$ is made its child.
The pseudocode for this extension process is provided in the procedure \textsc{ExtendMergeTree}, which is invoked prior to the augmentation procedure.


\begin{algorithmic}[1]
\Procedure{\textsc{ExtendMergeTree}}
{$M_g, M_f, \varepsilon$}
\label{proc:ExtendMergeTree}
\State $t \gets \f(\mathrm{root}(M_f)) + \varepsilon$
\If{$\g(\mathrm{root}(M_g)) < t$}
    \State $r_{\text{old}} \gets \mathrm{root}(M_g)$
    \State $r_{\text{new}} \gets$ \textsc{CreateNode}()
    \State $\g(r_{\text{new}}) \gets t$
    \State $r_{\text{new}}.\textsc{AddChild}(r_{\text{old}})$
    \State $M_g.\textsc{SetRoot}(r_{\text{new}})$
  \color{black}  \ElsIf{$\g(\mathrm{root}(M_g)) > t$}
    \State $r_{\text{old}} \gets \mathrm{root}(M_f)$
    
        \State $r_{\text{new}} \gets$ \textsc{CreateNode}()
        \State $\f(r_{\text{new}}) \gets \g(\mathrm{root}(M_g)) - \varepsilon$
        \State $r_{\text{new}}.\textsc{AddChild}(r_{\text{old}})$
    \State $M_f.\textsc{SetRoot}(r_{\text{new}})$
\EndIf
\State \Return $\{M_f, M_g\}$
\EndProcedure
\end{algorithmic}

\noindent
The procedure \textsc{ExtendMergeTree} ensures that the root of $M_g$ satisfies the condition 
$\g(\mathrm{root}(M_g)) = \f(\mathrm{root}(M_f)) + \varepsilon$. 
It first computes the target value $t = \f(\mathrm{root}(M_f)) + \varepsilon$ (line~2, procedure \textsc{ExtendMergeTree}). 
If the $\g$-value of the root of $M_g$ is smaller than $t$, a new root node $r_{\text{new}}$ is created 
and assigned the value $\g(r_{\text{new}}) = t$ (lines~3--6, procedure \textsc{ExtendMergeTree}). 
The original root $r_{\text{old}}$ of $M_g$ is then attached as a child of $r_{\text{new}}$ (line~7, procedure \textsc{ExtendMergeTree}), 
and $r_{\text{new}}$ is set as the new root of $M_g$ (line~8, procedure \textsc{ExtendMergeTree}), see the Figure~\ref{fig:Extension-of-merge-trees}(a). If $\g(\mathrm{root}(M_g)) > t $, then the root of $M_f$ is modified analogously (lines 9--15, procedure \textsc{ExtendMergeTree}), see the Figure~\ref{fig:Extension-of-merge-trees}(b).
Finally, the extended merge trees $M_f$ and $M_g$ are returned (line~16, procedure \textsc{ExtendMergeTree}).

\begin{figure}[h!]
\centering
\includegraphics[width=0.96\textwidth]{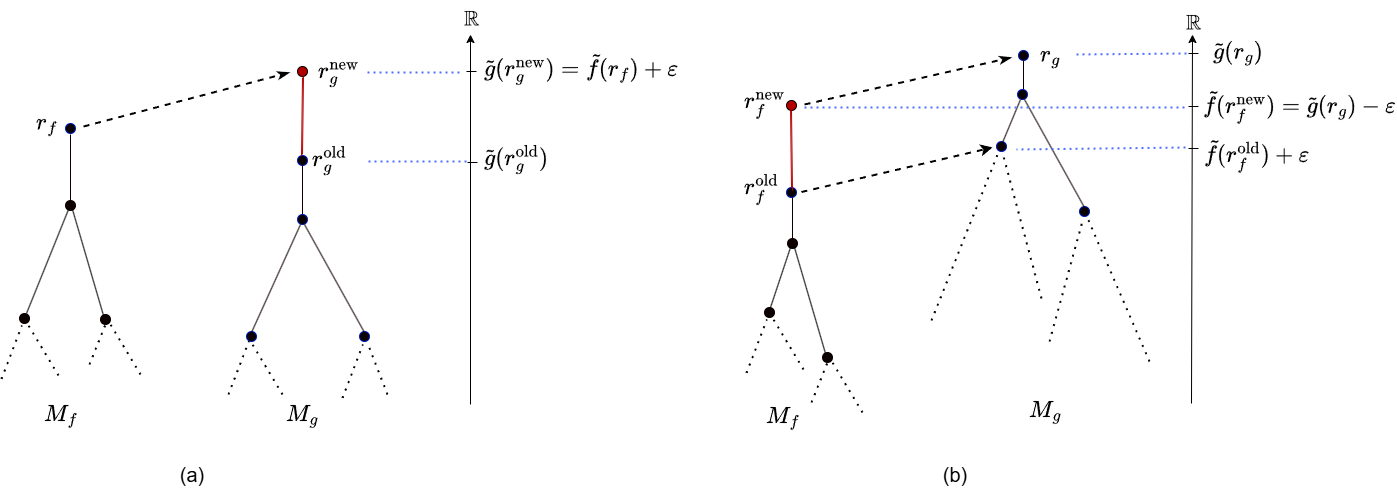}
\caption{(a) Illustration of extending the merge tree $M_g$ when $\g(r_g^{\text{old}}) < \f(r_f) + \varepsilon$. Here, $r_f$ and $r_g^{\text{old}}$ denote the original roots of $M_f$ and $M_g$, respectively. 
A new root $r_g^{\text{new}}$ is introduced in $M_g$ with $\g$-value $\f(r_f) + \varepsilon$. 
The original root $r_g^{\text{old}}$ is then assigned as the child of this newly created root $r_g^{\text{new}}$.
(b) Illustration of extending the merge tree $M_f$ when $\g(r_g) > \f(r_f^{\text{old}}) + \varepsilon$. 
Here, $r_f^{\text{old}}$ and $r_g$ denote the original roots of $M_f$ and $M_g$, respectively. 
A new root $r_f^{\text{new}}$ is introduced in $M_f$ with $\f$-value $\g(r_g) - \varepsilon$. 
The original root $r_f^{\text{old}}$ is then assigned as the child of this newly created root $r_f^{\text{new}}$.
\label{fig:Extension-of-merge-trees}}
\end{figure}

\color{black}

\subsection{Augmenting the Merge Trees.}
\label{Subsubsec:Augmenting-Merge-Trees}
Next, we describe the augmentation of the extended merge tree $M_f$; the augmentation of the extended merge tree $M_g$ is performed analogously. The procedure \textsc{AugmentMergeTree} which augments the merge tree $M_f$ to produce the augmented tree $\widetilde{M}_f$ is outlined below.

\begin{algorithmic}[1]
\Procedure{\textsc{AugmentMergeTree}}{$M_f, M_g, \varepsilon$}
\label{proc:AugmentMergeTree}
\State $\Mf \gets M_f$
  \State $V \gets \textcolor{black}{\Mf.\textsc{GetNodes}()}$
  \State $E \gets \Mf.\textsc{GetEdges}()$
  \State $\hat{V} \gets M_g.\textsc{GetNodes}()$
  
 \State \% \textit{Augment $M_f$ at all levels $\f(u)$ for $u \in V$}
  \For{$u \in V$}
    \For{$(x,y) \in E  ~\And ~ \f(x) > \f(y) $}
      \If{$\f(x) > \f(u) \ \And\ \f(y) < \f(u)$}
        \State $\Mf.\textsc{RemoveEdge}(x,y)$
        \State $z \gets \textcolor{black}{\Mf.\textsc{CreateNode}()}$
        \State $\f(z) \gets \f(u)$
        \State $\Mf. \textcolor{black}{\textsc{AddEdges}(\{(x,z), (z,y)\})}$
        \State $\Mf.\mathrm{update}()$
      \EndIf
    \EndFor
  \EndFor
  
  \State \% \textit{Augment $M_f$ at levels $\g(v) -\varepsilon$ for $v \in \hat{V}$}
  \For{$v \in \hat{V}$}
    \For{$(x,y)\in E  ~\And ~ \f(x) > \f(y) $}
      \If{$\f(x) > \g(v) -\varepsilon \ \And\ \f(y) < \g(v) - \varepsilon$}
        \State $\Mf.\textsc{RemoveEdge}(x,y)$
        \State $z \gets \Mf.\textsc{CreateNode}()$
        \State $\f(z) \gets \g(v) -\varepsilon$
        \State $\Mf.\textcolor{black}{ \textsc{AddEdges}(\{(x,z), (z,y)\})}$
        \State $\Mf.\mathrm{update}()$
      \EndIf
    \EndFor
  \EndFor
  \State \Return{$\Mf$}
  \EndProcedure
\end{algorithmic}

For each node $u \in M_f$ and each edge $(x, y)$ in $M_f$, if $\f(x) > \f(u) > \f(y)$, a new degree-two node $z$ is inserted at level $\f(u)$ by replacing the edge $(x, y)$ with $(x, z)$ and $(z, y)$ and setting $\f(z) = \f(u)$ \textcolor{black}{(lines~6--17, procedure \textsc{AugmentMergeTree})}. Subsequently, for a given candidate value $\varepsilon$ and for each node $v \in M_g$, 
if there exists an edge $(x, y)$ in $M_f$ such that 
$\f(x) > \g(v) - \varepsilon > \f(y)$, 
a new degree-two node $z$ is inserted at the level $\g(v) - \varepsilon$. 
This is achieved by splitting the edge $(x, y)$ into two edges, $(x, z)$ and $(z, y)$, 
and assigning $\f(z) = \g(v) - \varepsilon$ 
\textcolor{black}{(lines~18--29, procedure \textsc{AugmentMergeTree})}.
 Following the augmentation, the construction of mappings between the augmented merge trees is carried out in the next section.

\subsection{Constructing Maps between Augmented Merge Trees.} 
\label{subsec:constructing-maps}


The procedure \textsc{ConstructMap} extends a given leaf-to-node correspondence 
$\phi : L_f \rightarrow \widetilde{M}_g$ to a continuous map 
$\varphi : \widetilde{M}_f \rightarrow \widetilde{M}_g$ where construction of $\phi$ will be discussed in Section~\ref{subsec: Is-epsilon-interleaved}. 
The algorithm begins by initializing an empty collection to store the partial path-wise mappings (line~2, procedure \textsc{ConstructMap}). 
For each leaf node $u_i \in L_f$, it first computes the leaf-to-root path $P_i$ from $u_i$ to the root in $\widetilde{M}_f$, and the corresponding path $P_i'$ 
from its image $\phi(u_i)$ to the root in $\widetilde{M}_g$ (lines~4--5, procedure \textsc{ConstructMap}).
Using these two paths, the routine \textsc{ExtendMap} constructs a local mapping $\phi_i$ that extends the assignment $\phi(u_i)$ 
to all nodes along $P_i$ (line~6, procedure \textsc{ConstructMap}). 
These local maps are accumulated in the collection $\Phi$ (line~7, procedure \textsc{ConstructMap}). 
After all leaf-to-root paths are processed, the procedure invokes \textsc{IsWell-defined} to verify the global consistency of the collection $\Phi$ 
with respect to the least common ancestors (LCAs) of leaf pairs (line~9, procedure \textsc{ConstructMap}) where computation of LCAs will be discussed in Section~\ref{subsec: Is-epsilon-interleaved}. 
If the check succeeds, the partial maps are merged using \textsc{Map} to obtain a single continuous map $\varphi$ (line~10, procedure \textsc{ConstructMap}); 
otherwise, the procedure returns \texttt{Null} (line~13, procedure \textsc{ConstructMap}), indicating that no valid consistent mapping can be formed.


\begin{algorithmic}[1]
\label{proc:ConstructMap}
\Procedure{\textsc{ConstructMap}}{$\phi, \widetilde{M}_f, \widetilde{M}_g, L_f, \textsc{LCAs}$}
\State $\Phi\gets \emptyset$
    \For {each $u_i\in L_f$}
        \State Compute: $P_i \gets$ \textsc{NodeToRootPath}$(u_i, \widetilde{M}_f)$
        \State $P_i' \gets \textsc{NodeToRootPath}(\phi(u_i), \widetilde{M}_g)$
        \State $\phi_i \gets \textsc{ExtendMap}(u_i , \phi[u_i], P_i, P_i')$
        \State $\Phi.\textsc{AddToList}(\phi_i)$
    \EndFor
    \If{\textsc{IsWell-defined}$(\Phi, L_f,  \textsc{LCAs})$}
   \State $\varphi \gets \textsc{Map}(\Phi)$
   \State \Return{ $\varphi$ }
   \Else
   \State \Return{ \texttt{Null} }
    \EndIf
    \EndProcedure
\end{algorithmic}


\paragraph{Procedure: {\sc NodeToRootPath}.} 
The procedure \textsc{NodeToRootPath}$(u, \widetilde{M}_f)$ constructs the path from a node $u$ to the root of the merge tree $\widetilde{M}_f$. 
It initializes an empty list $P$ (lines~2--3, procedure \textsc{NodeToRootPath}) and iteratively adds each node along the upward traversal from $u$ to the root (lines~5--8, procedure \textsc{NodeToRootPath}). 
The resulting list $P$, containing all nodes on the path in order from leaf to root, is then returned (line~9, procedure \textsc{NodeToRootPath}).

\begin{algorithmic}[1]
\Procedure{\textsc{NodeToRootPath}}{$u, \widetilde{M}_f$}
    \State \% \textit{Initialize empty path list}
    \State $P \gets [\,]$ 
    \State $u' \gets u$
    \While{$u' \neq \texttt{Null}$}
        \State $P.\textsc{AddToList}(u')$
        \State $u' \gets \mathrm{parent}(u')$
    \EndWhile

    \State \Return $P$
\EndProcedure
\end{algorithmic}

\paragraph{Procedure: {\sc ExtendMap}.} 
The procedure \textsc{ExtendMap}$(x, y, P, P')$ extends a partial correspondence between two augmented merge trees along specified node-to-root paths. Given nodes $x \in \widetilde{M}_f$ and $y \in \widetilde{M}_g$ with $\g(y)=\f(x)+\varepsilon$, together with their respective paths $P$ and $P'$ from $x$ and $y$ to the corresponding roots, the procedure constructs a mapping $\psi$ that aligns each ancestor of $x$ with the corresponding ancestor of $y$. It begins by initializing the mapping with $\psi[x] = y$ (line~4, procedure \textsc{ExtendMap}) and then iteratively assigns each parent of $x$ in $P$ to the corresponding parent of $y$ in $P'$ while traversing upward along both paths (lines~7--11, procedure \textsc{ExtendMap}). 
As ensured by the augmentation step, there exists a one-to-one correspondence between the nodes of $P$ and $P'$. 
Consequently, the process proceeds until both paths reach their respective roots simultaneously, 
guaranteeing that each node $u \in P$ is mapped to a corresponding node in $P'$ whose $\g$-value satisfies $\g(\varphi(u)) = \f(u) + \varepsilon$.
\begin{algorithmic}[1]
\Procedure{\textsc{ExtendMap}}{$x, y, P, P'$}
    \State \% \textit{Initialize empty hash map}
    \State $\psi \gets \emptyset$ 
    \State $\psi[x] \gets y$
    \State $x' \gets P.\mathrm{parent}(x)$
    \State $y' \gets P'.\mathrm{parent}(y)$
    \While{$x' \neq \texttt{Null}~ \And ~y' \neq \texttt{Null}$}
        \State $\psi[x'] \gets y'$
        \State $x' \gets P.\mathrm{parent}(x')$
        \State $y' \gets P'.\mathrm{parent}(y')$
    \EndWhile
    \State \Return $\psi$
\EndProcedure
\end{algorithmic}


\paragraph{Procedure: {\sc IsWellDefined}.}  The procedure \textsc{IsWellDefined} verifies whether the collection of local mappings $\{\phi_i : P_i \rightarrow P_i'\}$ in $\Phi$, 
obtained from the \textsc{ExtendMap} procedure for all path pairs $(P_i, P_i')$, 
collectively defines a well-defined continuous map $\varphi : \widetilde{M}_f \rightarrow \widetilde{M}_g$.
 For each pair of distinct leaf nodes \( (u_i, u_j) \), it retrieves their least common ancestor \( v \) from the precomputed dictionary \textsc{LCAs}. Using this information, \textsc{IsWellDefined} determines whether \( \varphi \) is well-defined by checking that each pair of extensions in \( \Phi \) agrees at the corresponding least common ancestor. If the associated extensions 
\(\Phi[i]\) and \(\Phi[j]\) yield distinct images at  
\(v\), i.e., \(\Phi[i](v) \ne \Phi[j](v)\), then \(\varphi\) fails the 
well-definedness criterion of Theorem~\ref{theorem:continuity} (line 5, procedure \textsc{IsWellDefined}). Hence the
procedure returns \texttt{False} (line 6, procedure \textsc{IsWellDefined}).

 \begin{algorithmic}[1]
\Procedure{\textsc{IsWellDefined}}{$\Phi , L_f, \textsc{LCAs}$}

    \For {$u_i \in L_f$}
     
        \For {$u_j \in L_f ~\And~ u_j \neq u_i$}
      
                \State $v\gets \textsc{LCAs}[(u_i, u_j)]$

                \If{$\Phi[i] (v) \ne \Phi[j](v)$}
                       \State \Return{$\texttt{False}$}
                \EndIf
            \EndFor
        \EndFor
        \State \Return{$\texttt{True}$}
\EndProcedure
\end{algorithmic}

\paragraph{Procedure: {\sc Map}.} If the procedure \textsc{IsWellDefined} confirms the well-definedness of $\Phi$, 
then the procedure \textsc{Map} constructs the global mapping 
$\varphi=\bigcup_{i=1}^{\eta_f}\Phi[i]$. 
The procedure \textsc{Map} consolidates the collection of local mappings $\Phi$ into a single global map $\varphi$ 
by inserting each key--value pair $(u, \phi_i(u))$ for every node $u \in P_i$, 
where $P_i$ denotes the domain of the local map $\phi_i \in \Phi$, into $\varphi$ 
(lines~3--7, procedure~\textsc{Map}).
The detailed steps of \textsc{Map} are given below.
 \begin{algorithmic}[1]
\Procedure{Map}{$\Phi$}
    \State Initialize $\varphi \gets \emptyset$
    \For{each  $\phi_i \in \Phi$}
        \For{\textcolor{black}{each $u \in P_i$}}
            \State $\varphi[u] \gets \phi_i[u]$
        \EndFor
    \EndFor
    \State \Return $\varphi$
\EndProcedure
\end{algorithmic}

Once the map \(\varphi : \widetilde{M}_f \rightarrow \widetilde{M}_g\) has been 
constructed, the next step is to verify whether \(\varphi\) satisfies all the 
conditions required for it to qualify as an \(\varepsilon\)-good map. 
The verification of \(\varepsilon\)-goodness is performed in the following section.

\subsection{Verifying $\varepsilon$-Goodness of Map.} 
\label{subsubsec: Checking epsilon-goodness}
To establish that the constructed map 
$\varphi: \widetilde{M}_f \rightarrow \widetilde{M}_g$ 
is $\varepsilon$-good, we must verify that it satisfies all the properties of 
Definition~\ref{def:epsilon-goodmap}. 
The Range-Shift property holds by construction. 
The remaining task is to check the Ancestor-Shift and Ancestor-Closeness properties. 
These verifications are carried out in the procedure \textsc{Is-$\varepsilon$-Good} as 
described below.


\begin{algorithmic}[1]
\label{proc:IsEpsilonGood}
\Procedure{\textsc{Is-$\varepsilon$-Good}}{$\varphi, \widetilde{M}_f, \widetilde{M}_g, L_f, L_g, \textsc{LCAs},2\varepsilon\textsc{Pairs},\varepsilon$}

    \For {$u_i \in L_f$}

        \For {$u_j \in L_f ~\And~u_i\neq u_j$}
                \State $(u_r,u_s) \gets \textcolor{black}{2\varepsilon\textsc{Pairs}[(u_i,u_j)]}$

                \If{$(u_r,u_s) \ne \texttt{Null}~\And~ \varphi(u_r) = \varphi(u_s)$}
                       \State \Return{$\texttt{False}$}
                \EndIf
            \EndFor
        \EndFor
        \If{$\textsc{IsAncestorCloseness}(\varphi, \widetilde{M}_g,L_g, \varepsilon)$}
        \State \Return{$\texttt{True}$}
        \Else
        \State \Return{$\texttt{False}$}
    \EndIf
\EndProcedure
\end{algorithmic}

Here, the Ancestor-Shift property is verified by examining the precomputed $2\varepsilon$-pair $(u_r, u_s)$, which is stored in \textcolor{black}{the dictionary $2\varepsilon\textsc{Pairs}[(u_i,u_j)]$} corresponding to the pair of leaf nodes $(u_i, u_j)$ with $v = \LCA(u_i, u_j)$ (line~4, procedure \textsc{Is-$\varepsilon$-Good}). The procedure for computing the $2\varepsilon$-pair corresponding to each pair of leaf nodes will be discussed in Section~\ref{subsec: Is-epsilon-interleaved}. If the condition $\varphi(u_r) = \varphi(u_s)$ holds \textcolor{black}{(line~5, procedure \textsc{Is-$\varepsilon$-Good})}, then, by Theorem~\ref{theorem:Condition (b)}, the map $\varphi$ violates the Ancestor-Shift property, and the procedure returns \texttt{False} (line~6, procedure \textsc{Is-$\varepsilon$-Good}). Otherwise, the procedure \textsc{IsAncestorCloseness} verifies whether $\varphi$ satisfies the Ancestor-Closeness property (line~10, procedure \textsc{Is-$\varepsilon$-Good}). If this condition is met, $\varphi$ qualifies as an $\varepsilon$-good map, andprocedure \textsc{Is-$\varepsilon$-Good} returns \texttt{True} (line~11, procedure \textsc{Is-$\varepsilon$-Good}).

\paragraph{Procedure: {\sc IsAncestorCloseness}.}  The procedure \textsc{IsAncestorCloseness} verifies the \linebreak Ancestor-Closeness property of $\varphi$ (line~11, procedure \textsc{Is-$\varepsilon$-Good}). For each leaf node $w \in L_g \setminus \mathrm{Im}(\varphi)$, the procedure \textsc{NearestAncestor} identifies the nearest ancestor $w^a$ of $w$ that belongs to $\mathrm{Im}(\varphi)$ (line~3, procedure \textsc{IsAncestorCloseness}). If the difference in function values between $w$ and $w^a$ exceeds $2\varepsilon$, the procedure returns \texttt{False} (lines~4--5, procedure \textsc{IsAncestorCloseness}). Otherwise, $\varphi$ satisfies the Ancestor-Closeness property, as established in Lemma~\ref{lemma:ancestor-closeness}.

\begin{algorithmic}[1]
 \Procedure{\textsc{IsAncestorCloseness}}{$\varphi, \widetilde{M}_g,L_g, \varepsilon$}

\For{each $w \in L_g \setminus \mathrm{Im}(\varphi)$}
    \State $w^a \gets \textsc{NearestAncestor}(w, \mathrm{Im}(\varphi))$
    \If{$|\tilde{g}(w^a) - \tilde{g}(w)| > 2\varepsilon$}
        \State \Return{\texttt{False}}
    \EndIf
\EndFor
\State \Return{\texttt{True}}
\EndProcedure
\end{algorithmic}

\paragraph{Procedure: {\sc NearestAncestor}.}
The procedure \textsc{NearestAncestor} begins at any leaf node 
\( w \in L_g \setminus \mathrm{Im}(\varphi) \) and iteratively traverses upward through its ancestors (line 9, procedure \textsc{NearestAncestor}). When it encounters a node $w'$ that belongs to $\mathrm{Im}(\varphi)$, it assigns the nearest ancestor $w^a$ of $w$ to be $w'$ and terminates (lines 5--7, procedure \textsc{NearestAncestor}). 

\begin{algorithmic}[1]
\Procedure{\textsc{NearestAncestor}}{$ w, \mathrm{Im}(\varphi)$}
    \State $w^a \gets \texttt{Null}$
    \State $w' \gets w$
    \While{$w' \neq \texttt{Null}$}
        \If{$w' \in \mathrm{Im}(\varphi)$}
            \State $w^a \gets w'$
            \State \textbf{break}
            \Else
     \State $w' \gets \mathrm{parent}(w')$

          \EndIf
    \EndWhile
    \State \Return $w^a$
\EndProcedure
\end{algorithmic}

This completes the process of verifying whether a mapping \( \varphi: \widetilde{M}_f \rightarrow \widetilde{M}_g \) satisfies the criteria to be an \( \varepsilon \)-good map. If at least one such mapping \( \varphi \) is found to be \( \varepsilon \)-good, then $M_f$ and $M_g$ are $\varepsilon$-interleaved. In the next section, we combine all the steps in a single procedure to check whether $M_f$ and $M_g$ are $\varepsilon$-interleaved.

\subsection{Procedure: Is $\varepsilon$-Interleaved}
\label{subsec: Is-epsilon-interleaved}
In this section, we integrate the procedures \textsc{AugmentMergeTree}, \textsc{ConstructMap}, and \textsc{Is-$\varepsilon$-Good} into an algorithm, \textsc{Is-$\varepsilon$-Interleaved}, which determines whether the merge trees \( M_f \) and \( M_g \) are \( \varepsilon \)-interleaved. A formal description of the \textsc{Is-$\varepsilon$-Interleaved} algorithm is presented below.

\begin{algorithmic}[1]
\label{proc:IsEpsilonInterleaved}
\Procedure{\textsc{Is-$\varepsilon$-Interleaved}}{$M_f, M_g, \varepsilon$}
\State \%\textit{Extend the Merge Tree $M_g$}
\State $\{M_f, M_g\}\gets \textsc{ExtendMergeTree}(M_g, M_f, \varepsilon)$
\State \%\textit{Augmenting Merge Trees}


\State $\widetilde{M}_f \gets \textsc{AugmentMergeTree}(M_f, M_g, \varepsilon)$

\State $\widetilde{M}_g \gets \textsc{AugmentMergeTree}(M_g, M_f, -\varepsilon)$
\State $L_f\gets \Mf.\textsc{ListOfLeafNodes}()$
\State $L_g\gets \Mg.\textsc{ListOfLeafNodes}()$
\State $u_{f} \gets  L_f.\mathrm{min}()$
\State $u_{g} \gets L_g.\mathrm{min}()$
\If{$\g(u_{g}) -\f(u_{f}) > \varepsilon$}
\State \Return{\texttt{False}}
\EndIf

  \State \% \textit{Initialize an empty HashMap}
  \State $\textsc{LCAs} \gets \emptyset$ 
  \State $2\varepsilon\textsc{Pairs} \gets \emptyset$ 
  \For{each leaf node $u_i \in L_f$}
    \For{each leaf node $u_j \in L_f$}
    \State  \% \textit{Avoid duplicate and self-pairs}
        \If{$i < j$} 
            \State $v \gets \textsc{FindLCA}(\Mf, u_i, u_j)$
            \State $\text{LCAs}[(u_i, u_j)] \gets v$
            \State $(u_r, u_s) \gets \textsc{Find2$\varepsilon$Pair}(\Mf, u_i, u_j, v)$
            \State $2\varepsilon\textsc{Pairs}[(u_i, u_j)]\gets (u_r, u_s)$
        \EndIf
    \EndFor
\EndFor

\State \% \textit{Initialize an empty HashMap}
\State $\Gamma \gets \emptyset$ 
\For {each $u_i\in L_f$}
    \State $\Gamma[u_i]\gets \textsc{TargetNodes}(\Mg, \f(u_i) + \varepsilon)$
\EndFor
\State \%\textit{Constructing Maps}
   \For{ each $(\hat{u}_1,\hat{u}_2,\ldots, \hat{u}_{\eta_f}) \in \Gamma[u_1] \times \Gamma[u_2] \times \dots\times \Gamma[u_{\eta_f}]$}
   \State \% \textit{Initialize an empty HashMap}
   \State $\phi\gets \emptyset$ 
     \For{$u_i \in L_f$}
     \State $\phi[u_i] \gets \hat{u}_i$
     \EndFor
     \State $\varphi \gets \textsc{ConstructMap}(\phi, \widetilde{M}_f, \widetilde{M}_g, L_f, \textsc{LCAs})$
     \State \%Verifying $\varepsilon$-Goodness
     \If{ $\varphi \ne \texttt{Null}$  $\And$   \textsc{Is-$\varepsilon$-Good}($\varphi, \widetilde{M}_f, \widetilde{M}_g, L_f, L_g, \textsc{LCAs},2\varepsilon\textsc{Pairs},\varepsilon$)}
     \State \Return{\texttt{True}}
   
     \EndIf
   \EndFor
   \State \Return{ \texttt{False}}
\EndProcedure
\end{algorithmic}

In procedure \textsc{Is-$\varepsilon$-Interleaved}, we first extend the merge trees $M_f$ and $M_g$ to \textcolor{black}{construct} an $\varepsilon$-good map between $M_f$ and $M_g$ (lines 2--3, procedure \textsc{Is-$\varepsilon$-Interleaved}) and then augment the merge trees $M_f$ and $M_g$ to obtain the augmented trees $\widetilde{M}_f$ and $\widetilde{M}_g$ (lines 4--6, procedure \textsc{Is-$\varepsilon$-Interleaved}). Next, the procedure \textsc{ListOfLeafNodes} extracts all leaf nodes from each tree, while the procedure $\mathrm{min}()$ identifies the leaf nodes in $\widetilde{M}_f$ and $\widetilde{M}_g$ with the minimum function values (lines 7--10, procedure \textsc{Is-$\varepsilon$-Interleaved}). If the difference in function values, $\tilde{g}(u_g) - \tilde{f}(u_f)$, where $u_g$ and $u_f$ are the leaf nodes with the smallest function values in $\widetilde{M}_g$ and $\widetilde{M}_f$, respectively, exceeds $\varepsilon$, then constructing an $\varepsilon$-good map between $M_f$ and $M_g$ is impossible, and the procedure immediately returns \texttt{False} (lines 11--13, procedure \textsc{Is-$\varepsilon$-Interleaved}). For each pair of leaf nodes $(u_i, u_j)$, the procedure \textsc{FindLCA} identifies their least common ancestor $v$ and stores the result in the dictionary \textsc{LCAs} (lines 21--22, procedure \textsc{Is-$\varepsilon$-Interleaved}). Additionally, the procedure \textsc{Find2$\varepsilon$Pair} determines $2\varepsilon$-pair $(u_r, u_s)$ corresponding to \textcolor{black}{the pair} of leaf nodes $(u_i, u_j)$ and stores in a dictionary \textcolor{black}{\textsc{$2\varepsilon$Pairs}} (lines 23--24, procedure \textsc{Is-$\varepsilon$-Interleaved}). For each leaf node in $\widetilde{M}_f$, the procedure \textsc{TargetNodes} computes the corresponding list of target nodes in $\widetilde{M}_g$ by iterating through all the nodes of the tree with specific function value (lines 30--32, procedure \textsc{Is-$\varepsilon$-Interleaved}). Using these lists of target nodes, an initial map $\phi$ is constructed on the set of leaf nodes $L_f$ (lines 37--39, procedure \textsc{Is-$\varepsilon$-Interleaved}). The procedure \textsc{ConstructMap} then extends $\phi$ to a map $\varphi : \widetilde{M}_f \rightarrow \widetilde{M}_g$ (line 40, procedure \textsc{Is-$\varepsilon$-Interleaved}). If $\varphi$ is not well-defined, it returns \texttt{Null}; otherwise, the procedure \textsc{Is-$\varepsilon$-Good} verifies whether $\varphi$ satisfies the $\varepsilon$-goodness conditions (line 42, procedure \textsc{Is-$\varepsilon$-Interleaved}). If at least one $\varepsilon$-good map exists, \textsc{Is-$\varepsilon$-Good} returns \texttt{True} (line 43, procedure \textsc{Is-$\varepsilon$-Interleaved}); otherwise, it returns \texttt{False} (line 46, procedure \textsc{Is-$\varepsilon$-Interleaved}). The individual procedures invoked in \textsc{Is-$\varepsilon$-Interleaved} are elaborated below.

\paragraph{Procedure: {\sc FindLCA}.}
The procedure \textsc{FindLCA} computes the least common ancestor (LCA) of two nodes $u_i$ and $u_j$ in the augmented merge tree $\widetilde{M}_f$. It first obtains the root-to-node paths \( \bar{P_i} \) and \( \bar{P_j} \) using the procedure \textsc{PathFromRoot} (lines~2--3, procedure \textsc{FindLCA}), which is analogous to the procedure \textsc{NodeToRootPath}  in Section~\ref{subsec:constructing-maps}. The algorithm then iterates simultaneously through both paths, comparing corresponding nodes $(x, y)$, with $\tilde{f}(x) = \tilde{f}(y)$, in order from root to leaf (line~5, procedure \textsc{FindLCA}). 
Whenever the function values satisfy $\tilde{f}(x) = \tilde{f}(y)$ and the nodes are identical ($x = y$), the current node is recorded as the LCA (lines~6--7, procedure \textsc{FindLCA}). 
The iteration continues until a mismatch occurs, at which point the loop terminates (lines~8--9, procedure \textsc{FindLCA}). 
Finally, the procedure returns the node stored in `$\mathrm{lca}$', representing the least common ancestor of $u_i$ and $u_j$ (line~12, procedure \textsc{FindLCA}).



\begin{algorithmic}[1]
\Procedure{\textsc{FindLCA}}{$\widetilde{M}_f, u_i,u_j$}
    \State $\bar{P_i} \gets \textsc{PathFromRoot}({u_i})$
\State $\bar{P_j} \gets \textsc{PathFromRoot}({u_j})$
\State $\text{lca} \gets \texttt{Null}$
\For{each pair of nodes $(x, y)$ from $\bar{P_i}$ and $\bar{P_j}$, respectively, with $\f(x)=\f(y)$, in order from root to leaf}
    \If{$x = y$}
        \State $\text{lca} \gets x$
    \Else
        \State \textbf{break}
    \EndIf
\EndFor
\State \Return $\text{lca}$
\EndProcedure
\end{algorithmic}

\paragraph{Procedure: {\sc Find$2\varepsilon$Pair}.}
The procedure \textsc{Find$2\varepsilon$Pair} identifies the first pair of descendant nodes of the least common ancestor \( v = \LCA(u_i, u_j) \) whose function values differ from \( \f(v) \) by more than \( 2\varepsilon \). It traverses the root-to-node paths \( \bar{P_i} \) and \( \bar{P_j} \) corresponding to \( u_i \) and \( u_j \), and \textcolor{black}{searches for} the first pair \((u_i', u_j')\) with equal $\f$-value and satisfying  $\f(v) - \f(u_i')>2\varepsilon$ (lines 5--6, procedure \textsc{Find$2\varepsilon$Pair}). Once the first such pair is encountered, it is recorded and returned (line 7, procedure \textsc{Find$2\varepsilon$Pair}). Otherwise,  the procedure returns \texttt{Null} when $u_i'$ or $u_j'$, or both do not exist.

\begin{algorithmic}[1]
\Procedure{\textsc{Find$2\varepsilon$Pair}}{$\widetilde{M}_f, u_i,u_j,v$}
  
    \State $\bar{P_i} \gets \textsc{PathFromRoot}({u_i})$
\State $\bar{P_j} \gets \textsc{PathFromRoot}({u_j})$
\State \textcolor{black}{$(u_r,u_s) \gets \texttt{Null}$}
\For{each distinct pair \textcolor{black}{$(u_i', u_j')$}  of nodes from $\bar{P_i}$ and $\bar{P_j}$ with same $\f$-value  }
    \If{$\f(v) - \f(u_i') > 2\varepsilon$}
        \State $ (u_r,u_s)\gets (u_i',u_j')$
        \State \textbf{break}
    \Else
        \State $u_i' \gets P_i.\mathrm{child}(u_i')$
        \State $u_i' \gets P_j.\mathrm{child}(u_j')$
    \EndIf
\EndFor

\State \Return \textcolor{black}{$(u_r,u_s)$}
    
\EndProcedure
\end{algorithmic}

\color{black}

\paragraph{Procedure: {\sc TargetNodes}.}
 The procedure \textsc{TargetNodes} begins by initializing an empty list \texttt{Result} to store the matching nodes (line 2, procedure \textsc{TargetNodes}).
 It then iterates through all the nodes in $\Mg$ and identifies all nodes whose function value equals a specific value \(c\) (lines 4--8, \textsc{TargetNodes}). 
Finally, the list \texttt{Result}, containing all nodes whose function values equal to $c$, is returned (line 9, procedure \textsc{TargetNodes}).

\begin{algorithmic}[1]
\Procedure{\textsc{TargetNodes}}{$\Mg$, $c$}
   \State $\texttt{Result} \gets \emptyset$
\State $V_g \gets \Mg.\textsc{GetNodes}()$
\For{$v \in V_f$}
   \If{$|\tilde{g}(v) - c| = 0$}
        \State $\texttt{Result}.\mathrm{append}(v)$
    \EndIf
 \EndFor   
\State \Return $\texttt{Result}$
\EndProcedure
\end{algorithmic}
\color{black}

The procedure \textsc{Is-$\varepsilon$-Interleaved} is invoked from the main Algorithm~\ref{alg:main} (line~9) 
to verify whether the given merge trees $M_f$ and $M_g$ are $\varepsilon$-interleaved. 
Upon completion, Algorithm~\ref{alg:main} returns the interleaving distance between $M_f$ and $M_g$. 
However, for practical applications, Algorithm~\ref{alg:main} can be computationally expensive, 
as it may need to consider up to $\eta_g^{\eta_f}$ possible mappings, 
where $\eta_f$ and $\eta_g$ denote the numbers of leaf nodes in $M_f$ and $M_g$, respectively. 
In the next section, we introduce a refined and more efficient approach to mitigate this computational cost.

\subsection{Refinement of the Algorithm}
\label{Subsec: Refinement}
The procedure \textsc{RefinedTargetNodes} 
computes the refined set of valid target nodes $L^{u_k}$ in $\widetilde{M}_g$ for a given leaf \textcolor{black}{node} $u_k \in L_f$.

\begin{algorithmic}[1]
 \Procedure{\textsc{RefinedTargetNodes}}{$u_k,\widetilde{M}_f,\widetilde{M}_g,L_f,\textsc{LCAs}, \textsc{$2\varepsilon$Pairs}, \varepsilon$}
\State \% \textit{Initialize an empty HashMap}
\State $\mathcal{P} \gets \emptyset$

\State $\Gamma \gets \emptyset$

\For{each $u_i \in L_f$}
    \State $\mathcal{P}[u_i] \gets \textsc{NodeToRootPath}(u_i, \widetilde{M}_f)$
    \State $\Gamma[u_i] \gets \textsc{TargetNodes}(\Mg, \f(u_i) + \varepsilon)$  
\EndFor
    \State \% \textit{Initialize an empty list for refined target nodes}
    \State $L^{u_k} \gets \emptyset$
 
    \State $P_k \gets \mathcal{P}[u_k]$
    
    \For{each $\hat{u}_i \in \Gamma[u_k]$}
    \State \% \textit{Checks if $\hat{u}_i$ is a valid target node for $u_k$}
       \State $\texttt{IsValidTarget} \gets \texttt{True}$ 
    \State $P_i' \gets \textsc{NodeToRootPath}(\hat{u}_i,\widetilde{M}_g)$
        \State $\phi_k \gets \textsc{ExtendMap}(u_k,\hat{u}_i, P_k, P_i')$

        \For{each $u_{\ell} \in L_f$ $\And~u_{\ell} \ne u_k$ }
        \State \% \textit{Checks if there is a map $\phi_\ell$ compatible to  $\phi_k$}
            \State $\texttt{HasCompatibleMap} \gets \texttt{False}$
            
            \State $P_{\ell} \gets \mathcal{P}[u_{\ell}]$
            
            \State $v \gets \textsc{LCAs}[(u_k, u_{\ell})]$
            \State $(u_r, u_s)\gets \textsc{$2\varepsilon$Pairs}[(u_k,u_{\ell})]$

            \For{each $\hat{u}_\ell \in \Gamma[u_{\ell}]$}
            
              \State $P_{\ell}' \gets \textsc{NodeToRootPath}(\hat{u}_\ell,\widetilde{M}_g)$
        \State $\phi_\ell \gets \textsc{ExtendMap}(u_{\ell},\hat{u}_\ell, P_{\ell}, P_{\ell}')$

              \color{black}  \If{$\phi_k[v] = \phi_\ell[v]$ $\And$ $(u_r, u_s)=\texttt{Null}$}
                    \State $\texttt{HasCompatibleMap} \gets \texttt{True}$
                        \State \textbf{break}
                \EndIf
                \If{$\phi_k[v] = \phi_\ell[v]$ $\And$ $\phi_k[u_r] \ne \phi_\ell[u_s]$}
                    
                        \State $\texttt{HasCompatibleMap} \gets \texttt{True}$
                        \State \textbf{break}
                \EndIf
 \color{black}\EndFor

            \If{\texttt{HasCompatibleMap} = \texttt{False}}
                \State $\texttt{IsValidTarget} \gets \texttt{False}$
                \State \textbf{break}
            \EndIf
        \EndFor

        \If{\texttt{IsValidTarget}}
            \State $L^{u_k}.\textsc{AddToList}(\hat{u}_i)$
        \EndIf
    \EndFor
\State \Return{$L^{u_k}$}
\EndProcedure
\end{algorithmic}
It first precomputes and stores each leaf’s node-to-root path $\mathcal{P}[u_i]$ and its candidate target nodes $\Gamma[u_i]$ (lines~3--8, procedure \textsc{RefinedTargetNodes}). 
For each candidate $\hat{u}_i \in \Gamma[u_k]$, the procedure builds the local extension $\phi_k$ from $u_k$ along its node-to-root path $P_k$ \textcolor{black}{to the node-to-root path $P_i'$ from $\hat{u}_i$ in $\Mg$} (lines~11--16, procedure \textsc{RefinedTargetNodes}). 
It then checks compatibility of $\phi_k$ with each mapping $\phi_\ell$ corresponding to every other leaf $u_{\ell}$ by (i) retrieving the precomputed path $P_{\ell}$, 
the LCA $v=\textsc{LCAs}[(u_k,u_{\ell})]$, and the $2\varepsilon$-pair $(u_r,u_s)=\textsc{$2\varepsilon$Pairs}[(u_k,u_{\ell})]$ \textcolor{black}{as described in} Theorem~\ref{thm:ancestor-shift-condition}  (lines~20--22, procedure \textsc{RefinedTargetNodes}), 
and (ii) searching for at least one candidate $\hat{u}_{\ell} \in \Gamma[u_{\ell}]$ whose extension $\phi_{\ell}$ agrees with $\phi_k$ 
at the LCA (i.e., $\phi_k[v]=\phi_{\ell}[v]$) while distinguishing the images of $2\varepsilon$-pair (i.e., $\phi_k[u_r] \ne \phi_{\ell}[u_s]$) \textcolor{black}{(lines~23--34, procedure \textsc{RefinedTargetNodes})}. 
\textcolor{black}{Note that if $(u_r, u_s)=\texttt{Null}$ and $\phi_k[v]=\phi_{\ell}[v]$, then $\phi_\ell$ is compatible (lines 26--27, procedure \textsc{RefinedTargetNodes}).}
If no such compatible $\phi_{\ell}$ exists for some $u_{\ell}$, then the candidate $\hat{u}_i$ is rejected (lines~35--38, procedure \textsc{RefinedTargetNodes}); 
otherwise, it is accepted and added to $L^{u_k}$ (lines~40--42, procedure \textsc{RefinedTargetNodes}). 
After testing all candidates, the procedure returns the refined list $L^{u_k}$ (line~44, procedure \textsc{RefinedTargetNodes}).


If the maximum size of the refined list \( \kappa := \max_{u_i \in L_f} |L^{u_i}| \leq \eta_g \), then the number of mappings considered reduces from \( \eta_g^{\eta_f} \) to \( \kappa^{\eta_f} \). If \( L^{u_k} = \emptyset \) for \textcolor{black}{some} \( u_k \), no \( \varepsilon \)-good map exists between $M_f$ and $M_g$. Hence \( M_f \) and \( M_g \) are not \( \varepsilon \)-interleaved for the selected $\varepsilon$; the algorithm then increases \( \varepsilon \) in the binary search. Having introduced this refinement strategy, we next establish the correctness of the algorithm.
\subsection{Correctness}

 In this section, we provide formal justification for the validity of each step of the algorithm and proves that the algorithm indeed computes the exact interleaving distance $\varepsilon^* = d_I(M_f,M_g)$ between merge trees $M_f$ and $M_g$. The theoretical results introduced earlier forms the foundation of the following Theorem.

\begin{theorem}[\textbf{Correctness}]
Given two merge trees \( M_f \) and \( M_g \), the Algorithm~\ref{alg:main}   computes the exact interleaving distance \( \varepsilon^* = d_I(M_f, M_g) \) between $M_f$ and $M_g$. 
\end{theorem}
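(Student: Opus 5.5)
Our plan is to reduce correctness of Algorithm~\ref{alg:main} to correctness of the subroutine \textsc{Is-$\varepsilon$-Interleaved}: for each fixed $\varepsilon\in\Pi$ it returns \texttt{True} if and only if an $\varepsilon$-good map $M_f\to M_g$ exists. Given this, Lemma~\ref{lemma:candidate values} places $\varepsilon^*$ in $\Pi$, and Lemma~\ref{lemma: monotonicity} makes the predicate ``$\varepsilon$-interleaved'' monotone along the sorted list $\Pi$. The standard binary-search invariant then applies: every index $>r$ is a \texttt{True} value and every index $<l$ is a \texttt{False} value. At termination the last recorded $\varepsilon^*$ is therefore the smallest \texttt{True} entry of $\Pi$, which is $d_I(M_f,M_g)$. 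Here we use the equivalence from \cite{touli2022fpt} between $\varepsilon$-good maps and $\varepsilon$-compatible pairs, and we also check that the infimum is attained.

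\textbf{Soundness.} Suppose the subroutine returns \texttt{True} for some choice $\phi$ on $L_f$. Then \textsc{ConstructMap} did not return \texttt{Null}, so every pair of leaf extensions agrees at the corresponding LCA. Theorem~\ref{theorem:continuity} then gives a well-defined continuous $\varphi:\Mf\to\Mg$, and the Range-Shift property holds by construction. Because the \textsc{Is-$\varepsilon$-Good} test passed, Theorem~\ref{thm:ancestor-shift-condition} yields the Ancestor-Shift property, and Lemma~\ref{lemma:ancestor-closeness} (checked on the leaves $L_g\setminus\Img(\varphi)$) yields Ancestor-Closeness. Thus $\varphi$ is $\varepsilon$-good. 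The extension and augmentation steps do not change topology: they insert degree-two nodes and a root segment that stands in for the infinite ray. Hence an $\varepsilon$-good map between the augmented trees is one between the extended merge trees.

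\textbf{Completeness.} Suppose an $\varepsilon$-good map $\alpha$ exists. Each leaf $u_i\in L_f$ satisfies $\g(\alpha(u_i))=\f(u_i)+\varepsilon$, and that value is a level of $\mathcal{L}^{(\varepsilon)}_2$, so $\alpha(u_i)$ is a node of $\Mg$ lying in $\Gamma[u_i]$. The enumeration therefore reaches the tuple $\phi=\alpha|_{L_f}$. Lemma~\ref{Lemma: ancestor-preservation} shows $\alpha$ sends $P_i$ monotonically onto the root path of $\alpha(u_i)$. By uniqueness of the Range-Shift point on that path, $\alpha|_{P_i}=\phi_i$. Since the $P_i$ cover $\Mf$, we get $\varphi=\alpha$, and the necessity directions of Theorem~\ref{theorem:continuity}, Theorem~\ref{thm:ancestor-shift-condition} and Definition~\ref{def:epsilon-goodmap}(c) show that every check passes.

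\textbf{Early exit and refinement.} We must also show the early rejection on minimum leaves is safe. The global-minimum leaf $u_g$ of $\Mg$ has nearest image ancestor at height at least $\f(u_f)+\varepsilon$, so its gap is at least $\f(u_f)+\varepsilon-\g(u_g)$. A map $\alpha$ exists only if this gap is at most $2\varepsilon$, so we need to match that against the stated test $\g(u_g)-\f(u_f)>\varepsilon$; this is a point to verify carefully, since the two inequalities are not obviously the same. For the refinement, $\alpha(u_k)$ survives \textsc{RefinedTargetNodes}, because $\alpha(u_\ell)$ witnesses compatibility for every $\ell$.

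The main obstacle is the completeness step identifying $\alpha|_{P_i}$ with $\phi_i$ at the node level. It needs every node on $P_i$ to map to a node on $P_i'$, so that the hash-map construction matches the continuous one. This follows from the one-to-one level correspondence between $\mathcal{L}^{(\varepsilon)}_1$ and $\mathcal{L}^{(\varepsilon)}_2$ together with linear interpolation on edges, and it should be made explicit.
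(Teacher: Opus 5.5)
Your proposal follows the paper's route: Lemma~\ref{lemma:candidate values} and Lemma~\ref{lemma: monotonicity} justify the binary search, and Theorem~\ref{theorem:continuity}, Theorem~\ref{thm:ancestor-shift-condition} and Lemma~\ref{lemma:ancestor-closeness} justify the test at a fixed $\varepsilon$. It is more careful than the paper's proof, which asserts that ``all possible mappings'' are enumerated but never shows that every $\varepsilon$-good map is one of them. Your identification $\alpha|_{P_i}=\phi_i$ supplies that. Two points you leave open still need to be closed.

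\textbf{The early rejection.} This is justified by Range-Shift, not by Ancestor-Closeness. Your closeness computation gives only the opposite bound $\tilde f(u_f)-\tilde g(u_g)\le\varepsilon$, which is why the two inequalities did not match. Since $u_f$ is the global minimum of $\widetilde M_f$, any $\varepsilon$-good map must send it to a point of height $\tilde f(u_f)+\varepsilon$. Such a point exists only if $\tilde g(u_g)\le\tilde f(u_f)+\varepsilon$. So when the test $\tilde g(u_g)-\tilde f(u_f)>\varepsilon$ fires, no $\varepsilon$-good map exists; indeed $\Gamma[u_f]=\emptyset$, so the enumeration would return \texttt{False} anyway. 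In your completeness argument, the existence of $\alpha(u_f)$ already guarantees that the test does not fire.

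\textbf{Attainment of the infimum.} This is genuinely required. The algorithm outputs $\min\{\varepsilon\in\Pi:\ \text{an }\varepsilon\text{-good map exists}\}$. That equals $d_I$ only if the trees are $d_I$-interleaved, whereas Lemma~\ref{lemma:candidate values} as stated gives only $d_I\in\Pi$. The paper uses attainment without comment. One way to prove it:
\begin{enumerate}[(1)]
\item Take $\varepsilon_n\downarrow d_I$ with $\varepsilon_n$-compatible pairs $(\alpha_n,\beta_n)$.
\item Any continuous map with $\tilde g\circ\alpha=\tilde f+\varepsilon$ is $1$-Lipschitz for the intrinsic height metric. The reason is that $\alpha(\mathrm{LCA}(x,y))$ is a common ancestor of $\alpha(x)$ and $\alpha(y)$ at height $\tilde f(\mathrm{LCA}(x,y))+\varepsilon$.
\item Arzel\`a--Ascoli on compact truncations then yields uniformly convergent subsequences.
\item Conditions (i)--(iv) of Definition~\ref{def:epsilon-comaptible} are equalities, so they pass to the limit. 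Note that $i^{2\varepsilon_n}\to i^{2d_I}$ uniformly and $\beta_n\circ\alpha_n\to\beta\circ\alpha$ by equicontinuity.
\item Convert the limiting pair to a $d_I$-good map via the equivalence of \cite{touli2022fpt}.
\end{enumerate}
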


\begin{proof}
We have a finite sorted set \( \Pi \)  of candidate values for \( \varepsilon^* \)  as in Lemma \ref{lemma:candidate values}.  By Lemma~\ref{lemma: monotonicity}, if \( M_f \) and \( M_g \) are \( \varepsilon \)-interleaved, then they are also \( \varepsilon' \)-interleaved for any \( \varepsilon' \geq \varepsilon \), which ensures the correctness of binary search algorithm. At each step of the binary search, the procedure \textsc{Is-$\varepsilon$-Interleaved} verifies whether an \( \varepsilon \)-good map exists between $M_f$ and $M_g$. It does so by enumerating all possible \( \eta_g^{\eta_f} \) number of mappings \( \phi: L_f \rightarrow M_g \) for the leaf nodes in $L_f$. Then each map is extended to the entire augmented merge tree $\Mf$ using the path-based construction by ensuring the continuity using Theorem~\ref{theorem:continuity}. This extension $\varphi:\widetilde{M}_f \rightarrow \widetilde{M}_g$ satisfies the Range-Shift property by construction and we check Ancestor-Shift and Ancestor-Closeness properties using  Theorem~\ref{theorem:Condition (b)} and Lemma~\ref{lemma:ancestor-closeness} respectively. If \( \varphi \) satisfies all the properties for being \( \varepsilon \)-good, then the algorithm concludes that \( M_f \) and \( M_g \) are \( \varepsilon \)-interleaved. After completing all iterations of the binary search, the smallest such \( \varepsilon \in \Pi \) for which \( M_f \) and \( M_g \) are \( \varepsilon \)-interleaved is returned, denoted by \( \varepsilon^* \). Since all candidate values are considered, and for each candidate \( \varepsilon \) the algorithm iterates through all possible mappings to verify whether the properties of Definition~\ref{def:epsilon-goodmap} are satisfied, the algorithm is guaranteed to return the exact interleaving distance \( \varepsilon^* \) between \( M_f \) and \( M_g \).
\end{proof}

Once correctness has been established, we analyze the computational complexity \textcolor{black}{for} the algorithm. We examine the time required of each component of our algorithm in the following section.

\section{Complexity Analysis}
\label{sec:complexity}
Let $n$ denote the total number of nodes across the merge trees $M_f$ and $M_g$, with $\eta_f$ and $\eta_g$ their respective numbers of leaf nodes. In this setting, we construct maps 
either from $L_f$ to $\widetilde{M}_g$ or from $L_g$ to $\widetilde{M}_f$, 
depending on which yields the smaller count.  
For example, if $\eta_f = 10$ and $\eta_g = 100 = 10^2$, then the number of 
maps from $L_g$ to $\widetilde{M}_f$ is $\eta_f^{\eta_g} = 10^{100}$, while 
the number of maps from $L_f$ to $\widetilde{M}_g$ is 
$\eta_g^{\eta_f} = 10^{20}$, which is substantially smaller. 
In this case, we reduce the complexity by considering maps from $L_f$ to 
$\Mg$. Our algorithm consists of 
several key procedures, each contributing to the total runtime. The time complexity of each step in Section~\ref{Sec:algorithm} is described as follows.

\subsection{Complexity of the Procedure: {\sc GenerateCandidateValues }} The procedure \textsc{GenerateCandidateValues} computes the sorted list \( \Pi \) of candidate values for the interleaving distance. \textcolor{black}{Lines~13--23} calculate the function value differences between the nodes within individual merge trees, while \textcolor{black}{lines~7--12} handle differences across the two trees. Since each part yields \( O(n^2) \) combinations, the size of \( \Pi \) is \( O(n^2) \). Sorting \( \Pi \) requires \( O(n^2 \log n) \) time (\textcolor{black}{line~25}, procedure \textsc{GenerateCandidateValues}). Hence, \textsc{GenerateCandidateValues} runs in \( O(n^2 \log n) \) time.

\subsection{Complexity of the Procedure: {\sc ExtendMergeTree}}
\textcolor{black}{The procedure \textsc{ExtendMergeTree} runs in $O(1)$ time, as it only requires creating a new root node in either $M_f$ or $M_g$, 
depending on the function values of their original roots.
}  

\subsection{Complexity of the Procedure: {\sc AugmentMergeTree}}
 The procedure \textsc{AugmentMergeTree} iterates through all the nodes and all the edges of the merge tree $M_f$ or $M_g$, \textcolor{black}{and} it requires \( O(n^2) \) time (lines~7--8 and lines 19--20, procedure \textsc{AugmentMergeTree}).   
Hence, the overall time complexity of this procedure is $O(n^2)$.


\subsection{Complexity of the Procedure : {\sc ConstructMap}} 
\textcolor{black}{The procedure {\sc ConstructMap} first constructs the paths using \textsc{NodeToRootPath} which takes \( O(n) \) time for each path (lines~4--5, procedure \textsc{ConstructMap}), resulting in total time of  $O(n\eta_f)$ for all leaf-to-root paths in $\Mf$. 
Given a map \( \phi : L_f \rightarrow \widetilde{M}_g \), the procedure \textsc{ExtendMap} extends \( \phi \) along all leaf-to-root paths \( P_i \) from the leaf nodes \( u_i \in \widetilde{M}_f \), for \( i = 1, 2, \ldots, \eta_f \), by traversing each node on \( P_i \), which also requires \( O(n\eta_f) \) time (line~6, procedure~\textsc{ConstructMap}).  
For each of the \( \eta_f^2 \) pairs of maps \( (\phi_i, \phi_j) \), checking whether they agree on \( v = \textsc{LCAs}[(u_i, u_j)] \) takes \( O(1) \) time per pair (line~5, procedure~\textsc{IsWellDefined}), provided the dictionary \textsc{LCAs} has been constructed, resulting in \( O(\eta_f^2) \) total time. 
The procedure \textsc{Map} iterates through the nodes on all leaf-to-root paths, requiring \( O(n\eta_f) \) time. 
Therefore, constructing a single mapping \( \varphi : \widetilde{M}_f \rightarrow \widetilde{M}_g \) takes \( O(n\eta_f + \eta_f^2)\) total time.}

\subsection{Complexity of the Procedure: {\sc Is-$\varepsilon$-Good}}
The procedure \textsc{Is-$\varepsilon$-Good} verifies the Ancestor-Shift property 
by checking whether $\varphi(u_r) = \varphi(u_s)$ for each $2\varepsilon$-pair 
$(u_r, u_s) = \textsc{2$\varepsilon$Pairs}[(u_i, u_j)]$ corresponding to the leaf node pair $(u_i, u_j)$. 
This operation requires $O(1)$ time per pair (lines~4--5, procedure~\textsc{Is-$\varepsilon$-Good}), 
assuming that the dictionary \textsc{2$\varepsilon$Pairs} is precomputed. 
Consequently, verifying the Ancestor-Shift property for all pairs of leaf nodes takes $O(\eta_f^2)$ time in total.  
For the Ancestor-Closeness property, identifying the nearest ancestor 
$w^a \in \mathrm{Im}(\varphi)$ of a leaf node $w \in L_g \setminus \mathrm{Im}(\varphi)$ 
requires $O(n)$ time for a leaf-to-root search (line~3, procedure~\textsc{IsAncestorCloseness}). 
Therefore, checking this condition for all leaf nodes costs $O(n\eta_g)$ time. 
Hence, the overall time complexity of the procedure \textsc{Is-$\varepsilon$-Good} is $O(n\eta_g + \eta_f^2)$.

\subsection{Complexity of the Procedure: {\sc Is-$\varepsilon$-Interleaved}}  
The procedure \textsc{Is-$\varepsilon$-Interleaved} integrates four main components: 
\textsc{ExtendMergeTree}, \textsc{AugmentMergeTree}, \textsc{ConstructMap}, 
and \textsc{Is-$\varepsilon$-Good}. 
The time complexities of these procedures are as follows: 
\textsc{ExtendMergeTree} runs in $O(1)$ time, as it only involves creating a new root node in either $M_f$ or $M_g$; 
\textsc{AugmentMergeTree} requires $O(n^2)$ time; 
\textsc{ConstructMap} takes $O(n\eta_f + \eta_f^2)$ time; 
and \textsc{Is-$\varepsilon$-Good} has a time complexity of $O(n\eta_g + \eta_f^2)$. 

The procedure \textsc{ListOfLeafNodes} extracts all leaf nodes by traversing the entire tree, 
which requires at most $O(n)$ time \textcolor{black}{(lines~7--8, procedure~\textsc{Is-$\varepsilon$-Interleaved})}. 
Computing the node of $M_f$ or $M_g$ with minimum function value also takes $O(n)$ time (lines 9--10, procedure~\textsc{Is-$\varepsilon$-Interleaved}) . 
Therefore, lines~2--10 of \textsc{Is-$\varepsilon$-Interleaved} together contribute 
$O(1) + O(n^2) + O(n) + O(n) = O(n^2)$ time.

Next, the procedure \textsc{FindLCA} computes the least common ancestor ($\LCA$) of any pair of leaf nodes 
in $O(n)$ time in the worst case \textcolor{black}{(line~21, procedure~\textsc{Is-$\varepsilon$-Interleaved})}. 
Since there are $O(\eta_f^2)$ pairs of leaf nodes, computing all LCAs in $\widetilde{M}_f$ requires $O(n\eta_f^2)$ time. 
For each leaf pair $(u_i, u_j)$ with $v = \mathrm{LCA}(u_i, u_j)$, 
the procedure \textsc{Find$2\varepsilon$Pair}$(u_i, u_j)$ identifies the first pair of nodes $(u_r, u_s)$ 
for which the function value difference between $u_r$ and $v$ exceeds $2\varepsilon$, 
by traversing downward from $v$ along the leaf-to-root paths of $u_i$ and $u_j$ 
\textcolor{black}{(line~23, procedure~\textsc{Is-$\varepsilon$-Interleaved})}. 
This requires $O(n)$ time per pair, resulting in a total of $O(n\eta_f^2)$ time for all leaf pairs. 
Hence, lines~14--27 together take $O(n\eta_f^2)$ time.  

The procedure \textsc{TargetNodes} traverses all nodes to identify those lying at a specific level, 
requiring $O(n)$ time per call \textcolor{black}{(line~31, procedure~\textsc{Is-$\varepsilon$-Interleaved})}. 
For all $\eta_f$ leaf nodes, this step therefore costs $O(n\eta_f)$ time 
\textcolor{black}{(lines~30--32, procedure~\textsc{Is-$\varepsilon$-Interleaved})}.  

There are $\eta_g^{\eta_f}$ total possible mappings 
$\phi : L_f \to \widetilde{M}_g$ \textcolor{black}{(line~34, procedure~\textsc{Is-$\varepsilon$-Interleaved})}, 
as discussed in Step~3 of Section~\ref{Sec:algorithm}. 
Since constructing a single such mapping requires $O(\eta_f)$ time 
\textcolor{black}{(lines~37--39, procedure~\textsc{Is-$\varepsilon$-Interleaved})}, 
and verifying each map involves additional costs from \textsc{ConstructMap} and \textsc{Is-$\varepsilon$-Good}, 
the total complexity for lines~34--45 is  
\[
O\!\left(\eta_g^{\eta_f}(\eta_f + n\eta_f + \eta_f^2 + n\eta_g + \eta_f^2)\right)
= O\!\left(\eta_g^{\eta_f}(n\eta_f + n\eta_g)\right),
\]
since $\eta_f \le n$.  

Combining all steps, the total time complexity of \textsc{Is-$\varepsilon$-Interleaved} is  
\[
O(n^2) + O(n\eta_f^2) + O(n\eta_f) + O\!\left(\eta_g^{\eta_f}(n\eta_f + n\eta_g)\right).
\]
For $\eta_g \ge 2$, we have $n\eta_f^2 < \eta_g^{\eta_f}(n\eta_f + n\eta_g)$, 
implying $n\eta_f^2 = O\!\left(\eta_g^{\eta_f}(n\eta_f + n\eta_g)\right)$.  
Therefore, the overall time complexity of the procedure \textsc{Is-$\varepsilon$-Interleaved}, 
for each chosen value of $\varepsilon$, is  
\[
O\!\left(n^2 + \eta_g^{\eta_f}(n\eta_f + n\eta_g)\right).
\]
 
\paragraph{\textbf{Complexity of Algorithm~\ref{alg:main}:}} The procedure \textsc{GenerateCandidateValues} takes \( O(n^2 \log n) \) time (line~3, Algorithm~\ref{alg:main}).  
Line~9 of Algorithm~\ref{alg:main} takes 
\( O\!\left(n^2 + \eta_g^{\eta_f}(n\eta_f + n\eta_g) \right) \) time 
for each chosen value of \( \varepsilon \).  
Performing a binary search over \( O(n^2) \) candidate values requires \( O(\log n) \) time 
(lines~6--15, Algorithm~\ref{alg:main}).  
Therefore, adding all,  the overall time complexity of Algorithm~\ref{alg:main} is $O\!\left(n^2 \log n + \big(n^2 + \eta_g^{\eta_f}(n\eta_f + n\eta_g)  \big)\log n\right)
$, i.e.,
$O\!\left(n^2 \log n \;+\; \eta_g^{\eta_f}(\eta_f + \eta_g) \cdot n \log n\right)$.

\color{black}
\subsection{Complexity of the Procedure: {\sc  RefinedTargetNodes}}

The procedure {\sc  RefinedTargetNodes} refines the set of target nodes for a given leaf node $u_k \in \Mf$. It begins by constructing a path $\mathcal{P}[u]$ from leaf node $u$ to the root of $\Mf$ and computing the set $\Gamma[u]$ of target nodes for $u$, which requires $O(n)$ time (lines~3--8, procedure \textsc{RefinedTargetNodes}). Repeating this for all leaf nodes in $\Mf$, it requires $O(n\eta_f)$. For each target node $\hat{u}_i \in \Gamma[u_k]$, we then verify its validity using the procedure $\texttt{IsValidTarget}$ (line 14, procedure \textsc{RefinedTargetNodes}). For every other leaf node $u_{\ell} \ne u_k$ in $\Mf$, we check whether there exists a target node $\hat{u}_\ell \in \Gamma[u_\ell]$ such that the path pair $(P_i', P_{\ell}')$ satisfies both the well-definedness and Ancestor-Shift property, where $P_{\ell}'$ denotes \textcolor{black}{the} path from $\hat{u}_\ell$ to the root of $\Mg$. Since this requires iterating over all  $u_{\ell} \in L_f$ and their corresponding $\eta_g$ possible target nodes in $\Gamma[u_\ell]$, it cost $O(n\eta_f\eta_g)$ time (lines 17--34, procedure \textsc{RefinedTargetNodes}). As we iterate over all $\eta_g$ target nodes of $u_k$ (line 12, procedure \textsc{RefinedTargetNodes}), the total time complexity of the procedure \textsc{RefinedTargetNodes} is $O(n\eta_f\eta_g^2)$.

\paragraph{\textbf{Complexity of Algorithm~\ref{alg:main} (After Refinement):}} Since the merge tree $\Mf$ has $\eta_f$ leaf nodes, computing the refined target nodes for all leaf nodes  requires $O(n\eta_f^2\eta_g^2)$ time, as the procedure \textsc{TargetNodes} (line~31, \textsc{Is-$\varepsilon$-Interleaved}) is replaced by \textsc{RefinedTargetNodes}. Let $\kappa$ denote the maximum size of $L^{u_i}$ for $i \in {1,2,\dots,\eta_f}$. Consequently, the procedure \textsc{Is-$\varepsilon$-Interleaved} takes \textcolor{black}{$O(n^2 + n\eta_f^2\eta_g^2+ \kappa^{\eta_f}( \eta_f + \eta_g)\cdot n)$} for each chosen value of $\varepsilon$ where the term $\kappa^{\eta_f}$ represents the total number of maps to be explored.  
Therefore, the overall time complexity of Algorithm~\ref{alg:main} after refinement is  
\[
O( n^2\log n +n\eta_f^2\eta_g^2 + \kappa^{\eta_f}  (\eta_f+\eta_g) \cdot n\log n).
\]
\color{black}
\color{black}
While both the degree bound parameter \( \tau \) and the numbers of leaf nodes  \( \eta_f \) and $\eta_g$ are both structural parameters of a merge tree, they are fundamentally incomparable. In the following section, we discuss and analyze the roles of these two parameters, highlighting their differences and implications on algorithmic complexity.

\section{Discussion}
\label{sec:discussion}
It is important to note that the parameter \( \tau \), introduced in \cite{touli2022fpt}, depends on the choice of \( \varepsilon \), where \( \varepsilon \) denotes the candidate value for the interleaving distance between the merge trees \( M_f \) and \( M_g \).  
For a point \( u \in M_f \), the \( \varepsilon \)-ball is defined as the connected component of the \( \varepsilon \)-slab \( f^{-1}([f(u) - \varepsilon, f(u) + \varepsilon]) \) that contains \( u \), denoted by \( B_\varepsilon(u, M_f) \).  
The parameter \( \tau \) represents the maximum of the total degrees of the nodes contained in \( B_\varepsilon(u, M_f) \), taken over all points \( u \) in \( M_f \) and \( M_g \).  
In contrast, our parameters \( \eta_f \) and \( \eta_g \) correspond to the numbers of leaf nodes of \( M_f \) and \( M_g \), respectively, and remain independent of \( \varepsilon \).
\begin{figure}[h]
    \centering
    \includegraphics[width=0.9\linewidth]{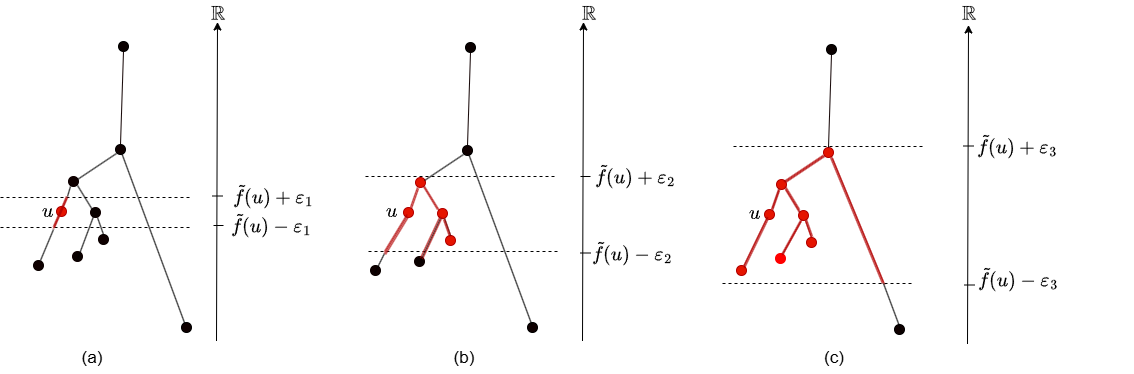}
    \caption{
        Illustration of the dependence of the parameter \( \tau \) for increasing values of $\varepsilon$ (\( \varepsilon_1 < \varepsilon_2 < \varepsilon_3 \)),  as shown in the corresponding figures (a), (b) and (c), respectively. Let \( u \) be a node of the given merge tree \( M_f \). 
        The red-colored portion represents  \( B_\varepsilon(u, M_f) \), while the black-colored portion lies outside  \( B_\varepsilon(u, M_f) \). 
        As \(\varepsilon \) increases (\( \varepsilon_1 < \varepsilon_2 < \varepsilon_3 \)), the number of nodes included in \( B_\varepsilon(u, M_f) \) grows as $2, 4, 7$. Consequently, the total degree of the nodes within \( B_\varepsilon(u, M_f) \) also increases as $2$, $9$ and $14$, respectively.
    }
    \label{fig:delta-growth}
\end{figure}

 As \( \varepsilon \) increases, the size of each \( \varepsilon \)-slab expands and thereby increasing the number of nodes contained in each \( \varepsilon \)-ball as illustrated in Figure~\ref{fig:delta-growth}. Consequently, the value of \( \tau \) increases.  In the extreme case where \( \varepsilon \) equals or exceeds the height of the tree, the \( \varepsilon \)-ball around any point \( u \in M_f \) includes all nodes in the tree.  
Hence, \( \tau \) equals the total degree of all nodes in \( M_f \), which is typically much larger than the numbers of leaf nodes \( \eta_f \) and \( \eta_g \).

 \begin{figure}[h]
     \centering
     \includegraphics[width=0.9\linewidth]{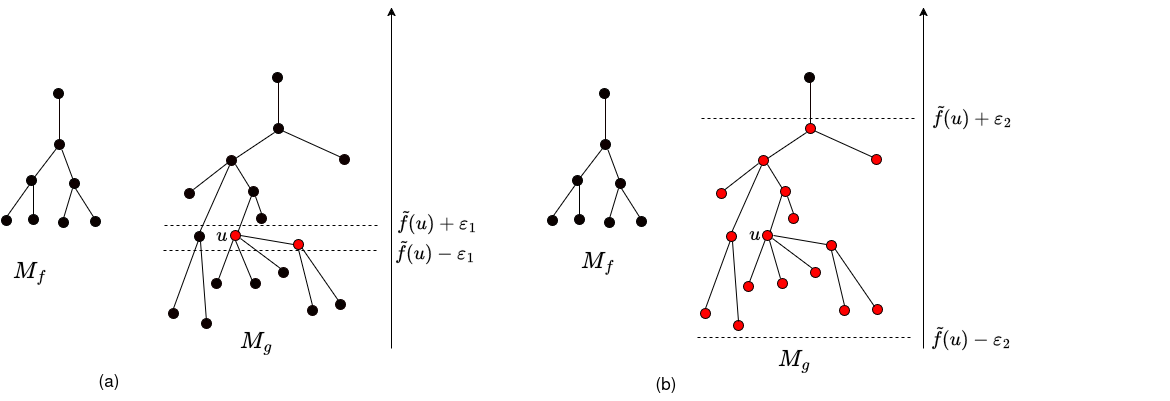}
     \caption{  Merge trees \( M_f \) and \( M_g \) with numbers of leaf nodes \( \eta_f = 4 \) and \( \eta_g = 10 \), respectively.  
        For a smaller value of \( \varepsilon = \varepsilon_1 \), we have \( \tau = 8 \) which is the total degree of the nodes colored in red in figure (a);  
        for a larger value \( \varepsilon = \varepsilon_2 \), \( \tau = 31 \) which is the total degree of the nodes colored in red in figure (b).}
     \label{fig:Eta-f-Eta-g}
 \end{figure}
 
 \color{black}
Consider the merge trees \( M_f \) and \( M_g \) shown in Figure~\ref{fig:Eta-f-Eta-g}.  
The total number of nodes across both trees is \( n = 25 \).  
For a smaller value \( \varepsilon = \varepsilon_1 \), we have \( \tau = 8 \).  
The time complexity of the algorithm in \cite{touli2022fpt} is given by
 \[
 O(2^{2\tau} (2\tau)^{2\tau+2} \cdot  n^2 \log^3 n) \approx 1.94\times10^{31}.
 \]
 For a larger value \( \varepsilon = \varepsilon_2 \), we have \( \tau = 31 \), yielding
  \[
 O(2^{2\tau} (2\tau)^{2\tau+2} \cdot  n^2 \log^3 n) \approx 1.49 \times 10^{138}.
 \]

In both cases, the numbers of leaf nodes in \( M_f \) and \( M_g \) remain unchanged (\( \eta_f = 4 \) and \( \eta_g = 10 \)) because these parameters are  independent of \( \varepsilon \).  
Since \( \eta_g^{\eta_f} \leq \eta_f^{\eta_g} \) (i.e., \( 10^4 \leq 4^{10} \)), we consider mappings from \( M_f \) to \( M_g \).  
For \( n = 25 \), \( \eta_f = 4 \), and \( \eta_g = 10 \), the complexity of our algorithm is
 \[
 O\!\left(n^2 \log n \;+\; \eta_g^{\eta_f}(\eta_f + \eta_g) \cdot n \log n\right) \approx 1.62 \times 10^7
 \]

which is significantly smaller than the complexity of the previous algorithm. In the refined version of our algorithm, the exponential term depends on \( \kappa^{\eta_f} \), where \( \kappa \) denotes the maximum size of the refined target node set for each leaf node in \( M_f \).  
For small values of \( \kappa \) (e.g., \( 2, 3, 4, 5 \)), this exponential dependence is substantially weaker than in the original term \( \eta_g^{\eta_f} \), resulting in a significant reduction in practical computational cost.  
Table~\ref{tab:kappa-values} summarizes the approximate total complexities of the refined algorithm, $O(n^2\log n + n\eta_f^2\eta_g^2 + \kappa^{\eta_f}(\eta_f + \eta_g) \cdot n\log n),$
for \( n = 25 \), \( \eta_f = 4 \), \( \eta_g = 10 \), and different values of \( \kappa \).

\begin{table}[h!]
\centering
\begin{tabular}{|c|c|}
\hline
$\kappa$ & Total Complexity \\
\hline
2 & $1.65 \times 10^{5}$ \\
3 & $5.43 \times 10^{5}$ \\
4 & $1.50 \times 10^{6}$ \\
5 & $3.67 \times 10^{6}$ \\
\hline
\end{tabular}

\caption{Approximate total complexity of the refined algorithm for various values of $\kappa$
\label{tab:kappa-values}}
\end{table}

This analysis clearly demonstrates that the parameter \( \tau \) grows rapidly with \( \varepsilon \), leading to prohibitively large computational costs even for moderate values of $\tau$.  
In contrast, our parameters \( \eta_f \) and \( \eta_g \) remain structurally bounded and stable across any values of \( \varepsilon \), which provides strong motivation for adopting the proposed parameterization in both theoretical and practical algorithmic settings.
 
 \color{black}
\section{Conclusion}
\label{sec:conclusion}
We presented a novel fixed-parameter tractable (FPT) algorithm for computing the exact interleaving distance between two merge trees. Our approach addresses the dependence of the degree-bound parameter $\tau$ on the choice of candidate values $\varepsilon$  and reparameterizes the problem in terms of the numbers of leaf nodes $\eta_f$ and $\eta_g$ of the merge trees $M_f$ and $M_g$, respectively,  which are independent of $\varepsilon$.  Building on the framework of $\varepsilon$-good maps, our method begins by defining a map on the leaf nodes of the first tree and then extending it along the leaf-to-root paths to construct a map between the merge trees. We then verify the existence of an $\varepsilon$-good map. This yields a substantial reduction in the polynomial and exponential component of the \textcolor{black}{complexity}, making our algorithm significantly more efficient. By parameterizing the problem using $\eta_f$ and $\eta_g$, we obtain a stable complexity bound that improves upon previous approaches.

Our theoretical contributions include a new characterization of well-definedness and continuity for the constructed map, grounded in the Gluing Lemma. We also provide efficient procedures to verify the properties required for a map to be $\varepsilon$-good. The complexity comparisons suggest that our method outperforms prior work, since $\eta_f$ and $\eta_g$ are independent of the choice of $\varepsilon$. Our work bridges the gap between theoretical soundness and computational practicality, enabling the computation of topologically meaningful distance between \textcolor{black}{two} merge trees. Future directions include extending this framework to Reeb graphs and adapting the method to analyze time-varying scalar fields.

\paragraph*{Acknowledgment.}
The first two authors would like to thank the MINRO Center (Machine Intelligence and Robotics Center) at  International Institute of Information Technology-Bangalore (IIITB), for funding this project. Furthermore, the third author has been supported in part by JSPS KAKENHI Grant Numbers JP22K18267, JP23H05437.

\bibliographystyle{abbrv}
\bibliography{references/Distance-Measures, references/morozov}
\end{document}